\documentclass[runningheads]{llncs} 
\usepackage{thm-restate}

\makeatletter
\RequirePackage[bookmarks,unicode,colorlinks=true]{hyperref}%
   \def\@citecolor{blue}%
   \def\@urlcolor{blue}%
   \def\@linkcolor{blue}%

\def\orcidID#1{\smash{\href{http://orcid.org/#1}{\protect\raisebox{-1.25pt}{\protect\includegraphics{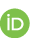}}}}}
\makeatother 

\usepackage{amsmath,amssymb}
\usepackage{stmaryrd}
\usepackage{todonotes}
\usepackage{cite}
\usepackage{tcolorbox}
\usepackage{xspace}
\usepackage{pifont}
\usepackage{mathpartir}
\usepackage[capitalise]{cleveref}
\usepackage{thm-restate}
\usepackage{url}

\usepackage[inline]{enumitem}
\usepackage{geometry}


\definecolor{mycolor}{rgb}{0.122, 0.435, 0.698}
\definecolor{light-green}{rgb}{0.0,0.5, 0.0} 
\definecolor{dark-green}{rgb}{0.0, 0.42, 0.24}

\newtcbox{\mybox}{on line,
  colframe=mycolor,colback=mycolor!10!white,
  boxrule=0.5pt,arc=4pt,boxsep=0pt,left=3pt,right=3pt,top=3pt,bottom=3pt}

\newcommand{\OK}{{\color{light-green}\text{\ding{51}}}\xspace}
\newcommand{\FAIL}{{\color{red}\text{ \ding{55}}}\xspace}
\newcommand{\Seq}{\mathcal{S}}

\newcommand{\true}{\ensuremath{\mathsf{true}}}

\newcommand{\mathsc}[1]{{\normalfont\textsc{#1}}}
\newcommand{\lres}{\Lambda}
\newcommand{\sort}[1]{\textsc{#1}}
\newcommand{\code}[1]{\texttt{#1}}

\newcommand{\eqdef}{\mathrel{\widehat{=}}}
\newcommand{\nat}{\mathbb{N}}

\def\ndres{\mathbin{\rlap{\raise.05ex\hbox{$-$}}{\lhd}}}

 \makeatletter
 \newcommand{\oless}{\mathbin{\mathpalette\make@circled <}}
 \newcommand{\make@circled}[2]{%
   \ooalign{$\m@th#1\smallbigcirc{#1}$\cr\hidewidth$\m@th#1#2$\hidewidth\cr}%
 }
 \newcommand{\smallbigcirc}[1]{%
   \vcenter{\hbox{\scalebox{0.77778}{$\m@th#1\bigcirc$}}}%
 }
 \makeatother

\newcommand{\loc}{x}

\newcommand{\val}{\kappa}

\newcommand{\Exp}{\sort{Exp}}
\newcommand{\BExp}{\sort{BExp}}
\newcommand{\Reg}{\sort{Reg}}
\newcommand{\Val}{\sort{Val}}
\newcommand{\Loc}{\sort{Loc}}
\newcommand{\xxTId}{\sort{Tid}}

\newcommand{\TId}{\xxTId}

\newcommand{\tid}{\ensuremath{\tau}}

\newcommand{\expeval}[2]{\llbracket #1\rrbracket_{#2}}

\newcommand{\skipst}{\kwc{skip}}
\newcommand{\kwc}[1]{\textnormal{\textbf{#1}}}

\newcommand{\load}[2]{#1 := \kwc{load}\,#2}
\newcommand{\store}[2]{\kwc{store}\,\,#1\,\, #2}

\newcommand{\new}[1]{#1}

\newcommand{\vwnew}{v_{\mathit{wNew}}}
 \newcommand{\vwold}{v_{\mathit{wOld}}}
 \newcommand{\vread}{v_{\mathit{read}}}

\newcommand{\imp}{\Rightarrow}

\newcommand{\sdef}{\mathrel{\widehat{=}}}
\newcommand{\lfun}{\ell} 
\newcommand{\Act}{\mathit{Act}}
\newcommand{\TS}{\mathbb{T}}
 
\newcommand{\sem}[1]{\llbracket #1 \rrbracket}

\newcommand{\linefill}{\cleaders\hbox{$\smash{\mkern-2mu\mathord-\mkern-2mu}$}\hfill\vphantom{\lower1pt\hbox{$\rightarrow$}}}  
  
\newcommand{\transi}[2]{\mathrel{\lower1pt\hbox{$\mathrel-_{\vphantom{#2}}\mkern-8mu\stackrel{#1}{\linefill_{\vphantom{#2}}}\mkern-11mu\rightarrow_{#2}$}}}
\newcommand{\trans}[1]{\transi{#1}{{}}}

\newcommand{\causal}{\twoheadrightarrow}

\newcommand{\rdflow}[1]{\mathbin{{\color{red} \stackrel{#1}{\causal}}}}

\newcommand{\conflict}{\#}
 
\newcommand{\cE}{{\cal E}}

\newcommand{\conf}{\mathit{Conf}}
\newcommand{\addc}{\mathit{rstr}}
 
\newcommand{\reg}{a}
\newcommand{\rega}{b}

\usepackage[inference]{semantic}

\usepackage{tikz}
\usetikzlibrary{arrows,snakes,backgrounds,automata,shapes,positioning,chains,calc}
\usetikzlibrary{arrows}

\tikzset{
    cf/.style={decorate, decoration={
                            zigzag,
                            segment length=4,
                            amplitude=.9
                          },
                >=stealth,thick,black!20!purple},
    cs/.style={dashed,->,>=stealth,thick,black!50!green} 
 }

\newcommand{\ffkw}{\mathsf{ff}} 
\newcommand{\prkw}{\mathsf{rd}} 
\newcommand{\inikw}{\mathsf{ini}} 
\newcommand{\testkw}{\mathsf{tst}}

\newcommand{\priorFnc}{\mathit{prFnc}}
\newcommand{\priorBar}{\mathit{prBar}}
\newcommand{\priorTest}{\mathit{prTst}}
\newcommand{\bexp}{\beta}

 \newcommand{\prm}[3]{\prkw_{#1}(#2,#3)}
  \newcommand{\prmts}[4]{\prkw_{#1}^{#2}(#3,#4)}
  
  \newcommand{\Ini}{\mathsf{Ini}}
 \newcommand{\ff}[3]{\mathsf{ff}_{#1}(#2,#3)}
 \newcommand{\ffts}[4]{\mathsf{ff}_{#1}^{#2}(#3,#4)}
 
 \newcommand{\barrx}[2]{\mathsf{bar}(#1,#2)} 
 \newcommand{\barrexp}[2]{\mathsf{bar}(#1,#2)} 
  
 \newcommand{\fence}{\mathsf{fnc}}
 \newcommand{\test}[2]{\mathsf{tst}_{#1}(#2)} 

 \newcommand{\lst}[2]{\mathit{lst}(#1,#2)}

 \renewcommand{\Pr}{\mathsf{Rd}} 
 \newcommand{\Ff}{\mathsf{Ff}}

\newcommand{\cat}{\mathbin{+\!\!\!+}}

\newcommand{\rv}{{\it rv}}
\newcommand{\expr}{\eta}
\newcommand{\assume}{\kwc{asm}}
\newcommand{\inarr}[1]{\begin{array}{@{}c@{}}#1\end{array}}
\newcommand{\inarrL}[1]{\begin{array}{@{}l@{}}#1\end{array}}

 
\title{Reasoning about Promises in Weak Memory Models with Event Structures (Extended Version)\thanks{Wehrheim and Bargmann are supported by DFG-WE2290/14-1. Dongol is supported by EPSRC grants EP/V038915/1, EP/R032556/1, EP/R025134/2, VeTSS and ARC Discovery Grant DP190102142.   }}
\author{Heike Wehrheim\orcidID{0000-0002-2385-7512}\inst{1} \and Lara Bargmann\inst{1} \and Brijesh Dongol\orcidID{0000-0003-0446-3507}\inst{2}}
\institute{University of Oldenburg, Oldenburg, Germany \and  University of Surrey, Guildford, UK}

\begin{document}
\maketitle

\begin{abstract}
  Modern processors such as ARMv8 and RISC-V allow executions in which
  independent instructions within a process may be reordered. To cope
  with such phenomena, so called {\em promising} semantics have been
  developed, which permit threads to read values that have not yet
  been written. Each promise is a speculative update that is later
  validated (fulfilled) by an actual write.  Promising semantics are
  operational, providing a pathway for developing proof calculi. In
  this paper, we develop an incorrectness-style logic, resulting in a
  framework for reasoning about state reachability. Like incorrectness
  logic, our assertions are
  \emph{underapproximating}, 
  since the set of all
  valid promises are not known at the start of execution. Our logic
  uses {\em event structures} as assertions  to compactly represent the
  ordering among events such as promised and fulfilled writes. We
  prove soundness and completeness of our proof calculus and
  demonstrate its applicability by proving reachability properties of
  standard weak memory litmus tests.
\end{abstract}

\keywords{Weak memory models, promises, event structures, incorrectness logic.} 

\section{Introduction} 


In recent years, numerous works have looked into semantics for
weak memory models for various hardware architectures or languages, e.g.~for
x86-TSO~\cite{DBLP:journals/cacm/SewellSONM10},
C11~\cite{DBLP:conf/popl/BattyOSSW11,DBLP:conf/oopsla/NienhuisMS16},
Power~\cite{DBLP:conf/pldi/SarkarSAMW11} or
ARM~\cite{DBLP:conf/popl/FlurGPSSMDS16}. Such semantics typically can
be classified as either being declarative (aka axiomatic) or
operational. Operational semantics furthermore can be divided into
those following a microarchitectural style (providing formalizations
of the actual hardware architecture) and those trying to abstract from
architectures. Most notably, {\em view-based}
semantics~\cite{DBLP:conf/ecoop/KaiserDDLV17,DBLP:conf/ppopp/DohertyDWD19,PultePKLH19}
avoid modelling specific hardware components and instead define the
semantics in terms of {\em views} of thread on the shared state. {\em
  Promises}~\cite{KangHLVD17,DBLP:conf/pldi/LeeCPCHLV20} are employed
in operational semantics as a way of capturing out-of-order writes
while still executing operations in thread order.  A promise (w.r.t.~a
value $\val$ and a shared location $x$) of a thread $\tau$ states that $\tau$
will eventually write value $\val$ onto location $x$. All promised writes
then need to be {\em fulfilled} (i.e., justified) in the future of a program
run, but other threads can read from promises before they are
fulfilled.


Our
interest here is the development and use of
Hoare-style~\cite{DBLP:journals/cacm/Hoare69} structural proof calculi  
(and their extensions to concurrency by Owicki and
Gries~\cite{DBLP:journals/acta/OwickiG76}) for weak memory models. 
Owicki-Gries-like proof calculi have been proposed by a number of
researchers~\cite{DBLP:conf/icalp/LahavV15,DalvandiDDW19,DBLP:journals/corr/abs-2004-02983,DBLP:conf/fm/WrightBD21},
and have also recently been given for non-volatile
memory~\cite{DBLP:journals/pacmpl/RaadLV20,esop2022}. Svendsen et
al.~\cite{DBLP:conf/esop/SvendsenPDLV18} have developed a separation
logic for promises for the C11 memory model. 
Wright et al.~\cite{DBLP:conf/fm/WrightBD21} have developed an Owicki-Gries proof system for out-of-order writes (as allowed by promises), but rely on pre-processing via the denotational MRD framework~\cite{DBLP:conf/esop/PaviottiCPWOB20}.

All of these proposals follow Hoare's principle of providing {\em safety}
proofs. In particular, a Hoare triple
${\color{light-green} \big\{p \big\}} S {\color{light-green} \big\{
  q\big\} } $ describes the fact that an execution of program $S$
starting in a state satisfying $p$ is either non-terminating, or
terminates in a state satisfying $q$ (over-approximating the final
states). However, for weak memory models, we often want to prove
{\em reachability}, i.e.~under-approximate the set of final states, like in
the recent proposal of O'Hearn's incorrectness
logic~\cite{OHearn20}. Here, a triple
${\color{mycolor} \big[ p \big]} S {\color{mycolor}\big[ q \big] }$
describes the possibility of program $S$ reaching all states
satisfying $q$ when started in a state satisfying $p$.  A verification
technique supporting these {\em reachability} triples enables one to
reason about 
executions that deviate from the expected sequentially consistent
behaviour of concurrent programs.

\smallskip\noindent \textit{Contributions.}
In this paper, we present a reachability proof calculus for concurrent
programs where the semantics of the weak memory model is based on
promises. The specific challenges therein lay in (i) capturing the
meaning of promises as writes which will only happen in the future but
can nevertheless already be read from, and (ii) appropriately
describing the required ordering (and concurrency) between  
promises and fulfills as fixed by the concurrent program under
consideration. We address these challenges via the following contributions. 
\begin{enumerate*}[label=(\arabic*)]
\item We develop a program logic based on assertions which are (flow) {\em event
  structures}~\cite{Winskel88,BoudolC94,GlabbeekG04}, employing
parallel composition of event structure and synchronization as a means
of determining whether all promises read from have eventually been
fulfilled.
\item We extend the theory of flow event structures with the notion of
  a flow label to capture the behaviours observed in weak memory
  models.
  \item We develop the first {\em compositional} proof rule for a
  concurrent reachability (incorrectness) logic.
\item We prove soundness and completeness of this novel event-structure based proof calculus.
\item Finally, we demonstrate its applicability  on a number of litmus tests. 
\end{enumerate*}



\smallskip\noindent\textit{Overview.} In \cref{sec:motivating-example}, we provide a
concrete overview via a motivating example and in
\cref{sec:weak-memory-semant}, we present the memory model that we
use. Our model is a simplified (strengthened) version of the
ARMv8/RISC-V semantics of Pulte et al~\cite{PultePKLH19}. In
\cref{sec:event-structures}, we present an extended theory for event
structures (specifically an extension of flow event structures) that has been designed to enable reasoning about relaxed memory models. We describe our reasoning methodology and provide examples verifying common litmus tests in \cref{sec:reasoning}.

\section{Motivating Examples}
\label{sec:motivating-example}

  Consider the program in \cref{fig:load-buffering}, which describes
  the load buffering litmus test. Thread 1 (similarly thread 2) loads
  the value of $y$ (sim. $x$) into register $\reg$ (sim. $\rega$),
  then updates $x$ (sim. $y$) to $1$. Since there are no dependencies
  between lines 1 and 2, and similarly between lines~3 and 4,
  architectures such as ARMv8 and RISC-V allow the stores in both
  threads to be reordered with the loads. Thus the program allows the
  final outcome {\color{light-green}$\reg = 1 \wedge \rega =
  1$}.  
  
This phenomenon is captured by promising semantics by allowing each thread to ``promise'' their respective stores, then later fulfilling them. In the meantime, other threads may read from promised writes.  
Our assertions within a thread reflect this semantics via assertions {\color{mycolor} $\cE$} which are {\em flow event structures}~\cite{Winskel88}. The events and their partial order reflect program executions, and in particular describe the various {\em views} which threads have on shared state. 

\begin{figure}[t]
    \centering \footnotesize
    \begin{minipage}[b]{0.49\columnwidth}
    \scalebox{0.85}{
    \begin{tabular}[b]{@{}c@{}} 
   
  $\begin{array}{l@{}||@{}l}
    \begin{array}[t]{l}
     \textbf{Thread } 1 
     \\
     {\color{mycolor} \big[\ \inikw\ \big] } \\ 
     1: \load{\reg}{y} ;
     \\
    {\color{mycolor} \big[\ \inikw \causal  \prm{2}{y}{1} \causal \barrx{\reg}{y}\ \big] } \\
     2: \store{x}{1} ;
     \\ 
   {\color{mycolor} \left[  \raisebox{0.8cm}{\begin{tikzpicture}[anchor=north,baseline,node distance=.4cm]
     \node (0) {$\inikw$ };
     \node [right = of 0] (1) {$\prm{2}{y}{1} \causal \barrx{\reg}{y}$};
     \node [below = of  1] (2) {$\ff{1}{x}{1}$};
     \path [->>, thick] (0) edge (1); 
     \path [->>, thick] (0) edge (2.west); 
  \end{tikzpicture}} \right]  }
     \end{array}
    & 
    \begin{array}[t]{l}
     \textbf{Thread } 2 
     \\
    {\color{mycolor} \big[ \dots \big] } \\
     3: \load{\rega}{x} ;
     \\
     {\color{mycolor} \big[ 
     \dots \big] }
    \\
    4: \store{y}{1} 
     \\
   {\color{mycolor} \big[ \dots \big]  } 
\end{array}
   \end{array}$ \\
   \qquad \qquad ${\color{mycolor} \left[  \begin{tikzpicture}[anchor=south,baseline,node distance=.6cm]
     \node (0) {$\inikw$ };
     \node[right =of 0] (1) {$\ff{1}{x}{1} \causal \barrx{\rega}{x}$};
     \node (2) [below of=1] {$\ff{2}{y}{1} \causal \barrx{\reg}{y}$};
     \path [->>,thick] (0) edge (1); 
     \path [->>,thick] (0) edge (2.west); 
   \end{tikzpicture}  \right] }$  \\
   \qquad \qquad \qquad \qquad ${\color{light-green} \big(a=1 \wedge b=1\big) \OK } $
    \end{tabular}}
    \caption{Reachability for load buffering}
    \label{fig:load-buffering}
    \end{minipage}    
    \begin{minipage}[b]{0.49\columnwidth}
    \scalebox{0.85}{
    \begin{tabular}[b]{@{}c@{}} 
   
  $\begin{array}{l@{}||@{}l}
    \begin{array}[t]{l}
     \textbf{Thread } 1 
     \\
     {\color{mycolor} \big[\ \inikw\ \big] } \\ 
     1: \load{\reg}{y} ;
     \\
    {\color{mycolor} \left[\ \inikw \causal  \prm{2}{y}{1} \causal \barrx{\reg}{y}\ \right] } \\
     2: \kwc{dmb} ;
     \\ 
   {\color{mycolor} \left[  \begin{tikzpicture}[anchor=south,baseline,node distance=.6cm]
     \node (1) {$\inikw \causal \prm{2}{y}{1}$} ;  
     \node (12) [right =0.4cm of 1] {$\barrx{\reg}{y}$};
     \node (2) [below of=1] {$\fence_1$};
     \path[->>,thick] (12) edge (2);
     \path[->>,thick] (1) edge (12);   
   \end{tikzpicture}  \right]  }  
   \\
   3: \store{x}{1} ;
     \\
   {\color{mycolor} \left[ \!\! \begin{tikzpicture}[anchor=south,baseline,node distance=.6cm]
     \node (1) {$\inikw \causal \prm{2}{y}{1}$} ;  
     \node (12) [right =0.4cm of 1] {$\barrx{\reg}{y}$};
     \node (2) [below of=1] {$\fence_1$};
     \node (22) [right =0.4cm of 2] {$\ff{1}{x}{1}$};
     \path[->>,thick] (12) edge (2.north east);
     \path[->>,thick] (1) edge (12);   
     \path[->>,thick] (2) edge (22);   
   \end{tikzpicture}  \right]  }
     \end{array}
    & 
    \begin{array}[t]{l}
     \textbf{Thread } 2 
     \\
    {\color{mycolor} \big[ \dots \big] } \\
     4: \load{\rega}{x} ;
     \\
    {\color{mycolor} \big[ \dots \big] } \\
     5: \kwc{dmb} ; \\
     {\color{mycolor} \big[ \dots \big] }
    \\
    6: \store{y}{1} 
     \\
      
   {\color{mycolor} \big[ 
    \dots
     \big]  } 
     \end{array}
   \end{array}$ \\
   \qquad\qquad\qquad \quad \ ${\color{red} \big[\  \cE \ \big] }$ 
     \\
   \qquad\qquad\qquad \qquad ${\color{red}  \big( a=1 \wedge b=1\big) \FAIL    } $
    \end{tabular} }
    \caption{Load buffering with barriers}
    \label{fig:load-buffering-fnc}
  \end{minipage}
  \end{figure}

The proof outlines (i.e., program texts with assertions)  of individual threads may first of all contain read events for arbitrary promises, i.e.~describe the reading of arbitrary values.  
In Thread 1 of \cref{fig:load-buffering}, the pre-assertion of the load only
contains an event for initial writes (labelled $\inikw$), yet the load may read the value~$1$ for
$y$ from a promised write, described by the event labelled $\prm{2}{y}{1}$ in
the post assertion. 
The  semantics
generates dependencies if the same register is used (perhaps
indirectly) by a read and a later write. This is captured in our
assertions using the event labelled $\barrx{\reg}{y}$, causally ordered after $\prm{2}{y}{1}$, which states that the
view of register $\reg$ is at least that of the read of 
$y$. 
Execution of line 2 then adds a fulfill event
with label $\ff{1}{x}{1}$ to the assertion, which is not ordered with any other event except $\inikw$. Symmetric
assertions can be generated for Thread 2. To obtain an
assertion describing the combined execution, we compose the final
event structures of both threads to obtain a ``postcondition'' of the
program. For this, we use parallel composition of event structures, synchronising read with their corresponding fulfill events. In \cref{fig:load-buffering}, 
both reads are valid since the promises that these reads rely on
can be fulfilled in the composition without creating cyclic dependencies. 

\cref{fig:load-buffering-fnc} presents a variation of the program in \cref{fig:load-buffering}, which includes additional barriers $\kwc{dmb}$ (fences) between the load and store in each thread, preventing their reordering. Again we build a proof outline for an execution in which Thread 1 loads $1$ into $\reg$, obtaining the assertions shown. Note that here the  event structure contains an additional fence event, $\fence$, that is ordered after $\barrx{\reg}{y}$ and before   $\ff{1}{x}{1}$.  Similarly, for Thread~2 loading $1$ into $\rega$, we would obtain a symmetric set of assertions. Here, the parallel composition of local assertions is however not {\em interference free} (see below): the promises that threads 1 and 2 have read from cannot be fulfilled in this concurrent program. More detailedly,  let $\cE_1$ and $\cE_2$ below be the (final) event structures of threads 1 and 2, respectively, where $\causal$ arrow denotes  ordering and we now give event names together with labels. 

\begin{minipage}[t]{1.0\linewidth}
$\cE_1$:
\begin{tikzpicture}[anchor=text,baseline,node distance=.6cm]
  \node (1) {$e_\inikw: \inikw \causal e_1: \prm{2}{y}{1}
    \causal e_2: \barrx{\reg}{y} \causal e_3: \fence_1 \causal e_4: \ff{1}{x}{1}$} ;  
   \end{tikzpicture}

   $\cE_2$:
\begin{tikzpicture}[anchor=text,baseline,node distance=.6cm]
\node (1) {$f_\inikw: \inikw \causal f_1: \prm{1}{x}{1}
    \causal f_2: \barrx{\rega}{x} \causal f_3: \fence_2 \causal f_4: \ff{2}{y}{1}$} ;  
   \end{tikzpicture}
 \end{minipage}

\noindent To reason about the set of reachable final states of the concurrent program, we again construct the parallel composition of $\cE_1$ and $\cE_2$ (denoted $\cE_1 \| \cE_2$):

\medskip
\noindent 
\scalebox{0.85}{
   \begin{tikzpicture}[anchor=south,baseline,node distance=.4cm]
  \node (00) {$(e_\inikw, f_\inikw): \inikw$} ;  
  \node [above right= 0.1cm and 0.5cm of 00] (14) {$(e_1, f_4): \ff{2}{y}{1}$} ;  
  \node [below right= 0.1cm and 0.5cm of 00] (41) {$(e_4, f_1): \ff{1}{x}{1}$} ;  

  \node [above =of 14] (1*) {$(e_1, *): \prm{2}{y}{1}$} ;  
  \node [right=of 1*] (2*) {$(e_2, *): \barrx{\reg}{y}$} ;  
  \node [right=of 2*] (3*) {$(e_3, *): \fence_1$} ;  
  \node [right=of 3*] (4*) {$(e_4, *): \ff{1}{x}{1}$} ;  

  \node [below =of 41] (*1) {$(*, f_1): \prm{1}{x}{1}$} ;  
  \node [right=of *1] (*2) {$(*, f_2): \barrx{\rega}{x}$} ;  
  \node [right=of *2] (*3) {$(*, f_3): \fence_2$} ;  
  \node [right=of *3] (*4) {$(*, f_4): \ff{2}{y}{1}$} ;

  \draw[->>,thick] (00) -- (1*.west); 
  \draw[->>,thick] (1*) -- (2*); 
  \draw[->>,thick] (2*) -- (3*);
  \draw[->>,thick] (3*) -- (4*); 

  \draw[->>,thick] (00) -- (*1.west); 
  \draw[->>,thick] (*1) -- (*2); 
  \draw[->>,thick] (*2) -- (*3);
  \draw[->>,thick] (*3) -- (*4); 

  \draw[->>,thick] (00) -- (14); 
  \draw[->>,thick] (00) -- (41); 

  \draw[->>,thick] (41) -- (*2); 
  \draw[->>,thick] (14) -- (2*);

  \draw[->>,thick] (*3) -- (14); 
  \draw[->>,thick] (3*) -- (41);

     \begin{scope}[draw=orange,
              every node/.style={fill=white,circle},
              every edge/.style={decorate,decoration={zigzag,segment length=2mm, amplitude=1mm},thick},
              every path/.style={decorate,decoration={zigzag,segment length=2mm, amplitude=1mm},thick}]
    \draw (1*) -- (14);
    \draw (41) -- (*1);
    \draw (14.east) -- (*4);
    \draw (41.east) -- (4*);
    \end{scope}

\end{tikzpicture}}

\bigskip This composition of event structure is built similar to
\cite{GlabbeekG04}, allowing events of the parallel composition to be
lifted from the sub-components. These are events of the form
$(e_i, *)$ and $(*, f_i)$. The parallel composition   also
contains \emph{synchronised} read/fulfill events, e.g., $(e_1, f_4)$ depicts a
  read synchronised with the fulfill (write)  $\ff{1}{y}{1}$. We inherit
order in the composition from the constituent event
structures. Moreover, to prevent the same event occurring more than
once in an ``execution'' of $\cE_1 \| \cE_2$, we use the {\em conflict} 
relation (zigzagged line). Thus, the synchronised event
$(e_1, f_4)$ conflicts with both $(e_1, *)$ and $(*, f_4)$.

The final step in proving is the generation of a
valid interference free {\em configuration} of the parallel composition, which is a subset of the event structure
satisfying certain conditions, including acyclicity of $\causal$, 
absence of conflicts and absence of unsynchronised reads (ensuring the fulfillment of all promises read from). It turns out that for the event structure above, it is
impossible to generate such a configuration. The event $(e_1, *)$
cannot be included since it is an unsynchronised read. Therefore,
$(e_1, f_4)$ must be included. However, by the definition of a
configuration, this also means that the downclosure of $(e_1, f_4)$
must be included, which results in a cycle:
$(e_1, f_4) \causal (e_2, *) \causal (e_3, *) \causal (e_4, f_1) \causal (*, f_2) \causal (*, f_3) \causal (e_1,
f_4)$. Since $\cE_1 \| \cE_2$ has no interference free configurations, the proof outline is not valid and in fact, a final state with 
$a=1 \wedge b=1$ is unreachable here. 






\section{A Weak Memory Semantics with Promises }
\label{sec:weak-memory-semant}

We develop a promising semantics inspired by the recent view-based
operational semantics 
by Pulte et
al.~\cite{PultePKLH19}. We have reduced architecture-specific details,
allowing us to focus on the interaction between promises and thread
views. Our notion of a {\em promise} coincides with earlier works
\cite{PultePKLH19,KangHLVD17,DBLP:conf/pldi/LeeCPCHLV20}. Threads can
promise to write certain values on shared locations and other threads
can read from this promise even before the actual write has
occurred. All promises however need to be fulfilled at the end of the
program execution.


\smallskip \noindent {\em Syntax.} Let $x,y \in \Loc$ be the set of {\em shared locations}, $\val \in \Val$ the set of {\em values}, $\tid \in \TId$ the set of {\em thread identifiers} and $\reg, \rega \in \Reg$ {\em local registers}. Our sequential language encompasses the following constructs:
\begin{eqnarray*}
  {\it rv} & ::= & \val \mid \reg  \qquad\qquad 
  st  ::=  \skipst \mid \load{\reg}{x} \mid \store{x}{{\it rv}}  \mid \reg := \expr \mid  \kwc{dmb} \\
  S & ::= & st \mid S; S \mid \assume\, \bexp \mid S + S \mid S^*
\end{eqnarray*}
where $\expr \in \Exp$ is an arithmetic and $\bexp \in \BExp$ is a
boolean expressions, both over (local) registers only. We assume
$S^* = \exists n \in \nat.\ S^n$, where $S^0 \eqdef \skipst$ and
$S^n \eqdef S ; S^{n-1}$.  We use abbreviations:
$\kwc{while} \, \bexp\, \kwc{do} \, S = (\assume\,\bexp; S)^*;
\assume\,\neg \bexp$ and
$\kwc{if} \, \bexp\, \kwc{then} \, S_1 \, \kwc{else} \, S_2 =
(\assume\,\bexp ; S_1) + (\assume\,\neg \bexp ; S_2)$, \new{where
  $\assume\,\bexp\,$ is a command that tests whether $\bexp$ holds.}

\smallskip \noindent {\em Timestamped state.} We let ${\it TState}$ be the set of all {\em timestamped states} and $\mathit{Memory}$ the set of all memory states, both of which we make more precise below.
A {\em thread} $T \in {\it Thread}$ is an element of $S \times {\it TState}$, a {\em concurrent program} is a mapping $\vec{T} \in {\it TPool} \ \eqdef\  \TId \rightarrow {\it Thread}$ and a {\em concurrent program state} is a pair $\langle \vec{T}, M \rangle \in \mathit{TPool} \times \mathit{Memory}$. 
We let $R(\tid)$, $\tid \in \TId$, be the set of registers occurring in the program of $\vec{T}(\tid)$. We assume $R(\tid) \cap R(\tid') = \emptyset$ whenever $\tid \neq \tid'$. 

Threads will make {\em promises} for writes at particular timestamps. Timestamps $t \in \TS$ are natural numbers.  
We define $t \sqcup t'\ \eqdef\ max(t,t')$ and generalise this to sets of timestamps using $\bigsqcup_{t\in T} t$, where $\bigsqcup_{t\in \emptyset} t = 0$. 
A {\em memory} is a sequence of write messages of type ${\sf Wr} \ \eqdef\ (\Loc \times \Val \times \TId) \cup \{\inikw\}$, where $\inikw$ is a special write message denoting initialisation. The position of a write in the sequence fixes its timestamp. 
We assume all variables are initialised with value $0$.

 


We denote a write $w \ \eqdef\ (x,\val,\tau)$ using $\langle x:= \val\rangle_\tid$ and let $w.loc=x, w.val = \val$ and $w.tid= \tau$. For a memory $M$ and thread $\tau \in \TId$, we let $M_\tau \subseteq {\TS}$ 
be the set of timestamps of entries of $\tau$ in $M$, i.e.~$\{ t \in \TS \mid M(t).tid = \tau\}$. $M_\tau$ is used to determine the promise set of each $\tid$. We write $tids(M)$ to denote the set of threads with entries in $M$. New messages $w$ are appended at the end of the memory, which we write as $M \cat w$. 

A thread state $ts \in {\it TState}$ consists of the following components: a set of (non-fulfilled) {\em promises} $prom \in 2^\TS$, a {\em coherence} view of each location, $coh: \Loc \rightarrow \TS$, the value and view of each register, \new{${\it regs}: \Reg \rightarrow \Val \times \TS$}, a read view \new{$\vread: \TS$}, two write views \new{$\vwold, \vwnew : \TS$} and a condition view \new{$v_C : \TS$}. We write ${\it regs}(\reg)$ as $\val@v$ and also let $v_\reg$ be this view  $v$ of register $\reg$. 
Finally, the evaluation of an expression $\expr$ with respect to a register assignment ${\it regs}$, \new{$\expeval{\expr}{\it regs}\in \Val \times \TS$}, is defined as follows:
\begin{align*}
    \expeval{\val}{\it regs} & \eqdef  \val@0 \text{ for } \val \in \Val, \qquad
    \expeval{\reg}{\it regs}  \eqdef  {\it regs}(\reg) \text{ for } \reg \in \Reg,  \\
    \expeval{\expr_1 \mathop{op} \expr_2 }{\it regs} & \eqdef 
    \begin{array}[t]{@{}l}
    (\val_1 \sem{op} \val_2)@(v_1 \sqcup v_2) \text{ with } 
    \expeval{\expr_1}{\it regs} = \val_1@v_1, \expeval{\expr_2}{\it regs} = \val_2@v_2 
    \end{array}
\end{align*} 
Note that this evaluation is with respect to the register function ${\it regs}$ and this calculates both the value of the expression and the maximal view of the registers within the expression. 

To define the initial state of a program, we let
\begin{align*}
   M_\inikw & \eqdef \langle \inikw \rangle & 
   ts_\inikw  & \eqdef  \left[
                \begin{array}[c]{@{}l@{}}
                  prom = \{\}, v_{read} = v_{wNew} = v_{wOld} = v_C = 0, \\
                  coh = (\lambda x.\ 0), {\it regs} = (\lambda \reg.\ 0 @ 0)
                \end{array}
\right]
\end{align*}
where $ts_\inikw$ is a record initialising the promises to the empty set, each view to $0$, the coherence function to a map from locations to timestamp $0$, and the register function to a map from registers to value $0$ with timestamp $0$.
We say that a program $\vec{T}$ is locally in its initial state iff for each thread $\tid$, we have $\pi_2(\vec{T}(\tid)) = ts_\inikw$, where $\pi_i$ projects the $i$th component of a tuple. Given that $\vec{T}$ is in its initial state, the initial concurrent program state is given by  
$\langle\vec{T}, M_\inikw\rangle$.

\begin{figure}[t]
\footnotesize

  \[ \inferrule[Promise]{
      \inarr{
        w.tid = \tid \quad w.loc = \loc \quad w.val =\val \\
        t = |M| + 1 \\ ts' = ts[prom \mapsto ts.prom \cup \{t\}] }
      }
    {
    \begin{array}{l}
\big\langle \langle S,ts \rangle, M \big \rangle \trans {prm(x,\val)}_{\tid} \big\langle \langle S,ts' \rangle, M \cat w \big \rangle
    \end{array}
     } 
 \quad 
 \inferrule[Fence]{
   \inarr{
     v = ts.v_{read} \sqcup ts.v_{wOld} \\
     ts' = ts
     \ensuremath{\left[\begin{array}[c]{l}
                         v_{read} \mapsto v \\ 
                         v_{\mathit{wNew}} \mapsto    v
                       \end{array}\right]}
                 }}
{ \big \langle \langle \kwc{dmb},ts\rangle, M \big \rangle \trans {\fence}_{\tid} \big \langle \langle \skipst,ts'\rangle, M \big \rangle  }
\] \vspace{-20pt}

\[
  \inferrule[Read]{
    \inarr{
      M(t).loc = x \qquad  M(t).val = \val 
      \\ 
      \forall t'.\ t < t' \le (ts.v_{read} \sqcup ts.coh(x)) \imp\\
      \qquad M(t').loc \neq \loc 
      \\
      v_{post} = ts.v_{read} \sqcup t \\ 
      ts'=ts \left[
        \inarrL{regs(\reg) \mapsto \val @ v_{post}, \\
          coh(\loc) \mapsto ts.coh(x) \sqcup v_{post},  \\ 
          v_{read} \mapsto v_{post}}
      \right]
    }}
  { 
    \inarr{
      \big \langle \langle \load{\reg}{\loc}, ts \rangle, M \big \rangle \trans {rd(\loc,\val)}_{\tid} \qquad \\
      \hfill \big \langle  \langle \skipst,ts'\rangle, M \big \rangle
    }
  } 
  \quad
  \inferrule[Fulfill]{
     \inarr{
      t  \in ts.prom \quad \expeval{\rv}{ts.regs} = \val @ v_\rv \\
      M(t)=\langle x :=\val\rangle_{\tid} \\ 
      ts.\vwnew \sqcup ts.v_C \sqcup ts.coh(x) \sqcup v_\rv  < t 
      \\
      ts'=ts
      \left[\inarrL{
          prom \mapsto ts.prom \setminus \{t\}, \\ 
          coh(x)\mapsto  t , \\
          \vwold \mapsto  \vwold   \sqcup t}
        \right]}}
  {
    \inarr{
      \big \langle \langle \store{x}{\rv}, ts\rangle ,M \big\rangle  \trans {{\it ff}(x,\val)}_{\tid} \qquad \\
      \hfill \big \langle \langle \skipst,ts'\rangle, M \big \rangle 
    }
  }
\]
\[
\hfill\inferrule[Register]{
       regs(\reg) = \val_\reg @ u \qquad \expeval{\expr}{ts.regs} = \val @ v \qquad
ts'=ts
     \ensuremath{\left[\begin{array}[c]{l}
       regs(\reg) \mapsto \val @ (u  \sqcup v)\\
     \end{array}\right]} 
}
{
  \begin{array}[t]{@{}l@{}}
    \big \langle \langle \reg := \expr, ts \rangle, M \big \rangle \trans {\lst{\reg}{\expr} }_\tau   \big \langle  \langle \skipst,ts'\rangle, M \big \rangle
  \end{array}
} \hfill
\]
\[
  \inferrule[Assume]{
    \inarr{
      \expeval{\bexp}{ts.regs} = \true @ v
      \\
      ts' = ts[v_C \mapsto ts.v_C \sqcup v]}}
  {\big \langle \langle \assume\,  \bexp, ts \rangle, M \big \rangle \trans {asm(\bexp)}_\tau \big \langle \langle \skipst, ts' \rangle, M \big \rangle }  
\quad 
\inferrule[Program Step]{
  \inarr{
    \langle \vec{T}(\tid),M \rangle \trans {op}_\tid \langle T', M \rangle \\
  \langle T', M \rangle \text{ certifiable }}}{ \langle \vec{T}, M \rangle \trans {op}_\tid \langle \vec{T}[ \tid  \mapsto T'], M' \rangle }
\]
  \caption{Operational semantics (Atomic statement rules)}

\label{fig:opsem} 
\end{figure}

The rules of the operational semantics (except for standard rules for program constructs) are given in \cref{fig:opsem}. The two key rules are the {\sc Read} and {\sc Fulfill} rule. {\sc Read} identifies a timestamp $t$ to read a value for $x$ from such that in between $t$ and the maximum of read view and coherence of $x$, there are no further promises to $x$ in memory $M$. It updates read view, coherence of $x$ and the view of the register involved in the load as to ensure preservation of dependencies. {\sc Fulfill} fulfills an already made promise (to write $\val$ to $x$) of a thread at timestamp $t$, and to this end has to ensure that views $\vwnew, v_C, coh(x)$ as well as that of the value/register are less than $t$. It removes $t$ from the thread's promise set and updates $coh(x)$ and $\vwold$ (as to ensure dependencies with fences).  
Rule {\sc Promise} simply adds an arbitrary new promise at the end of memory. {\sc Fence} ensures views $\vread$ and $\vwnew$ are updated. This rule for instance guarantees that store operations separated by barriers $\kwc{dmb}$ can only be fulfilled in that order, i.e.~the write of the first store cannot be promised to happen later than the write of the second store (more precisely, such promises cannot be fulfilled).  

Finally, we say that $\langle T, M \rangle$ is {\em certifiable} (used in {\sc Program Step}) if there is  
some $T', M'$ such that
$\langle T, M \rangle \trans {}^*_\tau \langle T', M' \rangle$ and
$T'.prom = \emptyset$. Certifiability ensures that a concurrent program can only make steps when all promises can eventually be fulfilled. Like~\cite{PultePKLH19}, in our semantics, all promise steps can be done at the beginning without losing any of the reachable states (see \cref{lemma:promises-first}).

\newcommand{\lab}{\alpha}

\section{Event Structures}
\label{sec:event-structures}

Event structures~\cite{Winskel88,BoudolC87,BoudolC94,GlabbeekG04} are
models of concurrent systems which compactly represent (concurrent)
executions. 
Here, we use
flow event structures because of their ease in defining a
compositional parallel composition~\cite{GlabbeekG04}. 

{\em Notation.} 
Event structures consist of sets of {\em events} $d,e,f \in E$.
Events will be labelled with {\em actions} which are here specific to our usage and give us information about program executions: 
\begin{small}
\begin{align*}
   \Act^x & \eqdef \bigcup_{\tid \in \TId, \val \in \Val} \{ \prm{\tau}{x}{\val},\ff{\tau}{x}{\val}\} \cup \{\inikw\} &
   \Act^{\fence} & \eqdef \bigcup_{\tid \in \TId} \{ \fence_\tau \} \\
   \Act^a & \eqdef \bigcup_{x \in \Loc, \expr \in \Exp} \{ \barrx{a}{x}, \barrexp{a}{\expr} \} & 
   \Act^\testkw & \eqdef \bigcup_{\tid \in \TId , \bexp \in \BExp} \{ \test{\tid}{\bexp} \} 
\end{align*}
\end{small}

\noindent Actions on a location $x$ can be {\em read} actions $\prm{\tid}{\cdot}{\cdot}$, {\em fulfill} actions $\ff{\tid}{\cdot}{\cdot}$ or the initialization $\inikw$. Note that the thread identifier $\tid$ in read actions is the id of the thread having made the promise and in fulfill actions it is the thread executing the fulfill (and having made the corresponding promise). We let $\Act^\prkw$ denote all read and $\Act^\ffkw$ all fulfill actions. To record loading into register $a$, we use so called {\em bar} actions $\barrx{a}{\cdot}$.  The action $\fence$  occurs when a \kwc{dmb} statement is executed
and  $\test{\cdot}{\cdot}$ describes the execution of some $\assume$ statement. 

We often lift notations to sets of locations $L \subseteq \Loc$ or sets of registers $R \subseteq \Reg$. For example, $\Act^L = \bigcup_{x \in L} \Act^x$. The overall set of actions is $\Act = \Act^\Loc \cup \Act^\Reg \cup \Act^\fence \cup \Act^\testkw$. 


\begin{definition}
 A  {\em location-coloured flow  event structure} (short: event structure) $\cE=(E,\causal,\conflict,\lres,\lfun)$ labelled over a set of actions $Act$ consists of 
 \begin{itemize*}[label={}]
     \item a finite set of events $E$,
     \item an irreflexive {\em flow relation}  $\mathord{\causal} \subseteq E \times E$, 
     \item a {\em location restriction} function $\lres : E \times E \to 2^{\mathsc{Loc}}$,  
     \item a symmetric  {\em conflict relation} $\# \subseteq E \times E$, and 
     \item a {\em labelling function}  $\lfun: E \rightarrow Act$.   
 \end{itemize*}
 \end{definition} 

  \noindent For $L \subseteq \Loc$, we write $e \rdflow{L} f$ to denote
 $e \causal f$ and $\lres(e, f) = L$. The location restrictions are employed to reflect the application condition of rule {\sc Read} within the event structure: it tells us that there is no write to $x \in L$ in between $e$ and $f$, where $e$ and $f$ will eventually be mapped to timestamps in memory. 
 
 We let $\Ini$ be the event
 structure $(\{e_{\inikw}\},\emptyset,\emptyset,\emptyset,\lfun)$ with
 $\lfun(e_\inikw) = \inikw$.
 Given an event structure
 $\cE= (E,\causal,\conflict,\lres,\lfun)$, we -- similarly to actions
 -- define its set of events labelled with specific actions as
 $\Pr(\cE)$, $\Pr_\tau(\cE)$, $\Pr_\tau^x(\cE)$, $\Ff(\cE)$,
 $\Ff_\tau(\cE)$ and $\Ff_\tau^x(\cE)$
 via the labelling function $\lfun$.  For an event $e$ labelled with
 an action in $\Act^x \backslash \{\inikw\}$, we let $e.loc =
 x$. We slightly abuse notation so that $e_\inikw.loc =
 x$ for all $x$. We furthermore define
 $last_{\lab}(\cE)$, $\lab \in Act$, to be the last event in flow order
 labelled $\lab$, i.e., $last_\lab(\cE) = e$ if $\lfun(e) = \lab$ and  
 for all  $e'$ such that $e' \neq e$ and $\lfun(e')=\lab$, we have  $e' \causal^+ e$. Moreover, 
 $last_\lab(\cE) = \bot$ if no event labelled $\lab$ exists.  
 We lift
 $last$ to sets of actions by
 $last_A(\cE) = \{last_\lab(\cE) \mid \lab \in \Act\}$.
 An event structure $\cE$ is {\em sequential} if all events are flow-ordered: $\forall e,e' \in E, e \neq e': e \causal^+ e' \vee e' \causal^+ e$. We let $\Seq$ be the set of sequential event structures.

  An event structure describes (several) concurrent executions in compact form. One execution is therein given as a configuration. 
 
 \begin{definition} 
  A {\em configuration $C \subseteq E$} of an event structure $\cE= (E,\causal,\conflict,\lres,\lfun)$ satisfies the following properties: 
 \begin{itemize*}[label={}]
     \item (1) $C$ is cycle-free: $(\causal \cap (C \times C))^+$ is irreflexive,
     \item (2) $C$ is conflict-free: $\# \cap (C \times C) = \emptyset$, 
     \item (3) $C$ is left-closed up to conflicts: $\forall d,e \in E$, if $e \in C$, $d \causal e$ and $d \notin C$, then there exists $f \in C$ such that $d \# f$ and $f \causal e$. 
 \end{itemize*}  
 \end{definition}

\noindent We let $\conf(\cE)$ be the set of configurations of $\cE$.  We identify a configuration with the (conflict-free) event structure $\cE_C$ which is $\cE$ restricted to events of $C$.  
Our intention is to use event structures to record information about the local history of each thread, in particular  the promises of other threads which they have read from. Eventually (i.e., when combining local event structures) all promises read from need to be fulfilled. This is captured by our notion of parallel composition which requires fulfills (of a thread $\tid$) to {\em synchronize} with reads from promises of $\tau$.  Similary to CCS~\cite{CCS}, we model this synchronisation via {\em complementary} actions where $\overline{\prm{\tau}{x}{\val}} = \ff{\tau}{x}{\val}$ and vice versa, and $\overline{\overline{a}} = a$. Contrary to CCS, the synchronisation does not create internal actions, but keeps the fulfill labels (as to still see what promise a fulfill belonged to). 

We first define  the {\em synchronising events} of $n$ event structures $\cE_1$,   \dots, $\cE_n$, as follows, where $E_{i*}$ denotes $E_i \cup \{*\}$.
 $$sync(\cE_1, \dots, \cE_n  ) \sdef 
 \begin{array}[c]{@{}l@{}}
 \left\{
 \begin{array}[c]{@{}l}
      (e_1, e_2, \dots, e_n) \in E_{1*} \times E_{2*}\times \dots \times E_{n*} \mid {} \\
      \exists i.\ \lfun_i(e_i) \in \Act^\ffkw \wedge (\forall j \neq i.\ e_j \neq * \imp \overline{\lfun_i(e_i)} = \lfun_j(e_j)) \wedge {}  \\  
      \qquad (\exists j \neq i.\ e_j \neq *)  
 \end{array}
 \right \}\\
      {} \cup \{(e_\inikw^1, e_\inikw^2, \dots, e_\inikw^n)\}
\end{array}      $$

An event $e$ might also occur unsynchronized in a parallel composition (which is then written as $(*,\dots,*, e,*,\dots,*)$. 

Note that since we aim to reason about reachability of states
(underapproximation), we just need parallel composition for {\em
  conflict-free} event structures, i.e.~for event structures
describing a single execution. Thus the $\Delta$-axiom of Castellani
and Zhang~\cite{CastellaniZ97} which they impose in order to get
compositionality is trivially fulfilled for our application. Next, we
still first of all define parallel composition of arbitrary event
structures.

We let $\times_i S$ denote the product $S \times S \times \dots S$ generating a tuple of length $i$. If $i \leq 0$, we let $\times_i S = \bot$. Finally, we let $\bot \times S = S \times \bot = S$. 

\begin{definition}[Parallel composition]
 Let $\cE_1, \cE_2, \dots, \cE_n$ be event structures for threads $\tid_1, \tid_2$, \dots, $\tid_n$, respectively.   
 The {\em parallel composition} $\cE = \cE_1 || \cE_2 || \dots || \cE_n$ is the event structure $(E,\causal, \#, \lres, \lfun)$ with 
   \begin{itemize}
    \item $\begin{array}[t]{@{}rcl@{}} 
        E & = & sync(\cE_1,\cE_2, \dots \cE_n)  \cup \big(\bigcup_i (\times_{i-1} \{*\}) \times (E_i \setminus \{e^i_\inikw\}) \times (\times_{n-i} \{*\})  \big)
      \end{array}$
     \item $(e_1,e_2, \dots, e_n) \causal (d_1,d_2, \dots, d_n)$ iff $\exists i.\ e_i \causal_i d_i$,
     \item $\lres((e_1,e_2, \dots, e_n), (d_1,d_2, \dots, d_n)) = \bigcup_i \lres(e_i, d_i)$, 
     \item $(e_1,e_2, \dots, e_n) \conflict (d_1,d_2, \dots, d_n) $  iff 
     \begin{itemize}
         \item $\exists i.\ e_i \conflict_i d_i$, or  \hfill (inherit conflicts)
         \item $\exists i, j.\ e_i = d_i \wedge e_i \neq * \wedge e_j \neq d_j$  \hfill (conflicts on differently paired events),   
     \end{itemize}   
   \item Labels: 
     \begin{small}
     \[
       \lfun(e_1, e_2, \dots e_n) =
       \begin{cases}
         \inikw & \text{if $(e_1, e_2, \dots e_n) = (e_{\inikw}^1, e_{\inikw}^2,
           \dots e_{\inikw}^n)$} \\
         \lfun(e_i) & \text{if $(e_1, e_2, \dots e_n) \in sync(\cE_1, \cE_2, \dots \cE_n) \wedge \lfun(e_i) = \ff{\cdot}{\cdot}{\cdot}$} \\
         \lfun(e_i) & \text{if $(e_1, e_2, \dots e_n) \notin sync(\cE_1, \cE_2, \dots \cE_n) \wedge e_i \neq *$} 
       \end{cases}
     \]
   \end{small}

 \end{itemize}
 \end{definition}

 \noindent Parallel composition of event structures is used to combine local
 proof outlines of threads. 
 This combination is only possible if
 enough synchronization partners are available.  Event structures
 $\cE_1$ to $\cE_n$ are {\em synchronizable} if
 $\pi_{i}(sync(\cE_1, \ldots, \cE_n))$ $\supseteq \Pr (E_{i})$,
 $i\in \{1, \ldots, n\}$ (all the reads have a synchronization
 with a fulfill).  The configuration (describing an execution of the
 parallel composition of threads) which we extract from
 $\conf(\cE_1 || \ldots || \cE_n)$ furthermore has to guarantee that
 no events from the local proof outlines are lost and that the local
 assertions make no contradictory assumptions about the contents
 of memory.
 
 \begin{definition}
 The event structure $\cE_C = (E_C,  \causal_C, \emptyset, \lres_C, \lfun_C)$ corresponding to a configuration $C \in \conf(\cE_1 || \ldots || \cE_n)$ is {\em interference free} if 
 \begin{enumerate}
     \item $C$ is {\em thread-covering}: $\forall i \in \{1, \ldots, n\}: \pi_i(E_C) = E_i$,   
     \item $C$ is {\em memory-consistent linearizable}: there exists a total order $\mathord{\prec} \subseteq \Act^x(E_C) \times \Act^x(E_C)$ among reads, fulfills and the $\inikw$ event such that 
     \begin{itemize}
         \item $\mathord{\causal_C}^+ \cap (\Act^x(E_C) \times \Act^x(E_C)) \subseteq \mathord{\prec}$ and 
         \item $\forall d,e,f \in E_C$: $d \rdflow{L} f \wedge d \prec e \prec f \implies e.loc \notin L$,  
     \end{itemize}
     \item $C$ contains {\em no unsynchronised reads}: there is no event in $E_C$ of the form $(*, *, \dots,*, e_i, * \dots, *)$, where $e_i \in \Pr(\cE_i)$.
 \end{enumerate}
 \end{definition}
 
 

\begin{example} \label{ex1} Consider the two event structures given next (which belong to a message passing program with barriers, see \cref{sec:reasoning}).  

\begin{minipage}{.4\textwidth}
 \begin{tikzpicture}[anchor=west,baseline,node distance=.6cm]
  \node (1) {$e_\inikw: \inikw$};
  \node [right = of 1] (2) {$e_1: \ff{1}{x}{5}$};
  \node [below = of 1] (3) {$e_2: \fence_1$};
  \node [right = of 3] (4) {$e_3: \ff{1}{y}{1}$};
  \path[->>,draw=mycolor,thick] (1) -- (2);
  \path[->>,draw=mycolor,thick] (2) -- (3); 
  \path[->>,draw=mycolor,thick] (3) -- (4);
\end{tikzpicture}  
\end{minipage} \hfill
\begin{minipage}{.6\textwidth}
   \begin{tikzpicture}[anchor=west,baseline,node distance=.4cm,draw=mycolor]
  \node (1) {$f_\inikw: \inikw$};
  \node [above right = 0.1cm and 0.4cm of 1] (2) {$f_1: \prm{1}{f}{1}$};
  \node [right = of 2] (2a) {$f_2: \barrx{a}{y}$};
  \node [below = of 2] (4) {$f_3: \barrx{b}{x}$};
  \path[->>,draw=red,thick] (1) -- node[midway,sloped,above] {\color{red} \small{\{x\}}} (2.west);
  \path[->>,draw=mycolor,thick] (2) -- (2a); 
  \path[->>,draw=mycolor,thick] (2a) -- (4.east);
  \draw[->>,thick] (1) -- (4.west);
\end{tikzpicture}
\end{minipage} 

\smallskip 
\noindent Their parallel composition gives the following event structure: 

\smallskip 
\scalebox{0.9}{
\begin{tikzpicture}[anchor=north,baseline,node distance=.4cm,draw=mycolor]
  \node (1) {$(e_\inikw, f_\inikw): \inikw$};
  \node [above right = 0.1cm and 0.6cm of 1] (1*) {$(e_1, *):\ff{1}{x}{5}$};
  \node [right = of 1*] (2*) {$(e_2, *) : \fence_1$};
  \node [right = of 2*] (3*) {$(e_3, *) : \ff{1}{y}{1}$};
  \node [below = of 2*]  (31) {$(e_3, f_1): \ff{1}{y}{1}$};
  \node [below = of 31]  (*1) {$(*, f_1): \prm{1}{y}{1}$};
  \node [right = of *1] (*2) {$(*, f_2): \barrx{a}{y}$};
  \node [below = of *1] (*3) {$(*, f_3): \barrx{b}{x}$};
  \path[->>,draw=mycolor,thick] (1) -- (1*.west);
  \path[->>,draw=mycolor,thick] (1*) -- (2*); 
  \path[->>,draw=mycolor,thick] (2*) -- (3*); 
  \path[->>,draw=mycolor,thick] (*1) -- (*2); 
  \path[->>,draw=mycolor,thick] (1) -- (*3.west); 
  \path[->>,draw=mycolor,thick] (*2) -- (*3);   
  \path[->>,draw=mycolor,thick] (2*) -- (31); 
  \path[->>,draw=mycolor,thick] (31) -- (*2); 
  \path[->>,draw=red,thick] (1) -- node[midway,sloped,above] {\color{red} \small{$\{x\}$}} (*1.west);  
  \path[->>,draw=red,thick] (1) -- node[midway,sloped,above] {\color{red} \small{$\{x\}$}} (31.west);

  \begin{scope}[draw=orange,
              every node/.style={fill=white,circle},
              every edge/.style={decorate,decoration={zigzag,segment length=2mm, amplitude=1mm},thick},
              every path/.style={decorate,decoration={zigzag,segment length=2mm, amplitude=1mm},thick}]
    \draw (3*) -- (31);
    \draw (*1) -- (31);
    \end{scope}
\end{tikzpicture}}

\smallskip 
\noindent This event structure has no interference-free configuration. To satisfy the conditions ``thread-covering'' and ``no unsynchronised reads'', we must include event $(e_3, f_1)$. This means the only possible configuration must also include the downclosure $(e_1, *)$ and $(e_2, *)$. However, together with the location restriction {\color{red} $\{x\}$} on the edge $((e_\inikw, f_\inikw), (e_3, f_1))$, the resulting event structure is not memory-consistent linearizable, since it contains a sequence $(e_\inikw, f_\inikw) \causal  (e_1, *) \causal (e_2, *) \causal (e_3, f_1)$, where $(e_1, *)$ corresponds to a fulfilled write on $x$ that is forbidden by the edge $((e_\inikw, f_\inikw) \rdflow{\{x\}} (e_3, f_1))$. Conceptually, this means that we cannot find a memory $M$ which matches the constraints on its contents given in the event structure.

\end{example}

\section{Reasoning}
\label{sec:reasoning}
Our overall objective is the design of a proof calculus for reasoning about the {\em reachability} of certain final states of concurrent programs. 
A concurrent program state describes the values of registers and shared variables, the contents of memory and the views of threads.
During reasoning, we employ event structures as {\em assertions} in proof outlines. 
They abstract from the concrete state in neither giving the exact contents of memory nor the timestamps of thread views. 

\subsection{Semantics of Assertions} 
\label{sec:meaning}

Local assertions in the proof outlines of single threads take the form $\cE$, where $\cE$ is a conflict-free event structure (i.e., $\# = \emptyset$). The event structure is conflict-free because it describes a {\em single} execution of the thread (reachability logic).  
An assertion for a thread $\tid$ can have fence and fulfill events of $\tid$, read events reading from (promises of) threads $\tid' \neq \tid$ as well as bar and test events over registers of $R(\tid)$. 
The events in $\cE$ -- together with some memory $M$ -- allow us to compute the current views of threads. 
Figure~\ref{fig:priors} gives some definitions for   calculating views. 

A local assertion of a thread $\tid$ defines
constraints on the global memory $M$ (the ordering of writes and
their values) as well as the views of $\tid$: An assertion $\cE$
describes a set of states
$ \sem{\cE} = \{\langle \vec{ts}, M \rangle \in (\TId \rightarrow {\it TState}) \times {\it Memory} \mid \langle \vec{ts}, M
\rangle \text{ matches } \cE \}$ where ``matches'' is defined by conditions {\bf (1)}-{\bf (4)} below.
 
\begin{figure}[t] 
\begin{eqnarray*}
   \priorFnc_\tau(\cE) & = & \{ e \in \Pr(\cE) \cup \Ff(\cE) \cup \{ e_\inikw \} \mid \exists e' \in last_{\fence_\tau}(\cE): e \causal^+ e' \} \\
    \priorBar_a(\cE) & = & \{ e \in \Pr(\cE) \cup \Ff(\cE) \cup \{ e_\inikw \} \mid \exists e' \in last_{\barrx{a}{\cdot}}(\cE): e \causal^+ e'  \}  \\
    \priorBar_\tau(\cE) & = &  \bigcup_{a \in R(\tau)} \priorBar_a(\cE)
   \\
    \priorBar^x_\tau(\cE) & = &    \priorBar_\tau(\cE) \cap \Act^x(\cE) \\
   \priorTest_\tau(\cE) & = &\{ e \in \Pr(\cE) \cup \Ff(\cE) \cup \{ e_\inikw \} \mid \exists e' \in last_{\test{\tid}{\cdot}}(\cE):  e \causal^+ e'  \} 
\end{eqnarray*}
\vspace{-2em}

 \caption{Determining the decisive reads and writes prior to an event ($\cE = (E,\causal,\#,\lres,\lfun)$ event structure, $\tid \in \TId$, $\reg \in \Reg, x \in \Loc$)}
 \label{fig:priors} 

\end{figure} 
 
%
      \noindent {\bf (1)} $M$ is consistent with the fulfill and read events of $\cE$.  \\
       There exists a total mapping $\psi: \Ff(\cE)  \cup \Pr(\cE) \cup \{e_{\inikw}\} \rightarrow dom(M)$ which  
       \begin{enumerate}
           \item {\em initializes at zero}: the one event $e_{\inikw}$ labelled $\inikw$ is mapped to 0, 
           \item is {\em consecutive} for every thread $\tid$: \\
           for all $e \in \Ff_\tid(\cE)$, $t \in \mathbb{T}$ s.t.~$M(t) = \langle x:=\val \rangle_{\tau}$, $t < \psi(e)$ and $e.loc = x$, there exists $d \in \Ff_\tid(\cE)$ such that $\psi(d) = t$, 
           \item {\em preserves content}: if $\psi(e) = t \neq 0$ and $M(t) =\langle x:=\val \rangle_{\tid}$, then $\lfun(e) \in \{ \ff{\tid}{x}{\val},\prm{\tid}{x}{\val} \}$, 
           \item {\em preserves flows}: $\forall e,e' \in dom(M): e \causal^+_\cE e' \imp \psi(e) < \psi(e')$,  
           
           \item and {\em preserves memory constraints}: \\
           $\forall d,e \in dom(M)$, $L \subseteq \Loc$ s.t.~$d \rdflow L e$, $\forall t \in \mathbb{T}$ s.t.~$\psi(d) < t < \psi(e)$: $M(t).loc \neq d.loc$. 
       \end{enumerate}
       The mapping $\psi$ is used to assign timestamps to read and fulfill events. We therefore will later also talk about the {\em timestamp of an event} (depending on such a mapping). 
       Note that the event structure $\Ini$ is consistent with all memories $M$ (using mapping $\psi(e_\inikw)=0$).

       \noindent {\bf (2)} The open (non-fulfilled) promises of a thread $\tau$ are
       the entries of $\tau$ in $M$ which are not fulfilled, i.e.,
       $\vec{ts}(\tau).prom = M_\tau \setminus \psi(\Ff_\tau(\cE))$.

       \noindent {\bf (3)} The views of a thread $\tau$ are consistent with mapping $\psi$ and $M$. \\  
    Letting $ts = \vec{ts}(\tau)$, $a \in R(\tid)$ and $x \in \Loc$, we have    
     \begin{align*}
           ts.v_C & = \bigsqcup_{e \in \priorTest_\tau(\cE)  } \psi(e) & 
           ts.coh(x) & = \bigsqcup_{e \in \Ff_\tau^x(\cE) \cup \priorBar_\tid^x(\cE)} \psi(e) \\
           ts.v_{\mathit{wOld}} &= \bigsqcup_{e \in \Ff_\tau(\cE)} \psi(e)   
           & 
           ts.v_{\mathit{wNew}} &= \bigsqcup_{e \in \priorFnc_\tau(\cE) \cap \big(\Ff_\tau(\cE) \cup \priorBar_\tau(\cE)\big)} \psi(e)   
        \\z
           ts.v_{a} &= \bigsqcup_{e \in \priorBar_a(\cE)} \psi(e)  & 
           ts.v_{read} &= \bigsqcup_{e \in
            \big(\priorFnc_\tau(\cE) \cap \Ff_\tid(\cE)\big) \cup \priorBar_\tau(\cE)} \psi(e) 
       \end{align*}

       \noindent {\bf (4)} The values of registers $R(\tid)$ of thread $\tid$ agree with values in $\cE$.  \\  
      For $a \in \Reg$, $ts.regs(a) = \val @ v_a$ with  $\val = \sem{a}_\cE$ (where the semantics of a register $a$ in $\cE$ is (1) 0 if no bar event for $a$ is in $\cE$ or (2)  the value of a read or fulfill to $x$ prior to the last $\barrx{a}{\cdot}$ (on $x$) or (3) the value of the expression $\expr$ in a last $\barrexp{a}{\expr}$) and $v_a$ as defined above. 
      




\begin{figure}[t] 
    \centering
    \begin{minipage}{.45\textwidth}
    \scalebox{.75}{\begin{tikzpicture}
       \coordinate (A) at (0,0);
       \coordinate (B) at (7.5,0);
       \coordinate (C) at (7.5,1);
       \coordinate (D) at (0,1);
       \coordinate (1a) at (1,0); \coordinate (1b) at (1,1);
       \coordinate (2a) at (2.5,0); \coordinate (2b) at (2.5,1); 
       \coordinate (3a) at (3.5,0); \coordinate (3b) at (3.5,1); 
       \coordinate (4a) at (5,0); \coordinate (4b) at (5,1);
       \coordinate (5a) at (6,0); \coordinate (5b) at (6,1);
       \draw [thick] (A) -- (B) --node [left,xshift=-.4cm] {$\ldots$}  (C) 
           -- node[above,xshift=-3.3cm]{0} node[above,xshift=-.8cm]{6} node[above,xshift=1.6cm]{9}(D) -- (A); 
       \draw [thick] (1a) -- node [left,xshift=.05cm] {\begin{tabular}{c} $\langle x,0 \rangle$ \\ $\langle y,0 \rangle$\end{tabular}} (1b); 
       \draw[thick] (2a) -- node [left,xshift=-.4cm] {$\ldots$}  (2b); 
       \draw[thick] (3a) -- node [left,xshift=.1cm] {$\langle y,1 \rangle_1$}(3b); 
       \draw[thick] (4a) -- node [left,xshift=-.4cm] {$\ldots$} (4b);
       \draw[thick] (5a) -- node [left,xshift=.1cm] {$\langle z,3 \rangle_1$} (5b); 
       \draw [thick,decoration={brace,mirror, raise=0.2cm},decorate] (1a.east) -- (2a) 
                    node [pos=0.5,anchor=north,yshift=-0.5cm] {\begin{tabular}{c} no write to $x$ \\
                    in $M(1)$ to $M(5)$ \end{tabular} }; 
    \end{tikzpicture}}
    \end{minipage} \hfill 
    \begin{minipage}{.3\textwidth}
        \scalebox{.8}{\begin{tikzpicture}[anchor=west,baseline,node distance=.6cm]
     \node (1) {$\inikw$};
     \node (2) [above right of=1,xshift = 1.3cm] {$\ff{1}{y}{1}$};
     \node (2a) [right of =2,xshift=1.3cm] {$\barrx{a}{y}$};
     \node (4) [below right of=1,xshift=1.3cm] {$\barrx{b}{x}$};
     \path[->>,draw=red] (1) -- node[midway,sloped,above] {\color{red} \small{\{x\}}} (2.west);
     \path[->>,draw=mycolor] (2) -- (2a); 
     \path[->>,draw=mycolor] (2a) -- (4.east); 
     \draw[->>,draw=mycolor] (1) -- (4.west);
   \end{tikzpicture} }
    \end{minipage} \hfill 
    \begin{minipage}{.18\textwidth}
       $\begin{array}[t]{l}
     \textbf{Thread } 1 \\
     1: \store{y}{1} ; \\
     2: \load{a}{y} ;\\ 
     3: \load{b}{x}; \\ 
     4: \store{z}{3}; 
     \end{array}$
   \end{minipage}
    \caption{Example memory $M$ (left) for  event structure $\cE$ (middle) describing an execution of statements 1, 2 and 3 in thread 1 (right). \\
    State $ts$ of thread 1: $prom=\{9\}, v_C=\vwnew=coh(z)=0, v_a=v_b=coh(y)=coh(x)=\vwold = v_{read}=6$, using mapping $\psi: \inikw \mapsto 0, \ff{1}{y}{1} \mapsto 6$. }
    \label{fig:ex-state}
\end{figure}

  Figure~\ref{fig:ex-state} gives an example for the definition of ``matches''. On the right hand side we see the program of 
thread 1. It first stores 1 to $y$, then loads the values of $y$ and $x$ into registers $a$ and $b$, respectively, and finally stores 3 to $z$. The event structure in the middle gives the assertion reached after statement 3, i.e.~{\em before} the final store operation. 
The memory $M$ on the left hand side matches this event structure: There are promises for the event $\inikw$ at $M(0)$ as well as for event $\ff{1}{y}{1}$, so $\psi$ maps $\inikw$ to 0 and $\ff{1}{y}{1}$ to 6. The colored location restriction in the event structure furthermore requires not to have any promises to $x$ in between 0 and 6. As there is one more promise of thread 1 in $M$, not yet covered by the event structure, we can derive $1.prom = \{9\}$.


\subsection{Proof Rules} 

Essentially, assertions describe the events which have already happened together with their orderings plus further constraints.  
\begin{figure}[t] 
\footnotesize
\begin{align*}
\cE \oplus \ff{\tid}{x}{\val} & =  (E_e,  \mathord{\causal} \cup \{ (e',e) \mid e' \in last_{\Act^x \cup \{\fence_\tid,\test{\tid}{\cdot}\}}(\cE)\}, \lres, 
    \lfun[e \mapsto \ff{\tid}{x}{\val})] ) 
    \\
\cE \oplus^a \ff{\tid}{x}{\val} & =
                                  \begin{array}[t]{@{}l@{}}
                                    (E_e,  \mathord{\causal} \cup \{ (e',e) \mid e' \in last_{\Act^x \cup \{\fence_\tid,\test{\tid}{\cdot},\barrx{a}{\cdot}\}}(\cE)\},\qquad\qquad \\
    \hfill \lres, 
    \lfun[e \mapsto \ff{\tid}{x}{\val}] )
                                  \end{array}
\\
    \cE \oplus \barrx{a}{x} & =  (E_e, \mathord{\causal} \cup \{ (e',e) \mid e' \in last_{\Act^x \cup \{ \fence_\tid,\barrx{\cdot}{\cdot} \}}(\cE) \},\lres,\lfun[e \mapsto \barrx{a}{x}]) 
    \\ 
    \cE \oplus \barrexp{a}{\expr} & =  (E_e, \mathord{\causal} \cup \left\{ 
    \begin{array}[c]{@{}l@{}}
    (e',e) \mid \\
    \exists b \in R(\expr) \cup \{a\}: e' = last_{\barrx{b}{\cdot}}(\cE) 
    \end{array}
    \right\},\lres,\lfun[e \mapsto \barrx{a}{\expr}]) 
    \\ 
    \cE \oplus \fence_\tau & =  (E_e, \mathord{\causal} \cup \{ (e',e) \mid e' \in last_{\Act \setminus \{ \test{\tid}{\cdot} \}}(\cE) \},\lres, \lfun[e \mapsto \fence_\tau]) \\
    \cE \oplus \test{\tid}{\bexp} & =  (E_e, \mathord{\causal} \cup \{ (e',e) \mid \exists a \in R(\bexp).\ e' \in last_{\barrx{a}{\cdot}}(\cE)\},\lres,\lfun[e \mapsto \test{\tid}{\bexp}]) \\
    \cE \oplus \cE' & =  \left(
    \begin{array}[c]{@{}l}
    E \cup E', \\
    \causal \cup \causal' \cup \left\{ 
    \begin{array}[c]{@{}l}
(e,e') \in E \times E' \mid \exists x \in \Loc.\ \\  
     e = last_{\{ \fence, \barrx{\cdot}{\cdot},\ff{\cdot}{x}{\cdot}\}}(\cE) \wedge 
     \lfun(e') \in \Act^x 
     \end{array}
     \right\}, 
     \\
 \lres \cup \lres', \lfun \cup \lfun'
    \end{array} \right)
\end{align*}
  \vspace{-1em}
    \caption{Operations for adding events to a conflict-free event
      structure $\cE=(E,\causal,\lres,\lfun)$, where $e \notin E$ is a
      fresh event and $E_e = E \cup \{e\}$,
      $\cE'= (E',\causal',\lres',\lfun')$, $E \cap E' = \emptyset$)}
\label{fig:plus-operations} 
\end{figure} 
The initial assertion in proof outlines is always the event structure $\Ini$.
Then,  the proof rules successively add new events to the event structure when e.g.~reading from or writing to shared variables. 
We however {\em never} add events for promises; rather, threads can first of all assume arbitrary promises of other threads having been made which they can read from. 
The overall interference freedom constraint guarantees that these local assumptions about promises are met at the end. 

For adding new events, we use a number of $\oplus$-operators, detailed in \cref{fig:plus-operations}.  
The event structures in there are local to threads and describe a single execution of the thread, hence are conflict-free. 
The definition of these operators has to ensure that they capture the dependencies between views as defined by the operational semantics. For example, rule {\sc Fulfill} requires (among others) the timestamp $t$ to be larger than control view $v_C$, hence ${\cE} \oplus \ff{\tid}{x}{\val}$ has to introduce a flow from the last test event to the newly added fulfill event.

\begin{figure}[t]\footnotesize
\begin{mathpar}
   \inferrule[\textsc{PR-Write}]{ }
      {{\color{mycolor}[\cE] } \ \store{x}{\val} \  {\color{mycolor}[  \cE \oplus \ff{\tau}{x}{\val} ] } }
      \quad
\inferrule[\textsc{PR-WriteR}]{ \sem{a}_\cE = \val   }
      {{\color{mycolor}[\cE] } \ \store{x}{a} \  {\color{mycolor}[  \cE \oplus^a \ff{\tau}{x}{\val} ] } }
      \quad 
      \inferrule[\textsc{PR-Fence}] {\ } { {\color{mycolor}[ \cE ]}  \ \kwc{dmb}_\tid \ {\color{mycolor}[ \cE \oplus \fence_\tau ]} }
      \and
      \inferrule[\textsc{PR-ReadEx}]{
        \inarr{
          e=last_{\Act^x}(\cE) \\ \lfun(e) \in \{\prm{\tau'}{x}{\val},\ff{\tau}{x}{\val}, \inikw \}  }}
      { {\color{mycolor} [\cE ] } \ \load{a}{x} \  {\color{mycolor}[ \addc_e^x(\cE \oplus  \barrx{a}{x})} ]} 
\quad
\inferrule[\textsc{PR-ReadNew}]
{
  \inarr{
    \cE' =(E',\causal',\lres',\lfun') \in \Seq \\ \lfun'(E') \subseteq \Act^{Rd} \setminus \Act_\tid \\
    last_{\Act}(\cE') = e, \lfun'(e) = \prm{\tau'}{x}{\val}
  }
}
{ {\color{mycolor}[ \cE ]} \ \load{a}{x} \  {\color{mycolor}[  (\cE \oplus \cE') \oplus \barrx{a}{x} ]} }  
      \and
\inferrule[\textsc{PR-Registers}] {  } 
     { {\color{mycolor}[ \cE  ]} \ a:= \expr\ {\color{mycolor}[ \cE \oplus  \barrexp{a}{\expr} ] } }
     \quad 
     \inferrule[\textsc{PR-Assume}] { \sem{\bexp}_\cE = \true } { {\color{mycolor}[ \cE  ]} \ \assume \ \bexp \ {\color{mycolor}[ \cE \oplus  \test{\tid} {\bexp} ] } }
 \quad 
 \inferrule[\textsc{PR-Choice}]{{\color{mycolor}[ \cE_1  ]} \ S_i \ {\color{mycolor}[ \cE_2  ]}}
     { {\color{mycolor}[ \cE_1  ]} \ S_1 + S_2 \ {\color{mycolor}[ \cE_2  ]} } 
\and
\inferrule[\textsc{PR-Sequencing}]{{\color{mycolor}[ \cE_1  ]} \ S_1 \ {\color{mycolor}[ \cE_2  ]}\qquad 
 {\color{mycolor}[ \cE_2  ]} \ S_1 \ {\color{mycolor}[ \cE_3  ]}}
 {{\color{mycolor}[ \cE_1  ]} \ S_1; S_2 \ {\color{mycolor}[ \cE_3  ]}} 
\quad
\inferrule[\textsc{PR-IterateZero}]{\ }{{\color{mycolor}[ \cE  ]} \ S^0 \ {\color{mycolor}[ \cE  ]}}
    \quad
    \inferrule[\textsc{PR-IterateNonZero}]{n > 0 \qquad {\color{mycolor}[ \cE_1  ]} \ S; S^{n-1} \ {\color{mycolor}[ \cE_2  ]}}{{\color{mycolor}[ \cE_1  ]} \ S^{n} \ {\color{mycolor}[ \cE_2  ]}}
  \end{mathpar}
    \caption{Local proof rules for a thread $\tid$}
    \label{fig:local}
\end{figure}

\Cref{fig:local} gives the proof rules for building local proof outlines of threads. 
Most of the rules (i.e., {\sc PR-Write, PR-WriteR, PR-Fence, PR-Registers} and {\sc PR-Assume}) just add one new event to the event structure recording the occurrence of a particular program statement. More complex are the two read rules: {\sc PR-ReadEx} is applied for load statements reading from $x$ when the event structure already contains an event $e$ describing (in the sense of $\sem{\cE}$) the entry in memory to read from; this can be a read, fulfill or the $\inikw$ event. In this case, the event structure after the load has to reflect the applicability condition of rule {\sc Read}: no entries in memory to $x$ in between $t$ (the timestamp of $e$ in $\sem{\cE}$) and $\vread \sqcup coh(x)$. This is achieved by inserting an additional location restriction ${\color{red}{x}}$ via the operator $\addc_e^x(\cE)$ to the following (potentially already $L$-labelled) flows (thus getting the restriction ${\color{red} L \cup \{x\}}$):   
\[ \{    e \rdflow{L} e' \mid e' \in (\priorFnc_\tau(\cE) \cap \Ff_\tau(\cE)) \cup \priorBar_\tau(\cE) \cup \Ff_\tau^x(\cE)    \} \ . \]  

Rule {\sc PR-ReadNew} on the other hand introduces new read events into an event structure upon a load statement. The rule can directly introduce an entire {\em sequence} of read events (i.e., add a sequential event structure $\cE'$) as to enable later reads from memory entries which are prior to the entry of the current read (described by event $e$ in the rule). This is required for message passing idioms like in the following program.  

\begin{center} \footnotesize
   \begin{tabular}[b]{@{}c@{}} 
   
   $\begin{array}{l@{\ \ }||@{\ \ }l}
     \begin{array}[t]{l}
      \textbf{Thread } 1 
     \\
      1: \store{x}{5} ;
      \\
      2: \code{dmb} ;
      \\
      3: \store{y}{1} ;
      \end{array}
     & 
     \begin{array}[t]{l}
      \textbf{Thread } 2 
      \\
      4: \load{a}{y} ;
      \\
      5: \load{b}{x}; 
      \end{array}      
      \end{array}$ \\
   \end{tabular} 
 \end{center} 

Here, due to the fence in Thread~1, Thread~2 -- after having read $y$ to be 1 -- can only read $x$ to be 5. When constructing the proof outline for Thread~2, we need to apply rule {\sc PR-ReadNew} for the first load giving us 
\[
\color{mycolor}{\inikw \causal \prm{1}{x}{5} \causal \prm{1}{y}{1} \causal \barrx{a}{y}}
\]
\noindent as assertion after statement 4. For the subsequent load we can then apply proof rule {\sc PR-ReadEx}. Note that we could also construct a local proof outline having the load in line 4 read from $\inikw$. This would then give us the two event structures of Example~\ref{ex1} which we, however, have already seen to not allow for an interference free configuration of their parallel composition.  

Finally, we have a proof rule for parallel composition which combines local event structures when they are synchronisable and the resulting configuration is interference free.   
 \[ \inference[\textsc{Parallel}]{\forall i \in \{1, \ldots, n\}.\ {\color{mycolor}[\Ini ]} \  S_i \ {\color{mycolor}[\cE_i ]}  \qquad \qquad \cE_1, \ldots, \cE_n \text{ synchronisable } \\ \text{ interference free } C \in \conf(\cE_1 || \ldots || \cE_n)  }
             { {\color{mycolor}[\Ini ]} \ S_1 || \ldots || S_n \ {\color{mycolor}[\cE_C ]}} \] 
             
\noindent This rule ensures that (1) all synchronization constraints are met (i.e., the promises that threads want to read from have been made)
and (2) there is a configuration $C$ of the combined event structure which is interference free. 


\begin{example} Next, we give a complete proof outline for the message passing litmus test {\em without} a barrier in the writing thread.  We see that here message passing is not
  guaranteed (i.e., reading $y$ to be 1 does not ``pass the message'' that $x$ is 5 from Thread~1 to 2) and we can actually reach a final state with
  ${\color{light-green} \big(a=1 \wedge b=0\big)}$ (as calculated by $\sem{a}_\cE$ and $\sem{b}_\cE$ taking the value of the last fulfill or $\inikw$ event prior to the last bar event on $a$ and $b$, respectively). 
    \begin{center}
    \scalebox{0.9}{
  \begin{tabular}[b]{@{}c@{}} 
   
  $\begin{array}{l@{\ \ }||@{\ \ }l}
    \begin{array}[t]{l}
     \textbf{Thread } 1 
     \\
      {\color{mycolor} \big[ \inikw \big] }  \\ 
     1: \store{x}{5} ;
     \\
    {\color{mycolor}\big[ \inikw \causal \ff{1}{x}{5} \big]  } \\
     2: \store{y}{1} ;
     \\ 
     {\color{mycolor}\left[ \begin{tikzpicture}[anchor=center,baseline,node distance=.4cm]
     \node (1) {$\inikw$};
     \node (2) [above right = -.1cm and 0.5cm of 1] {$\ff{1}{x}{5}$};
     \node (3) [below right = -.1cm and 0.5cm of 1] {$\ff{1}{y}{1}$};
     \path[->>] (1) edge (2);
     \path[->>] (1) edge (3);
   \end{tikzpicture}  
     \right]  }  
     \end{array}
    & 
    \begin{array}[t]{l}
     \textbf{Thread } 2 
     \\
     {\color{mycolor}\big[ \inikw \big] } \\
     4: \load{a}{y} ;
     \\
      {\color{mycolor}\big[ \inikw \causal \prm{1}{y}{1} \causal \barrx{a}{y}  \big] }
    \\
     5:  \load{b}{x}; 
     \\
    {\color{mycolor} \left[ \begin{tikzpicture}[anchor=center,baseline,node distance=.5cm]
     \node (1) {$\inikw$};
     \node (2) [above right = -.1cm and .5cm of 1] {$\prm{1}{y}{1}$};
     \node (2a) [right = of 2] {$\barrx{a}{y}$};
     \node (4) [below right = -.1cm and .5cm of 1] {$\barrx{b}{x}$};
     \path[->>,draw=red] (1) -- node[midway,sloped,above] {\color{red} \small{$\{x\}$}} (2.west);
     \path[->>,draw=mycolor] (2) -- (2a); 
     \path[->>,draw=mycolor] (1) -- (4.west);
   \end{tikzpicture}  \right] } 
     \end{array}      
     \end{array}$ \\
    ${\color{mycolor} \hspace*{-4em}\left[\begin{tikzpicture}[anchor=center,baseline,pos=0.5,node distance=.5cm]
     \node (1) {$\cE: \inikw$};
     \node (2) [above right = -.1cm and .5cm of 1] {$\ff{1}{y}{1}$};
     \node (2a) [right of =2,xshift=1.5cm] {$\barrx{a}{y}$};
     \node (4) [right = of 1] {$\barrx{b}{x}$};
     \node (3) [below right = -.1cm and .5cm of 1] {$\ff{1}{x}{5}$};
     \path[->>,draw=red] (1) -- node[midway,sloped,above] {\color{red} \small{$\{x\}$}} (2.west);
     \path[->>,draw=mycolor] (2) -- (2a); 
     \path[->>,draw=mycolor] (1) -- (3.west);
     \path[->>,draw=mycolor] (1) -- (4.west);
   \end{tikzpicture} \right] } $ \\
   \end{tabular} }
\end{center}

\end{example}

\subsection{Soundness and Completeness}

Due to lack of space, we can neither discuss soundness nor completeness of our proof calculus in some more detail here. Proofs can be found in the appendix.  

Soundness requires proving all local proof rules correct plus showing the correctness of rule \textsc{Parallel} as of Theorem~\ref{thm:soundness} below.  
It states that whenever we find an interference free configuration in the parallel composition of synchronizable event structures in a locally sound proof outline,  all thread states and memory contents matching this configuration are actually reachable by the concurrent program.

\begin{restatable}{theorem}{soundness}
\label{thm:soundness}
Let ${\color{mycolor}[\Ini ]} \  S_i \ {\color{mycolor}[\cE_i ]}$, $i \in \{1, \ldots, n \}$,  be proof outlines of threads $\tau_1$ to $\tau_n$ such that $\cE_1$ to $\cE_n$ are synchronizable and 
let $\vec{T_0}$ be an initial thread pool with $\vec{T_0}(\tid_i) = (S_i,ts_\inikw)$ and  $M_0 = M_\inikw$. 

Then, for every thread pool $\vec{T}$   with $\vec{T}(\tid_i) = (\skipst,ts_i)$,    interference free configuration $C \in \conf(\cE_1 || \ldots || \cE_n)$  and memory $M$   such that $\langle ts_i,M \rangle \in \sem{\cE_C}$, $tids(M) = \{\tau_1, \ldots, \tau_n\}$ and $ts_i.prom = \emptyset$, $i \in \{1, \ldots, n \}$, we have $ \langle \vec{T_0}, M_0 \rangle \trans {}^* \langle \vec{T},M \rangle$. 
\end{restatable}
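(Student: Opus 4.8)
The plan is to reduce the theorem to the "promises first" normal form and then perform an induction that executes the promises, then replays the configuration $C$ event by event. By Lemma~\ref{lemma:promises-first} (the analogue of Pulte et al.'s result stated just after the operational semantics), it suffices to exhibit a run in which all {\sc Promise} steps happen first. So the first step is to read off from $\sem{\cE_C}$ and the memory $M$ exactly which promise messages must be appended: for each thread $\tau_i$, and each timestamp $t \in M_{\tau_i}$, the message $M(t)$ is the promise to be made, and the mapping $\psi$ from condition {\bf (1)} of the "matches" definition tells us in which order to append them so that $M$ is reconstructed position by position. Since new messages are always appended at the end (rule {\sc Promise}), and $\psi$ is order-preserving on flows, I would process timestamps in increasing order of $t$; a subtlety is that several promises (for different threads, or the same thread for different locations) may be unconstrained by $\causal$, but any linearisation consistent with $\prec$ (from memory-consistent linearizability) works, and the definition of {\sc Promise} does not constrain $w$ at all, so each append is always enabled. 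After this phase we reach a state $\langle \vec{T_0}, M\rangle$ with the full memory in place and every thread still at its initial local state $ts_\inikw$, but with $ts_i.prom$ equal to $M_{\tau_i}$ (all promised, none fulfilled) — modulo the certifiability side-condition, discussed below.

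The second step is the main induction: I would linearise the configuration $\cE_C = (E_C, \causal_C, \dots)$ into a sequence of events respecting $\causal_C^+$ (possible since $C$ is cycle-free), and show by induction on prefixes of this linearisation that each prefix $C'$ is reachable in the sense that $\langle \vec{T_0}, M\rangle \trans{}^*\langle \vec{T'}, M\rangle$ where $\vec{T'}$ is the thread pool in which each thread $\tau_i$ has executed exactly the program fragment corresponding to its events in $C'$ and is in the thread state determined by $\sem{(\cE_C)_{C'}}$ (using the same $\psi$ restricted to $C'$). The memory $M$ never changes in this phase because fulfills only rewrite the thread-local promise set and views, not $M$. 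The core of this step is a case analysis on the label of the next event $e$: for a fulfill event $\ff{\tau_i}{x}{\val}$, I must check that the {\sc Fulfill} premises hold — i.e. $ts.\vwnew \sqcup ts.v_C \sqcup ts.coh(x) \sqcup v_\rv < \psi(e)$ — and this is exactly where the $\oplus$-operators were designed to pay off: the flows introduced by $\cE \oplus \ff{\tau}{x}{\val}$ (from the last $\Act^x$, $\fence_\tau$, $\test{\tau}{\cdot}$ events) together with clause {\bf (3)} of "matches" guarantee that all those views are $\psi$-images of events $\causal_C$-below $e$, hence by "preserves flows" are strictly less than $\psi(e)$. For a read event $\prm{\tau'}{x}{\val}$ (which in the configuration is really a synchronised pair, but projects to thread $\tau_i$ as a read), I must exhibit a timestamp to read from, namely $\psi$ of the corresponding fulfill/$\inikw$ event; the "no write to $x$ in between" premise of {\sc Read} follows from "preserves memory constraints" {\bf (5)} together with the location restriction $\rdflow{\{x\}}$ that $\addc_e^x$ inserted — and this is precisely the content illustrated in Example~\ref{ex1}. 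Register, assume, fence and skip steps are routine bookkeeping against clauses {\bf (3)} and {\bf (4)}.

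The third step is discharging the certifiability side-condition attached to {\sc Program Step}: every intermediate $\langle T', M'\rangle$ appearing in the run must be certifiable, i.e. the remaining program of that thread can, on its own, be run to empty its promise set. Here I would use the tail of the very linearisation I am building: from any prefix, the suffix of events for thread $\tau_i$ describes a sequential execution of $\tau_i$'s remaining program that fulfills all of $\tau_i$'s outstanding promises (those are exactly $M_{\tau_i}\setminus\psi(\Ff_{\tau_i}((\cE_C)_{C'}))$, which the suffix covers by thread-covering and by clause {\bf (2)}). This needs a small "thread-local soundness" lemma: projecting the reachability argument of step two to a single thread in isolation, using that $\cE_i$ is sequential along the relevant events and that the premises of {\sc Read}/{\sc Fulfill} only ever reference that thread's own views. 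I expect this certifiability obligation to be the main obstacle, because it forces the induction hypothesis to be strengthened: it is not enough to reach each prefix state, one must reach it *via a run each of whose steps is certifiable*, so the induction must carry a witnessing continuation for every thread at every step. Once the invariant is set up to include "the chosen linearisation suffix certifies the current state", the remaining work is the case analysis above, which is mechanical. Finally, at the end of the linearisation every thread has executed all its events, hence (by thread-covering) its whole program, reaching $(\skipst, ts_i)$ with $ts_i.prom = M_{\tau_i}\setminus\psi(\Ff_{\tau_i}(\cE_C)) = \emptyset$ by hypothesis, and $\langle ts_i, M\rangle \in \sem{\cE_C}$, which is the desired conclusion.
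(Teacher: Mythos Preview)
Your plan diverges from the paper's proof in its second step, and that divergence hides a genuine gap. The paper does not linearise $\cE_C$ globally. Instead it first invokes a separate proposition (Proposition~\ref{prop:localize}) to pass from $\langle ts_i,M\rangle\in\sem{\cE_C}$ down to $\langle ts_i,M\rangle\in\sem{\cE_i}$ for each thread, and then chains the per-rule soundness propositions (one for each of \textsc{PR-Write}, \textsc{PR-ReadEx}, \textsc{PR-ReadNew}, \textsc{PR-Fence}, \textsc{PR-Register}, \textsc{PR-Assume}) backwards along the \emph{proof derivation} of ${\color{mycolor}[\Ini]}\,S_i\,{\color{mycolor}[\cE_i]}$ to obtain a thread-local run. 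The global run is then simply: all promises in memory order, then thread~$1$ to completion, then thread~$2$, and so on. Certifiability is immediate because each thread's own suffix empties its promise set.

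The problem with linearising $\cE_C$ by $\causal_C^+$ is that $\causal_C$ does \emph{not} record program order. Take the load-buffering $\cE_1$ in \cref{fig:load-buffering}: the events $\barrx{a}{y}$ (from line~1) and $\ff{1}{x}{1}$ (from line~2) are $\causal_1$-incomparable, so a $\causal_C$-compatible linearisation may place the fulfill before the read/bar. But after the promise phase the operational semantics executes statements strictly in program order; there is no rule that lets you run line~2 before line~1. Thus ``the program fragment corresponding to its events in $C'$'' need not be a \emph{prefix} of $S_i$, and your induction step would have no operational transition to fire. The missing ingredient is exactly the proof derivation, which is the only place program order lives; once you use it you are back to the paper's factoring into per-rule local soundness lemmas plus the localisation proposition. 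A secondary issue you gloss over is that a synchronised event $(e_i,f_j)\in C$ corresponds to \emph{two} operational steps (a read in $\tau_i$ and a fulfill in $\tau_j$), not one, so a linearisation of $\cE_C$ is not even in one-to-one correspondence with operational steps; the paper's thread-by-thread schedule avoids this bookkeeping entirely.
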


Our second main result is the {\em completeness} of the proof calculus: whenever there is an execution of a concurrent program, our proof calculus allows to show the reachability of its final state. More specifically, for every trace of a concurrent program we find local proof outlines with synchronizable event structures and an interference free configuration describing the final state of the trace.

\begin{restatable}{theorem}{completeness}
  \label{th:complete}
Let $\langle \vec{T_0}, M_0 \rangle \trans {}^* \langle \vec{T},M \rangle$ be a trace of a concurrent program over threads $\tau_1, \ldots, \tau_n$ such that $\vec{T_0}$ is the initial thread pool with $\vec{T_0}(\tid_k) = (S_k,ts_\inikw)$,   $M_0 = M_\inikw$ and 
$\vec{T}$ the final thread pool with $\vec{T}(\tid_k) = (\skipst,ts_k)$ and $ts_k.prom = \emptyset$, $k\in \{1, \ldots, n\}$. 

Then there are local proof outlines ${\color{mycolor}[\Ini ]} \  S_k \ {\color{mycolor}[\cE_k ]}$   of threads $\tau_k$, $k\in \{1, \ldots, n\}$, such that $\cE_1$ to $\cE_n$ are synchronizable and there exists an interference free configuration $C \in \conf(\cE_1 || \ldots || \cE_n)$   with $\langle \vec{T},M \rangle \in \sem{\cE_C}$. 
\end{restatable}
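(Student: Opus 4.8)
The plan is to normalise the trace, project it thread-wise, rebuild the local proof outlines by induction, and finally assemble the interference-free configuration of the parallel composition. First I would invoke \cref{lemma:promises-first} so that, without loss of generality, the given trace $\langle \vec{T_0}, M_0\rangle \trans {}^* \langle \vec{T}, M\rangle$ is \emph{promises-first}: it factors as $\langle \vec{T_0}, M_0\rangle \trans {}^* \langle \vec{T}', M\rangle \trans {}^* \langle \vec T, M\rangle$ with the first phase consisting solely of {\sc Promise} steps (reaching some $\vec{T}'$ that only differs from $\vec{T_0}$ in the promise sets) and the second phase $\rho$ containing none. Since only {\sc Promise} modifies the contents of memory ({\sc Read} leaves $M$ unchanged and {\sc Fulfill} only removes a timestamp from a promise set), $M$ is fixed throughout $\rho$, and for each $k$ the set $M_{\tau_k}$ is exactly the set of timestamps that $\tau_k$ must fulfil during $\rho$ (as $ts_k.prom=\emptyset$). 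Projecting $\rho$ onto $\tau_k$ yields a subsequence $\rho_k$ which is a genuine sequential run of $S_k$ against the fixed memory $M$ ending in $ts_k$ — genuine because the program-construct rules are the standard ones and $R(\tau_k)\cap R(\tau_j)=\emptyset$ for $k \neq j$.

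Next I would build ${\color{mycolor}[\Ini]}\ S_k\ {\color{mycolor}[\cE_k]}$ by induction along $\rho_k$, maintaining the invariant that after a prefix executing $S'_k$ and reaching thread state $ts$, the event structure $\cE$ is conflict-free, is derivable by the rules of \cref{fig:local} as ${\color{mycolor}[\Ini]}\ S'_k\ {\color{mycolor}[\cE]}$, and $\langle \vec{ts}, M\rangle \in \sem{\cE}$ with a witness $\psi_k$ sending each fulfill/read/$\inikw$ event to its timestamp in $M$. Each {\sc Fence}, {\sc Register}, {\sc Assume}, {\sc Fulfill} step is matched by {\sc PR-Fence}, {\sc PR-Registers}, {\sc PR-Assume}, {\sc PR-Write}/{\sc PR-WriteR}, checking that the view-update of the operational rule is mirrored by the flow edges added by the corresponding $\oplus$-operator (e.g.\ the $v_C<t$ side condition of {\sc Fulfill} corresponds to the flow from the last test event introduced by $\cE\oplus\ff{\tid}{x}{\val}$, and likewise for $coh$, $\vwnew$, $\vwold$). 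For a {\sc Read} at timestamp $t$: if $t$ is the $\psi_k$-timestamp of the last $\Act^x$-event of $\cE$ — in particular when $\tau_k$ reads its own already-fulfilled write, or a foreign promise already recorded — apply {\sc PR-ReadEx}, the added location restriction ${\color{red}\{x\}}$ recording exactly the ``no $x$-write between $t$ and $\vread\sqcup coh(x)$'' premise; otherwise $t$ is a not-yet-recorded promise of some $\tau'\neq\tau_k$, and apply {\sc PR-ReadNew} with a sequential $\cE'$ consisting of a read event for $\tau'$'s promise at $t$ together with read events for precisely those earlier memory entries that the view mechanism forces $\tau_k$ to see when $\vread$ moves to $t$, re-establishing the invariant.

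Finally I would assemble the configuration. Every read event $e\in\Pr_{\tau'}^x(\cE_k)$ labelled $\prm{\tau'}{x}{\val}$ witnesses a promise $t=\psi_k(e)\in M_{\tau'}$; since $ts_{\tau'}.prom=\emptyset$, clause {\bf (2)} of ``matches'' for $\cE_{\tau'}$ yields a fulfill event $d\in\Ff_{\tau'}^x(\cE_{\tau'})$ with $\psi_{\tau'}(d)=t$ and label $\ff{\tau'}{x}{\val}=\overline{\lfun(e)}$, so $e$ has a synchronisation partner and $\cE_1,\ldots,\cE_n$ are synchronisable. Let $C\in\conf(\cE_1 || \ldots || \cE_n)$ contain: the synchronised event for each such matched read/fulfill pair, every non-read and non-$\inikw$ event of each $\cE_k$ lifted unsynchronised, and the $\inikw$ tuple. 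One checks $C$ is cycle-free, conflict-free (composition conflicts only relate differently-paired copies of a single underlying event, none of which coexist in $C$) and left-closed up to conflicts; it is thread-covering ($\pi_i(E_C)=E_i$) and contains no unsynchronised read; and it is memory-consistent linearizable with $\prec$ the timestamp order of $M$ restricted to $\Act^x(E_C)$, where $\mathord{\causal_C}^+\cap(\Act^x(E_C)\times\Act^x(E_C))\subseteq\prec$ is ``preserves flows'' and the location-restriction clause is ``preserves memory constraints'', both inherited from the $\psi_k$. Then $\psi=\bigcup_k\psi_k$ witnesses $\langle \vec T, M\rangle\in\sem{\cE_C}$, clauses {\bf (2)}--{\bf (4)} for each $\tau_k$ transferring from $\cE_k$ because $\cE_C$ restricted to the $\tau_k$-events carries the same labels and timestamps as $\cE_k$.

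I expect the {\sc PR-ReadNew} step of the local induction to be the main obstacle: one must show the sequence of read events added there can always be chosen to be a legal sequential event structure over foreign read labels ending in $\prm{\tau'}{x}{\val}$, to reproduce the operational updates of $\vread$, $coh$ and the register views exactly, and to keep all five ``matches'' clauses true — especially clause (5), which is precisely what later powers the memory-consistent linearizability of $C$. Pinning down exactly which earlier memory entries must become read events, and doing so coherently across all reads of all threads so that the single global order $\prec$ exists, is the technical heart; everything else is bookkeeping aligning the operational rules with the $\oplus$-operators and $\sem{\cdot}$.
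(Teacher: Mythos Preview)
Your overall architecture matches the paper's proof: normalise via \cref{lemma:promises-first}, build the local event structures by induction along the (thread-projected) non-promise suffix, and assemble an interference-free configuration using the timestamp order on $M$ as $\prec$. You have also correctly located the crux in the {\sc PR-ReadNew} step.

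Where your plan diverges from the paper is precisely in how $\cE'$ is chosen in that step. Your description --- ``read events for precisely those earlier memory entries that the view mechanism forces $\tau_k$ to see when $\vread$ moves to $t$'' --- is backward-looking, and there is no such forcing: the {\sc Read} rule lets the thread pick any $t$ with no $x$-write between $t$ and $\vread\sqcup coh(x)$, so raising $\vread$ to $t$ does not by itself mandate recording any other entries. The paper's construction is instead \emph{forward-looking over the whole trace}: at a read from timestamp $t$ it inserts, in timestamp order, read events for exactly those timestamps $t'\le t$ that $\tau_k$ will read from \emph{later} in $tr$ (and that are not $\tau_k$'s own fulfills) and are not already present. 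The point is that {\sc PR-ReadNew} places the new chain flow-after everything in $\cE$; if a later read needs a smaller timestamp that was not pre-inserted now, adding it then would put a small-timestamp event flow-after a larger one and destroy the global acyclicity argument for $C$. To make this work the paper carries four explicit inductive invariants you do not name: the event structure is \emph{timestamp closed} (every read timestamp below a present one is present), \emph{timestamp ordered} (flow never goes from larger to smaller timestamps), has a per-location \emph{modification order} (total order on $\Act^x$ events), and every read chain ends in a bar. These are exactly what is needed to justify the {\sc PR-ReadEx}/{\sc PR-ReadNew} case split and, later, cycle-freeness and memory-consistent linearisability of $C$.

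One smaller point on assembling $C$: the paper builds, for each timestamp $t\in dom(M)\setminus\{0\}$, a \emph{single} tuple $e_t$ whose $k$-th component is the (unique) read or fulfill event of $\cE_k$ at $t$, or $*$. Your ``synchronised event for each matched read/fulfill pair'' plus ``every non-read \ldots\ lifted unsynchronised'' would, as stated, both under-synchronise when several threads read the same $t$ and over-populate $C$ with lifted fulfills that conflict with their synchronised copies; the per-timestamp tuple construction avoids both issues at once.
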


 \section{Related Work}

The first semantics of weak memory models employing {\em promises} has been proposed by Kang et al.~in 2017~\cite{KangHLVD17} for building an operational semantics which allows modelling of read-write reordering while at the same time disallows out-of-thin-air behaviours. 
Our semantics here is a slightly simplified version of the promising semantics of ARMv8 given by Pulte et al.~\cite{PultePKLH19}. In particular, like~\cite{PultePKLH19} all program traces can be reordered so that the promise steps are all at the beginning  which is a key property required for the soundness of our proof calculus.  

There are already several proposals for program logics for weak memory
e.g.~\cite{DBLP:journals/tocl/DohertyDDW22,DBLP:conf/oopsla/VafeiadisN13,DBLP:conf/fm/CoughlinWS21,DBLP:conf/vstte/Ridge10,DBLP:conf/vmcai/DokoV16,DBLP:conf/icalp/LahavV15,DBLP:conf/popl/AlglaveC17,DBLP:conf/ppopp/DohertyDWD19,DalvandiDDW19}.
The only one explicitly dealing with promises in the semantics is the
proposal of Svendsen et
al.~\cite{DBLP:conf/esop/SvendsenPDLV18}. 
They develop a safety proof calculus whereas we are interested in
reachability.  Their logic furthermore has to deal with promises
occurring at any program step (as they show soundness with respect to
the promising semantics of~\cite{KangHLVD17}), whereas we rely on all
promises being made at the
beginning.


Partial order models of concurrency have already been used for giving
the semantics of memory
models~\cite{JagadeesanJR20,JeffreyRBCKP22,10.1145/3290383}, but not
for reasoning. \new{Wright et al~\cite{DBLP:conf/fm/WrightBD21} take
  the approach of using a {\em semantic dependency} relation, which is
  a partial order generated through an event structure representation
  of a C/C++ program~\cite{DBLP:conf/esop/PaviottiCPWOB20}, which is a
  partial order over a thread's execution. An Owicki-Gries logic is
  provided to reason directly over such partial orders.  }
Incorrectness logic as used for proving reachability properties of {\em sequential} programs has been introduced by O'Hearn~\cite{OHearn20}, with a predecessor approach with (almost) the same principles by de Vries and Koutavas~\cite{DBLP:conf/sefm/VriesK11}. The first extension of incorrectness logic to concurrent programs has been proposed by Raad et al.~in the form of an incorrectness separation logic~\cite{RaadBDO22} which is however not compositional.

\new{ Colvin~\cite{DBLP:conf/sefm/Colvin21} defines a semantics based
  on a reordering relation for several hardware memory models, which
  is then lifted to a Hoare calculus. This is then rephrased into a
  reachability property by defining triples
  $\langle\!\langle p \rangle\!\rangle ~s~\langle\!\langle q
  \rangle\!\rangle = \neg \{p\}~s~\{\neg q\}$, which states that it is
  possible for $s$ to reach $q$ if execution starts in a state
  satisfying $p$. Note that this is weaker than O'Hearn's notion of
  incompleteness, which states that all states satisfying $q$ are reachable
  from an execution starting in a state satisfying $p$.}


 
\section{Conclusion}

In this paper, we have proposed a reachability (incorrectness) logic for concurrent programs running on weak memory models. 
The reasoning technique is based on assertions which are event structures abstractly describing the contents of memory and the views of all threads. We have proven soundness and completeness of the proof calculus, and have demonstrated its applicability by proving the outcomes of some standard litmus tests to be reachable. 


\smallskip
\noindent {\bf Acknowledgements.} We thank Christopher Pulte for clarifying one aspect of the Register rule of ARMv8's operational semantics to us and Sadegh Dalvandi for initial discussions on the semantics.

\bibliographystyle{splncs04}
\bibliography{references}

\begin{thebibliography}{10}
\providecommand{\url}[1]{\texttt{#1}}
\providecommand{\urlprefix}{URL }
\providecommand{\doi}[1]{https://doi.org/#1}

\bibitem{DBLP:conf/popl/AlglaveC17}
Alglave, J., Cousot, P.: {Ogre and Pythia: an invariance proof method for weak
  consistency models}. In: Castagna, G., Gordon, A.D. (eds.) {POPL}. pp. 3--18.
  {ACM} (2017). \doi{10.1145/3009837.3009883},
  \url{https://doi.org/10.1145/3009837.3009883}

\bibitem{DBLP:conf/popl/BattyOSSW11}
Batty, M., Owens, S., Sarkar, S., Sewell, P., Weber, T.: Mathematizing {C++}
  concurrency. In: Ball, T., Sagiv, M. (eds.) {POPL}. pp. 55--66. {ACM} (2011).
  \doi{10.1145/1926385.1926394}, \url{https://doi.org/10.1145/1926385.1926394}

\bibitem{esop2022}
Bila, E.V., Dongol, B., Lahav, O., Raad, A., Wickerson, J.: {View-Based
  Owicki-Gries Reasoning for Persistent x86-TSO}. In: ESOP. Lecture Notes in
  Computer Science, vol. 13240, pp. 234--261. Springer (2022)

\bibitem{DBLP:conf/pldi/BoehmA08}
Boehm, H., Adve, S.V.: Foundations of the {C++} concurrency memory model. In:
  Gupta, R., Amarasinghe, S.P. (eds.) PLDI. pp. 68--78. {ACM} (2008).
  \doi{10.1145/1375581.1375591}, \url{https://doi.org/10.1145/1375581.1375591}

\bibitem{BoudolC87}
Boudol, G., Castellani, I.: On the semantics of concurrency: Partial orders and
  transition systems. In: Ehrig, H., Kowalski, R.A., Levi, G., Montanari, U.
  (eds.) TAPSOFT'87. Lecture Notes in Computer Science, vol.~249, pp. 123--137.
  Springer (1987). \doi{10.1007/3-540-17660-8\_52},
  \url{https://doi.org/10.1007/3-540-17660-8\_52}

\bibitem{BoudolC94}
Boudol, G., Castellani, I.: Flow models of distributed computations: Three
  equivalent semantics for {CCS}. Inf. Comput.  \textbf{114}(2),  247--314
  (1994). \doi{10.1006/inco.1994.1088},
  \url{https://doi.org/10.1006/inco.1994.1088}

\bibitem{CastellaniZ97}
Castellani, I., Zhang, G.: Parallel product of event structures. Theor. Comput.
  Sci.  \textbf{179}(1-2),  203--215 (1997).
  \doi{10.1016/S0304-3975(96)00104-1},
  \url{https://doi.org/10.1016/S0304-3975(96)00104-1}

\bibitem{10.1145/3290383}
Chakraborty, S., Vafeiadis, V.: Grounding thin-air reads with event structures.
  Proc. ACM Program. Lang.  \textbf{3}(POPL) (jan 2019). \doi{10.1145/3290383},
  \url{https://doi.org/10.1145/3290383}

\bibitem{DBLP:conf/sefm/Colvin21}
Colvin, R.J.: Parallelized sequential composition and hardware weak memory
  models. In: Calinescu, R., Pasareanu, C.S. (eds.) {SEFM} 2021. Lecture Notes
  in Computer Science, vol. 13085, pp. 201--221. Springer (2021).
  \doi{10.1007/978-3-030-92124-8\_12},
  \url{https://doi.org/10.1007/978-3-030-92124-8\_12}

\bibitem{DBLP:conf/fm/CoughlinWS21}
Coughlin, N., Winter, K., Smith, G.: {Rely/Guarantee Reasoning for Multicopy
  Atomic Weak Memory Models}. In: Huisman, M., Pasareanu, C.S., Zhan, N. (eds.)
  {FM}. Lecture Notes in Computer Science, vol. 13047, pp. 292--310. Springer
  (2021). \doi{10.1007/978-3-030-90870-6\_16},
  \url{https://doi.org/10.1007/978-3-030-90870-6\_16}

\bibitem{DalvandiDDW19}
Dalvandi, S., Doherty, S., Dongol, B., Wehrheim, H.: {Owicki-Gries Reasoning
  for {C11} {RAR}}. In: Hirschfeld, R., Pape, T. (eds.) {ECOOP}. LIPIcs,
  vol.~166, pp. 11:1--11:26. Schloss Dagstuhl - Leibniz-Zentrum f{\"{u}}r
  Informatik (2020). \doi{10.4230/LIPIcs.ECOOP.2020.11},
  \url{https://doi.org/10.4230/LIPIcs.ECOOP.2020.11}

\bibitem{DBLP:journals/corr/abs-2004-02983}
Dalvandi, S., Dongol, B., Doherty, S., Wehrheim, H.: Integrating {Owicki-Gries}
  for {C11}-style memory models into {Isabelle/HOL}. J. Autom. Reason.
  \textbf{66}(1),  141--171 (2022). \doi{10.1007/s10817-021-09610-2},
  \url{https://doi.org/10.1007/s10817-021-09610-2}

\bibitem{DBLP:journals/tocl/DohertyDDW22}
Doherty, S., Dalvandi, S., Dongol, B., Wehrheim, H.: Unifying operational weak
  memory verification: An axiomatic approach. {ACM} Trans. Comput. Log.
  \textbf{23}(4),  27:1--27:39 (2022). \doi{10.1145/3545117},
  \url{https://doi.org/10.1145/3545117}

\bibitem{DBLP:conf/ppopp/DohertyDWD19}
Doherty, S., Dongol, B., Wehrheim, H., Derrick, J.: Verifying {C11} programs
  operationally. In: Hollingsworth, J.K., Keidar, I. (eds.) PPoPP. pp.
  355--365. {ACM} (2019). \doi{10.1145/3293883.3295702},
  \url{https://doi.org/10.1145/3293883.3295702}

\bibitem{DBLP:conf/vmcai/DokoV16}
Doko, M., Vafeiadis, V.: A program logic for {C11} memory fences. In:
  Jobstmann, B., Leino, K.R.M. (eds.) {VMCAI}. Lecture Notes in Computer
  Science, vol.~9583, pp. 413--430. Springer (2016).
  \doi{10.1007/978-3-662-49122-5\_20},
  \url{https://doi.org/10.1007/978-3-662-49122-5\_20}

\bibitem{DBLP:conf/popl/FlurGPSSMDS16}
Flur, S., Gray, K.E., Pulte, C., Sarkar, S., Sezgin, A., Maranget, L., Deacon,
  W., Sewell, P.: Modelling the {ARMv8} architecture, operationally:
  concurrency and {ISA}. In: Bod{\'{\i}}k, R., Majumdar, R. (eds.) {POPL}. pp.
  608--621. {ACM} (2016). \doi{10.1145/2837614.2837615},
  \url{https://doi.org/10.1145/2837614.2837615}

\bibitem{GlabbeekG04}
van Glabbeek, R.J., Goltz, U.: Well-behaved flow event structures for parallel
  composition and action refinement. Theor. Comput. Sci.  \textbf{311}(1-3),
  463--478 (2004). \doi{10.1016/j.tcs.2003.10.031},
  \url{https://doi.org/10.1016/j.tcs.2003.10.031}

\bibitem{DBLP:journals/cacm/Hoare69}
Hoare, C.A.R.: An axiomatic basis for computer programming. Commun. {ACM}
  \textbf{12}(10),  576--580 (1969)

\bibitem{JagadeesanJR20}
Jagadeesan, R., Jeffrey, A., Riely, J.: Pomsets with preconditions: a simple
  model of relaxed memory. Proc. {ACM} Program. Lang.  \textbf{4}({OOPSLA}),
  194:1--194:30 (2020). \doi{10.1145/3428262},
  \url{https://doi.org/10.1145/3428262}

\bibitem{JeffreyRBCKP22}
Jeffrey, A., Riely, J., Batty, M., Cooksey, S., Kaysin, I., Podkopaev, A.: The
  leaky semicolon: compositional semantic dependencies for relaxed-memory
  concurrency. Proc. {ACM} Program. Lang.  \textbf{6}({POPL}),  1--30 (2022).
  \doi{10.1145/3498716}, \url{https://doi.org/10.1145/3498716}

\bibitem{DBLP:conf/ecoop/KaiserDDLV17}
Kaiser, J., Dang, H., Dreyer, D., Lahav, O., Vafeiadis, V.: Strong logic for
  weak memory: Reasoning about release-acquire consistency in iris. In:
  M{\"{u}}ller, P. (ed.) {ECOOP}. LIPIcs, vol.~74, pp. 17:1--17:29. Schloss
  Dagstuhl - Leibniz-Zentrum f{\"{u}}r Informatik (2017).
  \doi{10.4230/LIPIcs.ECOOP.2017.17},
  \url{https://doi.org/10.4230/LIPIcs.ECOOP.2017.17}

\bibitem{KangHLVD17}
Kang, J., Hur, C., Lahav, O., Vafeiadis, V., Dreyer, D.: A promising semantics
  for relaxed-memory concurrency. In: Castagna, G., Gordon, A.D. (eds.) {POPL}.
  pp. 175--189. {ACM} (2017). \doi{10.1145/3009837.3009850},
  \url{https://doi.org/10.1145/3009837.3009850}

\bibitem{DBLP:conf/icalp/LahavV15}
Lahav, O., Vafeiadis, V.: {Owicki-Gries Reasoning for Weak Memory Models}. In:
  Halld{\'{o}}rsson, M.M., Iwama, K., Kobayashi, N., Speckmann, B. (eds.)
  {ICALP}. Lecture Notes in Computer Science, vol.~9135, pp. 311--323. Springer
  (2015). \doi{10.1007/978-3-662-47666-6\_25},
  \url{https://doi.org/10.1007/978-3-662-47666-6\_25}

\bibitem{DBLP:conf/pldi/LeeCPCHLV20}
Lee, S., Cho, M., Podkopaev, A., Chakraborty, S., Hur, C., Lahav, O.,
  Vafeiadis, V.: Promising 2.0: global optimizations in relaxed memory
  concurrency. In: Donaldson, A.F., Torlak, E. (eds.) PLDI. pp. 362--376. {ACM}
  (2020). \doi{10.1145/3385412.3386010},
  \url{https://doi.org/10.1145/3385412.3386010}

\bibitem{tutorial}
Maranget, L., Sarkar, S., Sewell, P.: {A Tutorial Introduction to the ARM and
  POWER Relaxed Memory Models}.
  \url{https://www.cl.cam.ac.uk/~pes20/ppc-supplemental/test7.pdf} (2012),
  accessed: May, 2022

\bibitem{CCS}
Milner, R.: Communication and concurrency. {PHI} Series in computer science,
  Prentice Hall (1989)

\bibitem{DBLP:conf/oopsla/NienhuisMS16}
Nienhuis, K., Memarian, K., Sewell, P.: An operational semantics for {C/C++11}
  concurrency. In: Visser, E., Smaragdakis, Y. (eds.) {OOPSLA}. pp. 111--128.
  {ACM} (2016). \doi{10.1145/2983990.2983997},
  \url{https://doi.org/10.1145/2983990.2983997}

\bibitem{OHearn20}
O'Hearn, P.W.: Incorrectness logic. Proc. {ACM} Program. Lang.
  \textbf{4}({POPL}),  10:1--10:32 (2020). \doi{10.1145/3371078},
  \url{https://doi.org/10.1145/3371078}

\bibitem{DBLP:journals/acta/OwickiG76}
Owicki, S.S., Gries, D.: An axiomatic proof technique for parallel programs
  {I}. Acta Inf.  \textbf{6},  319--340 (1976)

\bibitem{DBLP:conf/esop/PaviottiCPWOB20}
Paviotti, M., Cooksey, S., Paradis, A., Wright, D., Owens, S., Batty, M.:
  Modular relaxed dependencies in weak memory concurrency. In: M{\"{u}}ller, P.
  (ed.) {ESOP}. Lecture Notes in Computer Science, vol. 12075, pp. 599--625.
  Springer (2020). \doi{10.1007/978-3-030-44914-8\_22},
  \url{https://doi.org/10.1007/978-3-030-44914-8\_22}

\bibitem{PultePKLH19}
Pulte, C., Pichon{-}Pharabod, J., Kang, J., Lee, S.H., Hur, C.:
  {Promising-ARM/RISC-V: a simpler and faster operational concurrency model}.
  In: McKinley, K.S., Fisher, K. (eds.) {PLDI}. pp. 1--15. {ACM} (2019).
  \doi{10.1145/3314221.3314624}, \url{https://doi.org/10.1145/3314221.3314624}

\bibitem{RaadBDO22}
Raad, A., Berdine, J., Dreyer, D., O'Hearn, P.W.: Concurrent incorrectness
  separation logic. Proc. {ACM} Program. Lang.  \textbf{6}({POPL}),  1--29
  (2022). \doi{10.1145/3498695}, \url{https://doi.org/10.1145/3498695}

\bibitem{DBLP:journals/pacmpl/RaadLV20}
Raad, A., Lahav, O., Vafeiadis, V.: {Persistent Owicki-Gries reasoning: a
  program logic for reasoning about persistent programs on Intel-x86}. Proc.
  {ACM} Program. Lang.  \textbf{4}({OOPSLA}),  151:1--151:28 (2020).
  \doi{10.1145/3428219}, \url{https://doi.org/10.1145/3428219}

\bibitem{DBLP:conf/vstte/Ridge10}
Ridge, T.: A rely-guarantee proof system for x86-tso. In: Leavens, G.T.,
  O'Hearn, P.W., Rajamani, S.K. (eds.) {VSTTE}. Lecture Notes in Computer
  Science, vol.~6217, pp. 55--70. Springer (2010).
  \doi{10.1007/978-3-642-15057-9\_4},
  \url{https://doi.org/10.1007/978-3-642-15057-9\_4}

\bibitem{DBLP:conf/pldi/SarkarSAMW11}
Sarkar, S., Sewell, P., Alglave, J., Maranget, L., Williams, D.: Understanding
  {POWER} multiprocessors. In: Hall, M.W., Padua, D.A. (eds.) {PLDI}. pp.
  175--186. {ACM} (2011). \doi{10.1145/1993498.1993520},
  \url{https://doi.org/10.1145/1993498.1993520}

\bibitem{DBLP:journals/cacm/SewellSONM10}
Sewell, P., Sarkar, S., Owens, S., Nardelli, F.Z., Myreen, M.O.: {x86-TSO}: a
  rigorous and usable programmer's model for x86 multiprocessors. Commun. {ACM}
   \textbf{53}(7),  89--97 (2010). \doi{10.1145/1785414.1785443},
  \url{https://doi.org/10.1145/1785414.1785443}

\bibitem{DBLP:conf/esop/SvendsenPDLV18}
Svendsen, K., Pichon{-}Pharabod, J., Doko, M., Lahav, O., Vafeiadis, V.: A
  separation logic for a promising semantics. In: Ahmed, A. (ed.) {ESOP}.
  Lecture Notes in Computer Science, vol. 10801, pp. 357--384. Springer (2018).
  \doi{10.1007/978-3-319-89884-1\_13},
  \url{https://doi.org/10.1007/978-3-319-89884-1\_13}

\bibitem{DBLP:conf/oopsla/VafeiadisN13}
Vafeiadis, V., Narayan, C.: Relaxed separation logic: a program logic for {C11}
  concurrency. In: Hosking, A.L., Eugster, P.T., Lopes, C.V. (eds.) {OOPSLA}.
  pp. 867--884. {ACM} (2013). \doi{10.1145/2509136.2509532},
  \url{https://doi.org/10.1145/2509136.2509532}

\bibitem{DBLP:conf/sefm/VriesK11}
de~Vries, E., Koutavas, V.: {Reverse Hoare Logic}. In: Barthe, G., Pardo, A.,
  Schneider, G. (eds.) {SEFM}. Lecture Notes in Computer Science, vol.~7041,
  pp. 155--171. Springer (2011). \doi{10.1007/978-3-642-24690-6\_12},
  \url{https://doi.org/10.1007/978-3-642-24690-6\_12}

\bibitem{Winskel88}
Winskel, G.: An introduction to event structures. In: de~Bakker, J.W.,
  de~Roever, W.P., Rozenberg, G. (eds.) Linear Time, Branching Time and Partial
  Order in Logics and Models for Concurrency. Lecture Notes in Computer
  Science, vol.~354, pp. 364--397. Springer (1988). \doi{10.1007/BFb0013026},
  \url{https://doi.org/10.1007/BFb0013026}

\bibitem{DBLP:conf/fm/WrightBD21}
Wright, D., Batty, M., Dongol, B.: {Owicki-Gries} reasoning for {C11} programs
  with relaxed dependencies. In: Huisman, M., Pasareanu, C.S., Zhan, N. (eds.)
  {FM}. Lecture Notes in Computer Science, vol. 13047, pp. 237--254. Springer
  (2021). \doi{10.1007/978-3-030-90870-6\_13},
  \url{https://doi.org/10.1007/978-3-030-90870-6\_13}

\end{thebibliography}

\appendix
\section{Further examples}

\paragraph*{RRC -- Read Read Coherence}   
In the read read coherence litmus test the first two threads store different values to $x$, while the other two threads load $x$ into different registers. The outcome ${\color{red} \big(a=1 \wedge b=2 \wedge c=2 \wedge d=1\big)\FAIL}$ should not be allowed and indeed, the parallel composition of local proof outlines does not give us an interference-free configuration.  
Here, we also give the event names in the assertions.  

\medskip 
\begin{center} 
\hspace*{-30pt}\scalebox{0.95}{
\begin{tabular}[b]{@{}l@{}} 
   
  $\begin{array}{l@{\ \ }||@{\ \ }l||@{\ \ }l||}
    \begin{array}[t]{l}
     \textbf{Thread } 1 
     \\
      {\color{mycolor} \big[ e_\inikw: \inikw \big] }  \\ 
     \store{x}{1} ;
     \\
     {\color{mycolor} \big[ e_\inikw: \inikw \causal e_1:\ff{1}{x}{1}  \big] } \\
     
     \end{array}
    & 
    \begin{array}[t]{l}
     \textbf{Thread } 2 
     \\
      {\color{mycolor} \big[ e_\inikw:  \inikw \big] }  \\ 
     \store{x}{2} ;
     \\
     {\color{mycolor} \big[ e_\inikw: \inikw \causal e_2:\ff{2}{x}{2}  \big] } \\
     \end{array}
    \\
    \multicolumn{3}{l}{\ }
    \\
    \begin{array}[t]{l}
     \textbf{Thread } 3 
     \\
      {\color{mycolor} \big[ e_\inikw:  \inikw \big] }  \\ 
     \load{a}{x} ;
     \\
     {\color{mycolor} \big[e_\inikw:  \inikw \causal e_3:\prm{1}{x}{1} \causal e_4:\barrx{a}{x} \big] } \\
     \load{b}{x} ;
     \\
     {\color{mycolor} \left[  \begin{tikzpicture}[anchor=south,baseline,node distance=.6cm]
     \node (1) {$e_3:\prm{1}{x}{1}$} ;  
     \node (0) [left =0.3cm of 1] {$e_\inikw: \inikw$};
     \node (12) [right =0.3cm of 1] {$e_4:\barrx{a}{x}$};
     \node (2) [below of=1] {$e_5:\prm{2}{x}{2}$};
     \node (22)[below of=12] {$e_6:\barrx{b}{x}$};
     \path[->>] (0) edge (1);
     \path[->>] (12) edge (2);
     \path[->>] (1) edge (12);
     \path[->>] (2) edge (22);   
   \end{tikzpicture}  \right]  }  
   \\
     \end{array}
    & 
    \multicolumn{2}{@{}l}{
    \begin{array}[t]{l}
     \textbf{Thread } 4 
     \\
      {\color{mycolor} \big[ e_\inikw:  \inikw \big] }  \\ 
     \load{c}{x} ;
     \\
     {\color{mycolor} \big[e_\inikw:  \inikw \causal e_7:\prm{2}{x}{2} \causal e_8:\barrx{c}{x} \big] } \\
     \load{d}{x} ;
     \\
     {\color{mycolor} \left[  \begin{tikzpicture}[anchor=south,baseline,node distance=.6cm]
     \node (1) {$e_7:\prm{2}{x}{2}$} ;  
     \node (0) [left =0.3cm of 1] {$e_\inikw: \inikw$};
     \node (12) [right =0.3cm of 1] {$e_8:\barrx{c}{x}$};
     \node (2) [below of=1] {$e_9:\prm{1}{x}{1}$};
     \node (22)[below of=12] {$e_{10}:\barrx{d}{x}$};
     \path[->>] (0) edge (1);
     \path[->>] (12) edge (2);
     \path[->>] (1) edge (12);
     \path[->>] (2) edge (22);   
   \end{tikzpicture}  \right]  } 
   \\
     \end{array}}
     
   \end{array}$ 
   \\
   \qquad\qquad\qquad\qquad\qquad\qquad\qquad ${\color{red} \big(a=1 \wedge b=2 \wedge c=2 \wedge d=1 \big) \FAIL }$ 
   \end{tabular} 
   }
\end{center}

\medskip 
\noindent 
  Here, the parallel composition of the local assertion event structures contains a cycle

\begin{center} 
\scalebox{0.9}{\begin{tikzpicture}
\begin{scope}[every node/.style={rounded corners,thick}]
    \node (0)  {$(e_\inikw,e_\inikw,e_\inikw, e_\inikw)$};
    \node [right=of 0](1)  {$(*,*,*, e_8)$};
    \node [right=of 1] (12) {$(e_1, *,e_3,e_9)$};
    \node [right=of 12] (13)  {$(*,*, e_4,*)$};
    \node [below=of 1] (2)  {$(*,*,*,e_{10})$};
    \node [below=of 12] (22) {$(*,e_2, e_5,e_7)$};
    \node [below=of 13] (23) {$(*,*, e_6,*)$};
\end{scope}

\begin{scope}[
              every node/.style={fill=white},
              every edge/.style={->>,draw,thick}]
    \path (0) edge (1);
    \path (0) edge (2);
    \path (1) edge (12);
    \path (12) edge (13);
    \path (12) edge  (2);
    \path (13) edge  (22);
    \path (22) edge  (1);
    \path (22) edge  (23);
    \end{scope}
\end{tikzpicture}}
\end{center} 
\noindent and therefore has no interference free configuration.

\paragraph*{WRC -- Write-to-Read Causality}

The next litmus test is the WRC example, originally formulated by Boehm and Adve~\cite{DBLP:conf/pldi/BoehmA08} and appearing here in the form of \cite{tutorial}.

\begin{center} 
\scalebox{0.95}{
\begin{tabular}[b]{@{}l@{}} 
   
  $\begin{array}{l@{\ \ }||@{\ \ }l||@{\ \ }l}
    \begin{array}[t]{l}
     \textbf{Thread } 1 
     \\
      {\color{mycolor} \big[  \inikw \big] }  \\ 
     1: \store{x}{1} ; \\ 
      {\color{mycolor} \big[  \inikw \causal \ff{1}{x}{1}  \big] }  \\ 
     \end{array}
    & 
    \begin{array}[t]{l}
     \textbf{Thread } 2 
     \\
     {\color{mycolor} \big[  \inikw \big] }  \\ 
     2: \load{a}{x} ;
     \\
     {\color{mycolor} \big[  \inikw \causal \prm{1}{x}{1} \causal \barrx{a}{x} \big] }  \\ 
     3: \store{y}{1} ; \\
      {\color{mycolor} \left[  \begin{tikzpicture}[anchor=south,baseline,node distance=.6cm]
     \node (0) {$\inikw$};
     \node (1) [above right of=0,xshift=1.1cm] {$\prm{1}{x}{1}$} ;  
     \node (12) [right =0.3cm of 1] {$\barrx{a}{x}$};
     \node (2) [below right of=0,xshift=1.1cm] {$\ff{2}{y}{1}$};
     \path[->>] (0) edge (1);
     \path[->>] (1) edge (12);
     \path[->>] (0) edge (2);   
   \end{tikzpicture}  \right]  } 
   \\
     \end{array}
    & 
    \begin{array}[t]{l}
     \textbf{Thread } 3 
     \\
     {\color{mycolor} \big[  \inikw \big] }  \\  
     4: \load{b}{y} ;
     \\
      {\color{mycolor} \big[  \inikw \causal \prm{2}{y}{1} \causal \barrx{b}{y} \big] }  \\
     5: \load{c}{x} ;
     \\
      {\color{mycolor} \left[ \begin{tikzpicture}[anchor=west,baseline,node distance=.4cm]
     \node (1) {$\inikw$};
     \node (2) [above right of=1,xshift = 1.5cm] {$\prm{2}{y}{1}$};
     \node (2a) [right = of 2] {$\barrx{b}{y}$};
     \node (4) [below of =2] {$\barrx{c}{x}$};
     \path[->>,draw=red] (1) -- node[midway,sloped,above] {\color{red} \small{$\{x\}$}} (2.west);
     \path[->>,draw=mycolor] (2) -- (2a); 
     \path[->>,draw=mycolor] (2a) -- (4.east);
     \path[->>,draw=mycolor] (1) -- (4.west);
   \end{tikzpicture}  \right] } 
     \end{array}
     
   \end{array}$ 
   \\ \qquad \qquad \qquad \quad 
    ${\color{mycolor} \left[ \begin{tikzpicture}[anchor=south,baseline,pos=0.4,node distance=.6cm]
     \node (1) {$\inikw$};
     \node (2) [above right of=1,xshift = 1.5cm] 
     {$\ff{2}{y}{1}$};
     \node (2a) [right = of 2] {$\barrx{b}{y}$};
     \node (3) [below right of=1,xshift = 1.5cm,yshift=-.4cm] {$\ff{1}{x}{1}$};
     \node (3b) [right of=3, xshift=1.5cm] {$\barrx{a}{x}$};
     \node (4) [below of=2] {$\barrx{c}{x}$};
     \path[->>,draw=red] (1) -- node[midway,sloped,above] {\color{red} \small{$\{x\}$}} (2.west);
     \path[->>,draw=mycolor] (2) -- (2a); 
     \path[->>,draw=mycolor] (3) -- (3b); 
     \path[->>,draw=mycolor] (1) -- (3.west);
     \path[->>,draw=mycolor] (2a) -- (4.east);
     \path[->>,draw=mycolor] (1) -- (4.west);
   \end{tikzpicture}  \right] } $ \\
   \qquad\qquad\qquad \qquad \quad ${\color{light-green} \big(a=1 \wedge b=1 \wedge c=0 \big) \OK}$ 
   \end{tabular} }
\end{center} 

\noindent We see that due to the lack of a barrier in between statements 2 and 3 in Thread 2, the final outcome ${\color{light-green} \big(a=1 \wedge b=1 \wedge c=0\big)\OK }$ is reachable. 

\paragraph*{SB -- Store buffering}
Next, we take a look at the standard SB example. We see that the outcome ${\color{light-green} \big(a=0 \wedge b=0 \big)\OK }$ which is possible in lots of weak memory models is also possible here, and can be proven to be reachable. 

\begin{center} 
  \begin{tabular}[b]{@{}c@{}} 
   
  $\begin{array}{l@{\ \ }||@{\ \ }l}
    \begin{array}[t]{l}
     \textbf{Thread } 1 
     \\
     {\color{mycolor} \big[  \inikw \big] }  \\
     1: \store{x}{1} ;
     \\
     {\color{mycolor} \big[  \inikw \causal \ff{1}{x}{1}  \big] }  \\ 
     2: \load{a}{y} ;
     \\ 
     {\color{mycolor} \left[ \begin{tikzpicture}[anchor=west,baseline,node distance=.6cm]
     \node (1) {$\inikw$};
     \node (2) [above right of=1,xshift = 1.5cm] {$\ff{1}{x}{1}$};
     \node (2a) [below right of =1,xshift=1.5cm] {$\barrx{a}{y}$};
     \path[->>,draw=mycolor] (1) -- (2a.west); 
     \path[->>,draw=mycolor] (1) -- (2.west);
   \end{tikzpicture}  \right] } 
     \end{array}
    & 
    \begin{array}[t]{l}
     \textbf{Thread } 2 
     \\
     {\color{mycolor} \big[  \inikw \big] }  \\
     3: \store{y}{1} ;
    \\
    {\color{mycolor} \big[  \inikw \causal \ff{2}{y}{1}  \big] }  \\ 
    4: \load{b}{x}; 
    \\
    {\color{mycolor} \left[ \begin{tikzpicture}[anchor=west,baseline,node distance=.6cm]
     \node (1) {$\inikw$};
     \node (2) [above right of=1,xshift = 1.5cm] {$\ff{2}{y}{1}$};
     \node (2a) [below right of =1,xshift=1.5cm] {$\barrx{b}{x}$};
     \path[->>,draw=mycolor] (1) -- (2a.west); 
     \path[->>,draw=mycolor] (1) -- (2.west);
   \end{tikzpicture}  \right] } 
  
     \end{array}      
     \end{array}$ \\
     ${\color{mycolor} \left[ \begin{tikzpicture}[anchor=west,baseline,node distance=.6cm]
     \node (1) {$\inikw$};
     \node (2) [above right of=1,xshift = 2cm] {$\barrx{a}{y}$};
     \node (3) [below right of =1,xshift=2cm] {$\ff{2}{y}{1}$};
     \node (4) [above of=2] {$\ff{1}{x}{1}$};
     \node (5) [below of=3] {$\barrx{b}{x}$};
     \path[->>,draw=mycolor] (1) -- (2.west); 
     \path[->>,draw=mycolor] (1) -- (3.west);
     \path[->>,draw=mycolor] (1) -- (4.west);
     \path[->>,draw=mycolor] (1) -- (5.west);
   \end{tikzpicture}  \right] } $ \\
   
       \quad ${\color{light-green} \big(a=0 \wedge b=0 \big) \OK }$ 
   \end{tabular} 
\end{center} 
   
\paragraph*{IRIW -- Independent Reads of Independent Writes}
In the IRIW litmus test we have two threads writing to shared variables $x$ and $y$, respectively, and two other threads both reading both variables, but in a different order. 
The outcome ${\color{red} \big(a=1 \wedge b=0 \wedge c=1 \wedge d=0\big)\FAIL}$ is not allowed by our semantics. 

\medskip 
\scalebox{0.95}{
   \begin{tabular}[b]{@{}l@{}} 
  $\begin{array}{l@{\ \ }||@{\ \ }l||@{\ \ }l||@{\ \ }l}
    \begin{array}[t]{l}
     \textbf{Thread } 1 
     \\
     {\color{mycolor} \big[ \inikw  \big] } \\
     \store{x}{1} ;
     \\
     {\color{mycolor} \big[ \inikw \causal \ff{1}{x}{1}  \big] } \\
     
     \end{array}
    & 
     \begin{array}[t]{l}
     \textbf{Thread } 2
     \\
     {\color{mycolor} \big[ \inikw  \big] } \\
     \load{a}{x} ;
     \\
     {\color{mycolor} \big[ \inikw \causal \prm{1}{x}{1} \causal \barrx{a}{x} \big] } \\
     \load{b}{y} ;
     \\
     {\color{mycolor} \left[  \begin{tikzpicture}[anchor=south,baseline,node distance=0.6cm]
     \node (1) {$\inikw$} ;  
     \node (2) [above right of=1,xshift = 1.2cm] {$\prm{1}{x}{1}$};
     \node (3) [right of=2, xshift = 1.2cm] {$\barrx{a}{x}$};
     \node (4) [below right of=1,xshift = 1.2cm] {$\barrx{b}{y}$};
     \path[->>,draw=red] (1) -- node[midway,sloped,above] {\color{red} \small{$\{y\}$}} (2.west);
     \path[->>] (2) edge (3);
     \path[->>] (1) edge (4.west);
     \path[->>] (3) edge (4.east);   
   \end{tikzpicture}  \right]  }  
   \\
     \end{array}
    \\
        \multicolumn{3}{l}{\ }
    \\

    \begin{array}[t]{l}
     \textbf{Thread } 3
     \\
     {\color{mycolor} \big[ \inikw  \big] } \\
     \store{y}{1} ;
     \\
     {\color{mycolor} \big[ \inikw \causal \ff{3}{y}{1}  \big] } \\
     \end{array}
    & 
       \multicolumn{2}{@{}l}{
    \begin{array}[t]{l}
     \textbf{Thread } 4 
     \\
     {\color{mycolor} \big[ \inikw  \big] } \\
     \load{c}{y} ;
     \\
     {\color{mycolor} \big[ \inikw \causal \prm{3}{y}{1} \causal \barrx{c}{y} \big] } \\
     \load{d}{x} ;
     \\
     {\color{mycolor} \left[  \begin{tikzpicture}[anchor=south,baseline,node distance=0.6cm]
     \node (1) {$\inikw$} ;  
     \node (2) [above right of=1,xshift = 1.2cm] {$\prm{3}{y}{1}$};
     \node (3) [right of=2, xshift = 1.2cm] {$\barrx{c}{y}$};
     \node (4) [below right of=1,xshift = 1.2cm] {$\barrx{d}{x}$};
     \path[->>,draw=red] (1) -- node[midway,sloped,above] {\color{red} \small{$\{x\}$}} (2.west);
     \path[->>] (2) edge (3);
     \path[->>] (1) edge (4.west);
     \path[->>] (3) edge (4.east);   
   \end{tikzpicture}  \right]  } 
   \\
     \end{array}}
     
   \end{array}$
   \\
   \qquad\qquad ${\color{red} \big(a=1 \wedge b=0 \wedge c=1 \wedge d=0\big) \FAIL }$ 
   \end{tabular} }
 
 \medskip   
   And indeed, while we can construct local event structures for each thread,  the one candidate for a configuration of their parallel composition is
   
 \begin{center}
 ${\color{mycolor} \left[  \begin{tikzpicture}[anchor=south,baseline,node distance=0.8cm]
     \node (1) {$\inikw$} ;  
     \node (2) [above right of=1,xshift = 1.5cm] {$\barrx{b}{y}$};
     \node (3) [below right of=1,xshift = 1.5cm] {$\ff{3}{y}{1}$};
     \node (4) [above of= 2] 					 {$\ff{1}{x}{1}$};
     \node (5) [below of= 3] 					 {$\barrx{d}{x}$};
     \node (6) [right of= 3, xshift = 1.5cm] 	 {$\barrx{c}{y}$};
     \node (7) [right of= 4, xshift = 1.5cm] 	 {$\barrx{a}{x}$};
     \path[->>,draw=red] (1) -- node[midway,sloped,above] {\color{red} \small{$\{y\}$}} (4);
     \path[->>,draw=red] (1) -- node[midway,sloped,above] {\color{red} \small{$\{x\}$}} (3);
     \path[->>] (1) edge (2);
     \path[->>] (1) edge (5);
     \path[->>] (3) edge (6); 
     \path[->>] (4) edge (7); 
     \path[->>] (7) edge (2); 
     \path[->>] (6) edge (5); 
   \end{tikzpicture}  \right]  } $.
 \end{center} 
 
 But this configuration is not interference free (because of the location restrictions there is no valid linearization of the read and fulfill events), and we hence (correctly) cannot show outcome ${\color{red} \big(a=1 \wedge b=0 \wedge c=1 \wedge d=0\big) \FAIL }$ to be reachable.  
\section{Proofs of Soundness and Completeness}

\subsection{Properties of the Operational Semantics}

We start with two properties of the operational semantics. The first ensures that in any trace, the promises can be reordered so that they are executed at the start of the trace without affecting the final outcome. 
\begin{lemma} \label{lemma:promises-first}
  For every trace $tr$ there exists a trace $tr'$ (ending in the same final state) such that in $tr'$ all promise actions of $tr$ occur at the beginning  followed by the non-promise actions of $tr$ (in the same order as in $tr$).  
\end{lemma}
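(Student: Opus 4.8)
The plan is to prove the statement by induction on the length of the trace $tr$, using a local swapping argument: if a non-promise action immediately precedes a promise action, the two can be exchanged without changing the final state. Since any trace in which all promise actions already come first is trivially of the desired form, the content is in showing that a single adjacent ``(non-promise, promise)'' pair can be commuted, and then iterating a bubble-sort-style argument to push all promise steps to the front while keeping the relative order of the remaining (non-promise) actions intact. I would state the commutation as a separate sublemma: if $\langle \vec{T}, M \rangle \trans{op}_{\tid} \langle \vec{T}_1, M_1 \rangle \trans{prm(x,\val)}_{\tida} \langle \vec{T}_2, M_2 \rangle$ where $op$ is not a promise action, then there exist $\vec{T}_1', M_1'$ with $\langle \vec{T}, M \rangle \trans{prm(x,\val)}_{\tida} \langle \vec{T}_1', M_1' \rangle \trans{op}_{\tid} \langle \vec{T}_2, M_2 \rangle$.

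To establish the commutation sublemma I would examine the {\sc Promise} rule: it appends a fresh message $w$ (with $w.tid = \tida$) at timestamp $t = |M|+1$ and adds $t$ to $\tida$'s promise set, touching nothing else. The key observation is that {\sc Promise} commutes with every other atomic rule because (i) none of the other rules inspect the length of memory in a way that is disturbed by a trailing append --- {\sc Read}'s side condition quantifies over timestamps $t' \le ts.v_{read}\sqcup ts.coh(x)$, and the freshly appended promise sits strictly above any such bound since views only ever range over existing timestamps; (ii) the other rules either leave memory unchanged ({\sc Fence}, {\sc Read}, {\sc Register}, {\sc Assume}) or append their own message at the end ({\sc Fulfill} reads an existing promised slot; writes in general append) --- and in all these cases appending $w$ before or after $op$ yields the same multiset of messages in the same order, because both are appended at the current tail and the two appends are to different ends only in bookkeeping, not in content; (iii) the promise affects only $\tida$'s $prom$ component, whereas $op$ (if executed by $\tida$) modifies other components of $\tida$'s state and/or appends its own message, and these updates are independent. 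One must also check the {\sc Program Step} certifiability side-condition: if $\langle T', M \rangle$ was certifiable after $op$, then so is the state reached by doing $op$ after the promise --- this follows because the extra promise can itself be fulfilled (a single fulfill step at timestamp $t$) and certification of the rest is unaffected, so certifiability is preserved under adding a promise. Conversely one checks that the promise step itself remains enabled and certifiable when pulled earlier.

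With the commutation sublemma in hand, the main proof is a routine induction: take the last promise action in $tr$ that has some non-promise action before it, and repeatedly swap it leftward past each preceding non-promise action using the sublemma, until it is adjacent to the block of promises already at the front; repeat until no promise action has a non-promise predecessor. Each swap strictly decreases the number of ``inversions'' (pairs (non-promise before promise)), so the process terminates, and by construction the non-promise actions retain their original relative order and the final state is unchanged at every step. I expect the main obstacle to be the {\sc Read} rule's ``no intervening write'' side condition in clause (ii) of the sublemma: one must argue carefully that inserting the promise $w$ on $x$ (if $w.loc = x$) does not invalidate a read's side condition --- but since $w$ is appended at timestamp $t = |M|+1$ which is strictly greater than $ts.v_{read} \sqcup ts.coh(x)$ (views never exceed $|M|$), the quantifier range $t < t' \le ts.v_{read}\sqcup ts.coh(x)$ is unaffected, so the side condition is preserved in both directions; making this bound explicit is the one genuinely technical point.
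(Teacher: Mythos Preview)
Your proposal is correct and follows essentially the same approach as the paper: both argue by repeatedly swapping a promise action with the immediately preceding non-promise action, performing a case analysis on the latter to show the swap preserves the final state. Your treatment is in fact slightly more thorough than the paper's, since you explicitly address the certifiability side-condition of {\sc Program Step} (which the paper's proof silently omits) and you cover the {\sc Register} and {\sc Assume} cases as well.
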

\begin{proof}
Let 
$\langle \overrightarrow{T_0}, M_0 \rangle \trans {op_1} \langle \overrightarrow{T_1}, M_1 \rangle \trans {op_2} \ldots \trans {op_{k-1}} \langle \overrightarrow{T_k}, M_k \rangle$ 
be a trace $tr$ with $I=\{ i \mid op_i = \prm{\cdot}{\cdot}{\cdot}\}$ being the set of all promises. 
Let $i_1<i_2<...$ be the elements of $I$. 
We now want to swap  $op_{i_1}$ with the operation before it, until $op_{i_1}$ is the first one in the trace. 
After that we do the same with $op_{i_2}$ until it is the second operation and so on. 
For that we need to show that we can swap a promise with a fence, fulfill or read operation before it. 
\\\\
Let $op_i$ be a promise operation, $l$ its location and $op_{i-1}$ the operation in front of $op_i$. 
If we promise something, we change $prom$ to $prom \cup \{|M_{i-1}|+1\}$ in $ts$ and add one element to the memory. 
In case that $op_{i-1}$ is a fence operation we can reorder $op_i$ and $op_{i-1}$, because a fence operation changes different attributes of $ts$ ($v_{read}$ and $v_{write}$) and has no effects on the memory.
\\\\
If $op_{i-1}$ is a fulfill operation, it fulfills a different promise. 
Hence the condition 
$$v_{write} \sqcup ts.coh(l) \sqcup v_{\reg}  < t $$
still applies after the reordering, because $t$ depends on the promise $op_{i-1}$ is fulfilling. 
The changes of $coh$ and $v_{write}$ are not related to the changes caused by $op_i$. 
Additionally the element deleted by $op_{i-1}$ in $prom$ is not the same as the element that $op_i$ adds. 
So we can also reorder a promise and a fulfill. 
\\\\
In case that $op_{i-1}$ is a read operation, $op_{i-1}$ changes other attributes of $ts$ then $op_i$ would. 
If $op_{i-1}$ reads from a location $l'$, which is different from $l$, the condition 
$$\forall t'.\ t < t' \le (ts.v_{read} \sqcup ts.coh(l)) \imp M(t').loc \neq l $$
holds. 
However, when $op_{i-1}$ reads from $l$, we need to have a deeper look at $v_{read}$ and $coh$. 
Both will only be replaced by timestamps of entries in $M_{i-1}$. 
Those timestamps are always smaller then $|M_{i-1}|+1$, so the condition is still true. 
Hence we can reorder a read and a promise operation. 
\\\\
That means we can reorder the operations of our execution in a way that all promises are at the beginning and the final state stays the same. 
\end{proof}

The second property ensures that views of each component monotonically increase after each transition. Note that this differs from the semantics of Pulte et al., where register views might decrease after the execution of a write to a local register~\cite{PultePKLH19}.

\begin{lemma}[View monotonicity]\label{lemma:views-monotonic} 
Suppose 
  $\langle\langle S, ts \rangle, M \rangle \trans {op}_{\tid} \langle\langle S', ts' \rangle, M' \rangle$. Then for $view \in \{v_{read},v_{wNew}, v_{wOld}, v_C\}$, we have 
  $ts.view \leq  ts'.view$. Moreover, for all locations $x$, we have $ts.coh(x) \leq ts'.coh(x)$, and for all registers $a$, we have that if  $ts.regs(a)= \val @ v$ and $ts'.regs(a) = \val' @ v'$, then $v \leq v'$. 
\end{lemma}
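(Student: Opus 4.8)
The plan is to argue by induction on the derivation of the transition $\langle\langle S,ts\rangle,M\rangle \trans{op}_{\tid} \langle\langle S',ts'\rangle,M'\rangle$. The structural rules for $S_1;S_2$, $S_1+S_2$ and $S^*$ either leave $ts$ unchanged (e.g.\ $\skipst;S_2 \to S_2$, $S_1+S_2\to S_i$) or have as a premise a transition of a sub-program, so the claim for them is immediate or follows from the induction hypothesis; the base cases are the atomic rules of \cref{fig:opsem}, which I would treat one at a time, essentially reading off how $ts'$ differs from $ts$. For \textsc{Promise} only $prom$ changes, so all views are literally unchanged. For \textsc{Assume} only $v_C$ changes, to $ts.v_C\sqcup v\ge ts.v_C$. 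For \textsc{Register} only $ts.regs(a)$ changes, and its view component grows from $u$ to $u\sqcup v\ge u$ --- this is exactly where our semantics deviates from Pulte et al., and the explicit $\sqcup$ is what keeps it monotone. For \textsc{Read}, $ts'.v_{read}=ts.v_{read}\sqcup t\ge ts.v_{read}$ and $ts'.coh(x)=ts.coh(x)\sqcup v_{post}\ge ts.coh(x)$ (while $coh(y)$ for $y\neq x$, and $v_{wNew},v_{wOld},v_C$, are untouched). For \textsc{Fulfill}, $ts'.v_{wOld}=ts.v_{wOld}\sqcup t\ge ts.v_{wOld}$, the other three global views are untouched, and $ts'.coh(x)=t$, which dominates $ts.coh(x)$ because the rule's guard forces $ts.coh(x)<t$. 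So in each of these cases every modified component is either a join with its former value or is pushed above it by a side condition.

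The delicate case is \textsc{Fence}, which \emph{overwrites} $v_{wNew}$ (and $v_{read}$) by $v:=ts.v_{read}\sqcup ts.v_{wOld}$. Monotonicity of $v_{read}$ is still immediate, but for $v_{wNew}$ I would need the auxiliary fact $(\star)$: $ts.v_{wNew}\le ts.v_{read}\sqcup ts.v_{wOld}$. A parallel issue shows up in \textsc{Read}, where the new view of the load target $a$ is $v_{post}=ts.v_{read}\sqcup t$, and to see that this dominates the old view of $a$ I would need $(\dagger)$: $ts.v_a\le ts.v_{read}$ for every register $a$. Both $(\star)$ and $(\dagger)$ hold of $ts_{\inikw}$ (all views are $0$) and are preserved by every atomic rule: for $(\dagger)$, \textsc{Register} preserves it because $v_{read}$ is unchanged while the new register view is a join of quantities already $\le v_{read}$, and \textsc{Read} preserves it because it sets the load target's view to the new $v_{read}$ and leaves the other registers, which were $\le$ the old and hence $\le$ the new $v_{read}$; for $(\star)$, \textsc{Fence} re-establishes it with equality, and every other rule enlarges $v_{read}$ or $v_{wOld}$ without enlarging $v_{wNew}$. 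I would therefore first record $(\star)$ and $(\dagger)$ as a small standalone invariant, proved by induction on the trace that produced $ts$ --- or, equivalently, read the lemma as quantifying only over reachable $\langle ts,M\rangle$, which is all the soundness proof uses.

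The main obstacle, then, is precisely these two sub-cases --- \textsc{Fence} for $v_{wNew}$ and \textsc{Read} for the register view --- where a view is replaced rather than joined onto, so that per-step monotonicity is not a purely syntactic observation and genuinely rests on the invariants $(\star)$ and $(\dagger)$. Once those are in place, the remainder is a routine inspection of \cref{fig:opsem}.
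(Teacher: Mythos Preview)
Your proposal is correct and follows essentially the same approach as the paper's proof: a case analysis on the atomic rules, with the two delicate cases (\textsc{Fence} for $v_{wNew}$ and \textsc{Read} for the register view) handled via exactly the auxiliary invariants you identify --- the paper uses the slightly stronger $v_{wNew}\le v_{read}$ in place of your $(\star)$, and your $(\dagger)$ verbatim. You are in fact more explicit than the paper about the need to restrict to reachable states for these invariants to hold.
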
 
\begin{proof}
Let $\langle\langle S, ts \rangle, M \rangle \trans {op}_{\tid} \langle\langle S', ts' \rangle, M' \rangle$. 
\begin{itemize}
	\item $v_C$ only changes for $op=asm(bexp)$: $ts'.v_C=ts.v_C \sqcup ts.v \geq ts.v_C$
	\item $v_{wOld}$ only changes for $op={\it ff}(\loc,\val)$: $ts'.v_{wOld} = ts.v_{wOld} \sqcup t \geq ts.v_{wOld}$
	\item $coh(x)$ changes in two cases:
	\begin{itemize}
		\item for $op=rd(\loc,\val)$: $ts'.coh(\loc) = ts.coh(\loc) \sqcup v_{post} \geq ts.coh(\loc)$
		\item for $op={\it ff}(\loc,\val)$: $ts'.coh(\loc) = t > ts.coh(\loc)$ because $ts.v_{wNew} \sqcup ts.v_C \sqcup ts.coh(\loc) \sqcup v_a  < t$
	\end{itemize}
	\item $v_{read}$ changes in two cases:
	\begin{itemize}
		\item for $op=rd(\loc,\val)$: $ts'.v_{read} = v_{post} = ts.v_{read} \sqcup t \geq ts.v_{read}$ 
		\item for $op=\fence$: $ts'.v_{read}=ts.v_{read} \sqcup ts.v_{wOld} \geq ts.v_{read}$
	\end{itemize}
	\item $v_{wNew}$ only changes for $op=\fence$: $ts'.v_{wNew}=ts.v_{read} \sqcup ts.v_{wOld} \geq ts.v_{read}$. By looking at $v_{read}$ we see that $ts'.v_{read}\geq ts.v_{read}$ holds in all cases of $op$. Since $v_{read}$ changes whenever $v_{wNew}$ does and both are set to the same value we get $ts.v_{read} \geq ts.v_{wNew}$. Therefore $ts'.v_{wNew} \geq ts.v_{wNew}$.
	\item For $regs(a)=\val @ v_a$ we first show that $ts.v_a\leq ts.v_{read}$ is an invariant. We know that $v_a$ only changes for $op=rd(\loc,\val)$ and $op=\lst{a}{exp}$ whereas $v_{read}$ changes for $op=rd(\loc,\val)$ and $op=\fence$. For $op=rd(\loc,\val)$ $v_a$ and $v_{read}$ are set to the same value. Since $ts'.v_{read}\geq ts.v_{read}$, $v_a$ can not get bigger than $v_{read}$ after a register operation. Therefore our invariant holds and we get: 
	\begin{itemize}
		\item for $op=rd(\loc,\val)$: $ts'.v_a = ts.v_{read} \sqcup t \geq ts.v_{read} \geq ts.v_a$
		\item since $op=\lst{a}{exp}$ changes $v_a$ to the maximum of $v_a$ and the view of $exp$., $ts'.v_a \geq ts.v_a$
	\end{itemize}
\end{itemize}
\end{proof}

\subsection{Soundness}

Next we prove soundness of our proof rules. We first prove correctness of each of the atomic rules from \cref{fig:local}. 

\begin{proposition}[Soundness of rule \textsc{PR-WriteC}] 
  Let $\langle ts',M \rangle \in \sem{\cE \oplus \ff{\tau}{x}{\val}}$. Then there exists $\langle ts, M \rangle \in \sem{\cE }$ such that

  $\big\langle \langle \store{x}{\val}, ts \rangle, M \big\rangle \trans {\ff{\tau}{x}{\val}} \big \langle \langle \skipst, ts' \rangle, M \big \rangle$. 
\end{proposition}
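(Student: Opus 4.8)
The statement is a backward (incorrectness-style) soundness claim, so the proof will be constructive: from a witness that the post-state matches the post-assertion I will reconstruct a pre-state that matches the pre-assertion. Write $\cE'\sdef\cE\oplus\ff{\tau}{x}{\val}$, let $e$ be its fresh event (so $\lfun(e)=\ff{\tau}{x}{\val}$), and fix a matching map $\psi'$ certifying $\langle ts',M\rangle\in\sem{\cE'}$. I would set $t\sdef\psi'(e)$, take $\psi$ to be $\psi'$ restricted to the events of $\cE$, and let $ts$ be the thread state uniquely determined by the ``matches'' equations (2)--(4) for $\cE$, $\psi$ and $M$. Then there are two things to verify: that $\psi$ genuinely witnesses $\langle ts,M\rangle\in\sem{\cE}$, and that rule \textsc{Fulfill} fires from $\langle\langle\store{x}{\val},ts\rangle,M\rangle$ and yields exactly $ts'$.

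For the first point, note that conditions (1).1 and (1).3--(1).5 transfer verbatim from $(\cE',\psi')$ to $(\cE,\psi)$, since the event set, the flow relation and the labelling only shrink. Consecutiveness (1).2 is the one clause that needs a real argument: the only way it could fail for $\psi$ is if the fulfill event returned by consecutiveness for $\psi'$ is $e$ itself, and I would rule that out using the structural fact that, by definition of $\oplus$, the only new flow edges of $\cE'$ point into $e$, so $e$ is $\causal^+$-maximal among the $\Act^x$-labelled events of $\cE'$; combined with content-preservation (which forces the offending fulfill to have location $x$) and flow-preservation of $\psi'$, this gives a contradiction on timestamps. Also, since $\lfun(e)\neq\inikw$ and $e_\inikw\mapsto 0$, the event $e$ must receive a positive timestamp, so $t\ge 1$ and, by content-preservation, $M(t)=\langle x:=\val\rangle_\tau$.

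For the second point I would first collect the structural consequences of $\oplus$: $e$ has no outgoing flow, hence lies in none of the sets $\priorFnc_\tau$, $\priorBar_a$, $\priorTest_\tau$, and for each of these the $\cE'$-version coincides with the $\cE$-version; moreover $\Ff^x_\tau(\cE')=\Ff^x_\tau(\cE)\cup\{e\}$ while $\Ff_\tau(\cE')=\Ff_\tau(\cE)\cup\{e\}$. Comparing the view-defining suprema of the ``matches'' equations over $\cE'$ with those over $\cE$ then gives $ts'.prom=ts.prom\setminus\{t\}$, $ts'.coh(x)=ts.coh(x)\sqcup t$ with all other coherences (and $\vwnew$, $v_C$, $v_{read}$, the register views and values --- the last because $\sem{a}_{\cE'}=\sem{a}_{\cE}$) unchanged, and $ts'.\vwold=ts.\vwold\sqcup t$. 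To fire \textsc{Fulfill} I then check its premise: $t\in ts.prom$ (it is in $M_\tau$ and, by content-preservation, not the image under $\psi$ of any fulfill of $\tau$, since such a fulfill would have location $x$, hence be $\causal^+$-below $e$ and strictly $\psi'$-earlier); $v_\rv=0$ because the stored value is the literal $\val$; and $ts.\vwnew\sqcup ts.v_C\sqcup ts.coh(x)<t$ because each of these views is a supremum of $\psi'$-images of events lying $\causal^+$-below $e$ in $\cE'$ (below the last fence, last test, resp.\ last $\Act^x$-event of $\tau$, each of which $\causal$-precedes $e$). The rule then updates exactly $prom$, $coh(x)$ and $\vwold$ in the way just described, so its resulting state is $ts'$, as required.

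The main obstacle will be the bookkeeping in the third paragraph: tracking precisely how the single extra event $e$, with its timestamp $t$, interacts with each of $\priorFnc$, $\priorBar$, $\priorTest$ and $\Ff^x_\tau$ occurring in the ``matches'' equations, and establishing the strict inequality $ts.\vwnew\sqcup ts.v_C\sqcup ts.coh(x)<t$ from the $\causal$-positions that $\oplus$ forces on those events. The consecutiveness subcase in the second paragraph is the other delicate spot; everything else is routine, and since our local assertions are conflict-free no conflict-related issues arise.
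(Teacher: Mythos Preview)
Your proposal is correct and follows essentially the same route as the paper's proof: restrict the witnessing map $\psi'$ to $\cE$, let $ts$ be the state determined by the ``matches'' equations, then verify the \textsc{Fulfill} premise by tracking how the single new event $e$ interacts with $\priorFnc_\tau$, $\priorTest_\tau$, $\priorBar$ and $\Ff^x_\tau$. If anything you are more careful than the paper, which glosses over the consecutiveness clause (1).2 and the check $t\in ts.prom$ that you spell out explicitly.
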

\begin{proof}
  We write $\cE'$ for $\cE \oplus^\reg \ff{\tau}{x}{\val}$. Let $\langle ts', M \rangle \in \sem{\cE'}$ and let $e'$ be the event such that $\lfun(e') = \ff{\tau}{x}{\val}$ 
  This means that there exists some mapping $\psi': \Ff(\cE') \cup \Pr(\cE') \rightarrow dom(M)$ that shows  $\langle ts,M\rangle$ satisfies $\cE'$. 
  In particular, $\psi'(e') = t$ implies that $M(t) = \langle x := \val \rangle_\tid$. 
  We  show that there exists some $\langle ts,M \rangle \in \sem{\cE}$ such that the following holds (eliding the program statements):
  \[ \langle ts, M \rangle \trans {\ff{\tau}{x}{\val}} \langle ts', M \rangle \]
  Note that $M$ is unchanged and $t$ in the rule {\sc Fulfill} is instantiated to $ts'.coh(x)$. We define $ts$ to be the following by using $\psi'$: 
  \begin{align*}
     ts.prom & :=  ts'.prom \cup \{ts'.coh(x)\} & 
     ts.coh(x) & := \bigsqcup_{e \in \Ff_\tau^x(\cE) \cup \priorBar_\tau(\cE)} \psi'(e) \\
     ts.v_{wOld} & := \bigsqcup_{e \in \Ff_\tau(\cE)} \psi'(e)  & 
     ts.v_{wNew} & := ts'.v_{wNew} \\
     ts.v_\reg & :=  0 & 
      ts.v_{read} & :=  ts'.v_{read} \\
      ts.v_C & := ts'.v_C
  \end{align*}




First, we show $\langle ts,M\rangle \in \sem{\cE}$ by checking the conditions in \cref{sec:meaning}. 
\begin{enumerate}
\item We define $\psi = \{e'\} \ndres \psi'$, where $\ndres$ denotes domain anti-restriction. Clearly $M$ remains consistent with $\cE$.
\item This is trivial. We have one additional open promise in $ts.prom$, but this is precisely the fulfilled write in $\cE'$.
\item 
\begin{itemize}
    \item 
    \begin{align*}
      ts.v_C & = ts'.v_C & & \text{by definition} \\
             & = \bigsqcup_{e \in \priorTest_\tau(\cE')} \psi'(e) & & \text{unfolding definition of $ts'.v_C$} \\
             & = \bigsqcup_{e \in \priorTest_\tau(\cE)} \psi'(e) & & \text{$\priorTest_\tau(\cE') = \priorTest_\tau(\cE)$}\\ 
             & = \bigsqcup_{e \in \priorTest_\tau(\cE)} \psi(e) & & \text{$\priorTest_\tau(\cE) \subseteq dom(\psi)$}    \end{align*}

    \item 
    \begin{align*}
        ts.coh(x) & = \bigsqcup_{e \in \Ff_\tau^x(\cE) \cup \priorBar_\tau(\cE)} \psi'(e) &&\text{definition} \\
        & = \bigsqcup_{e \in \Ff_\tau^x(\cE) \cup \priorBar_\tau(\cE)} \psi(e) && \text{$\Ff_\tau^x(\cE) \cup \priorBar_\tau(\cE) \subseteq dom(\psi)$}
    \end{align*}
    \item 
    \begin{align*}
        ts.v_{wOld} & = \bigsqcup_{e \in \Ff_\tau(\cE)} \psi'(e)  && \text{definition} \\
                    & = \bigsqcup_{e \in \Ff_\tau(\cE)} \psi(e)  && 
        \text{$\Ff_\tau(\cE) \subseteq dom(\psi)$}
    \end{align*}

    \item 
      \begin{small}
    \begin{align*}
         ts.v_{wNew} & = ts'.v_{wNew} && \text{by definition}\\
                     & = \bigsqcup_{e \in \mathit{\priorFnc}_\tau(\cE') \cap \big(\Ff_\tau(\cE') \cup \priorBar_\tau(\cE')\big)} \psi(e) && \text{unfolding definition}\\
                     & = \bigsqcup_{e \in \mathit{\priorFnc}_\tau(\cE) \cap \big(\Ff_\tau(\cE') \cup \priorBar_\tau(\cE')\big)} \psi(e) && \text{since $\mathit{\priorFnc}_\tau(\cE') = \mathit{\priorFnc}_\tau(\cE)$} \\
                    & = \bigsqcup_{e \in \mathit{\priorFnc}_\tau(\cE) \cap \big(\Ff_\tau(\cE) \cup \priorBar_\tau(\cE)\big)} \psi(e) && \text{logic}
    \end{align*}
  \end{small}

    \item 
      \begin{small}
        \begin{align*}
           ts.v_{read} & = ts.v_{read} && \text{by definition}\\ 
                       & = \bigsqcup_{e \in
            \big(\mathit{\priorFnc}_\tau(\cE') \cap \Ff_\tid(\cE')\big) \cup \mathit{\priorBar}_\tau(\cE')} \psi(e) && \text{unfolding} \\
            & = \bigsqcup_{e \in
            \big(\mathit{\priorFnc}_\tau(\cE) \cap \Ff_\tid(\cE')\big) \cup \mathit{\priorBar}_\tau(\cE)} \psi(e)
            && \text{\begin{tabular}{@{}l}
            since $\mathit{\priorFnc}_\tau(\cE') = \mathit{\priorFnc}_\tau(\cE)$ \\ and $\mathit{\priorBar}_\tau(\cE') = \mathit{\priorBar}_\tau(\cE)$
            \end{tabular}}
            \\
            & = \bigsqcup_{e \in
            \big(\mathit{\priorFnc}_\tau(\cE) \cap \Ff_\tid(\cE)\big) \cup \mathit{\priorBar}_\tau(\cE)} \psi(e)
            && \text{logic}        
        \end{align*}
      \end{small}

\end{itemize}
\item Trivial since no register value is modified.
\end{enumerate}

Recall: 
\begin{align*}
    & \cE \oplus \ff{\tid}{x}{\val} \\
  = & (E \cup \{e \},  \mathord{\causal} \cup \{ (e',e) \mid e' \in last_{\Act^x \cup \{\fence_\tid,\test{\tid}{\cdot}\}}(\cE), \lres, 
    \lfun[e \mapsto \ff{\tid}{x}{\val})] ) 
\end{align*}

Next, we have $\langle ts, M \rangle \trans {\ff{\tau}{x}{\val}} \langle ts', M \rangle$ as follows:
\begin{itemize}
    \item $ts'.prom = ts.prom \backslash \{t\}$ by definition
    \item $regs(\val) = \val @0$ because $ts.v_\reg = ts'.v_\reg$ and $ts.regs(\reg) = ts'.regs(\reg)$,  
    \item $M(t) = \langle x := \val \rangle_\tid$ by definition
    \item $ts.v_{wNew} \sqcup ts.v_C \sqcup ts.coh(x)  \sqcup ts.v_\reg < ts'.coh(x)$. We prove maximality of $ts'.coh(x)$ for each component: 
        \begin{itemize}
            \item $ts.v_{wNew} < ts'.coh(x)$ 

            Recall that              
            $ts.v_{wNew} = \bigsqcup_{e \in \mathit{\priorFnc}_\tau(\cE) \cap (\Ff_\tau(\cE) \cup \priorBar_\tau(\cE))} \psi(e)  $. 
            By definition of $\cE \oplus^a \ff{\tid}{x}{\val}$, we have that the event corresponding to $\ff{\tid}{x}{\val}$ is causally after each $e \in \mathit{\priorFnc}_\tau(\cE)$. Thus,  $ts.v_{wNew} < ts'.coh(x)$. 

\item $ts.v_C < ts'.coh(x)$. 

    Recall that $ts.v_C = \bigsqcup_{e \in \priorTest_\tau(\cE)} \psi(e)$. By definition of $\cE \oplus^a \ff{\tid}{x}{\val}$, we have that the event corresponding to $\ff{\tid}{x}{\val}$ is causally after each $e \in \mathit{\priorTest}_\tau(\cE)$. Thus $ts.v_C < ts'.coh(x)$.

  \item $ts.coh(x) < ts'.coh(x)$ because by definition of $\cE \oplus^a \ff{\tid}{x}{\val}$ we have a causal edge from each $e \in Act^x$ to the event corresponding to $\ff{\tid}{x}{\val}$. 
  
  \item $ts.v_\reg < ts'.coh(x)$ because $v_\reg = v_\val = 0$. 
        \end{itemize}

\end{itemize}  \end{proof}

\begin{proposition}[Soundness of rule \textsc{PR-WriteR}] 
  Let $\langle ts',M \rangle \in \sem{\cE \oplus^\reg \ff{\tau}{x}{\val}}$ and $\sem{a}_\cE = \val$. Then there exists $\langle ts, M \rangle \in \sem{\cE }$ such that

  $\big\langle \langle \store{x}{\reg}, ts \rangle, M \big\rangle$ $\trans {\ff{\tau}{x}{\val}} \big \langle \langle \skipst, ts' \rangle, M \big \rangle$. 
\end{proposition}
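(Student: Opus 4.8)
The plan is to follow the proof of the preceding proposition (soundness of \textsc{PR-Write}) almost verbatim, isolating the two places where the register argument matters: the rule premise $\sem{\reg}_\cE = \val$, and the fact that $\oplus^\reg$ (unlike $\oplus$) adds a flow edge from $last_{\barrx{\reg}{\cdot}}(\cE)$ into the new fulfill event. First I would take $\langle ts', M\rangle \in \sem{\cE \oplus^\reg \ff{\tau}{x}{\val}}$, write $\cE' = \cE \oplus^\reg \ff{\tau}{x}{\val}$, let $e'$ be the unique event with $\lfun(e') = \ff{\tau}{x}{\val}$, and fix a matching witness $\psi' : \Ff(\cE') \cup \Pr(\cE') \cup \{e_\inikw\} \to dom(M)$ with $t := \psi'(e') = ts'.coh(x)$ and $M(t) = \langle x := \val\rangle_\tid$ (the latter by ``preserves content'').

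Next I would define the pre-state $ts$ by the same recipe as in the $\oplus$ case: set $ts.prom := ts'.prom \cup \{t\}$, recompute $ts.coh(x)$ and $ts.v_{wOld}$ from $\psi'$ restricted to $\cE$ via the formulas of condition {\bf (3)}, and leave $v_{wNew}$, $v_{read}$, $v_C$ and all register values unchanged. The only difference to the constant case is that the view of $\reg$ is \emph{not} forced to $0$: it keeps the value condition {\bf (3)} assigns it, namely $v_\reg = \bigsqcup_{e \in \priorBar_\reg(\cE)}\psi'(e)$, and $ts.regs(\reg) = \val @ v_\reg$ with $\val = \sem{\reg}_\cE$ by the rule premise. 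Putting $\psi := \{e'\} \ndres \psi'$, I would then check $\langle ts, M\rangle \in \sem{\cE}$ against conditions {\bf (1)}--{\bf (4)}; this is essentially identical to the $\oplus$ proof, since $\priorTest_\tau$, $\priorFnc_\tau$, $\priorBar_\reg$, $\priorBar_\tau$ and all label-indexed event sets of $\cE$ and $\cE'$ agree on events other than $e'$ (the new event is a fulfill, so it is not a bar/fence/test event and does not lie below the last such event), and $e'$ is a maximal event of $\cE'$, so deleting it from $\psi'$ cannot violate ``preserves flows'' or ``preserves memory constraints''.

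The remaining step is to establish $\big\langle \langle \store{x}{\reg}, ts\rangle, M\big\rangle \trans{\ff{\tau}{x}{\val}} \big\langle\langle \skipst, ts'\rangle, M\big\rangle$ by instantiating rule {\sc Fulfill} with $t = ts'.coh(x)$. Then $t \in ts.prom$ holds by construction, $M(t) = \langle x:=\val\rangle_\tid$ was noted above, and $\expeval{\reg}{ts.regs} = ts.regs(\reg) = \val @ v_\reg$ by condition {\bf (4)} together with the premise, so $v_\rv = v_\reg$. It remains to prove $ts.v_{wNew} \sqcup ts.v_C \sqcup ts.coh(x) \sqcup v_\reg < t$; I would do this componentwise exactly as in the $\oplus$ proof for the first three summands, each time using that $\oplus^\reg$ adds a flow from the relevant ``last'' events into $e'$ combined with ``preserves flows'' of $\psi'$ (i.e.\ $\psi'(e) < \psi'(e') = t$ whenever $e \causal^+ e'$). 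The genuinely new obligation is $v_\reg < t$: by the definition of $\oplus^\reg$ (and in contrast to $\oplus$) the event $e'$ is flow-ordered after $last_{\barrx{\reg}{\cdot}}(\cE)$, hence causally after every event of $\priorBar_\reg(\cE)$, so $\psi'(e) < t$ for all such $e$ and thus $v_\reg = \bigsqcup_{e\in\priorBar_\reg(\cE)}\psi'(e) < t$ (if $\priorBar_\reg(\cE) = \emptyset$ then $v_\reg = 0 < t$; otherwise finiteness and the strict bound give the claim).

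I expect the main obstacle to be precisely this last point --- making rigorous that the extra flow edges of $\oplus^\reg$ force $v_\reg$ strictly below the fulfill timestamp --- together with the bookkeeping that $\priorBar_\reg$, $\priorFnc_\tau$ and $\priorTest_\tau$ are unchanged when passing from $\cE$ to $\cE'$. Everything else is a transcription of the already-established $\oplus$ case.
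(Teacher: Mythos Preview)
Your proposal is correct and follows essentially the same approach as the paper's own proof: both restrict the witness $\psi'$ by removing $e'$, define $ts$ by recomputing $coh(x)$ and $v_{wOld}$ over $\cE$ while keeping $v_{wNew}$, $v_{read}$, $v_C$ and (crucially) $v_\reg$ unchanged, and discharge the new obligation $v_\reg < t$ via the extra flow edge $last_{\barrx{\reg}{\cdot}}(\cE) \causal e'$ that distinguishes $\oplus^\reg$ from $\oplus$. Your identification of this last point as the only genuinely new step is exactly right, and your justification of it is in fact more explicit than the paper's rather terse treatment.
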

\begin{proof}
  We write $\cE'$ for $\cE \oplus^\reg \ff{\tau}{x}{\val}$. Let $\langle ts', M \rangle \in \sem{\cE'}$ and let $e'$ be the event such that $\lfun(e') = \ff{\tau}{x}{\val}$ 
  This means that there exists some mapping $\psi': \Ff(\cE') \cup \Pr(\cE') \rightarrow dom(M)$ that shows  $\langle ts,M\rangle$ satisfies $\cE'$. 
  In particular, $\psi'(e') = t$ implies that $M(t) = \langle x := \val \rangle_\tid$. 
  We  show that there exists some $\langle ts,M \rangle \in \sem{\cE}$ such that the following holds (eliding the program statements):
  \[ \langle ts, M \rangle \trans {\ff{\tau}{x}{\val}} \langle ts', M \rangle \]
  Note that $M$ is unchanged and $t$ in the rule {\sc Fulfill} is instantiated to $ts'.coh(x)$. We define $ts$ to be the following by using $\psi'$: 
  \begin{align*}
     ts.prom & :=  ts'.prom \cup \{ts'.coh(x)\} & 
     ts.coh(x) & := \bigsqcup_{e \in \Ff_\tau^x(\cE) \cup \priorBar_\tau(\cE)} \psi'(e) \\
     ts.v_{wOld} & := \bigsqcup_{e \in \Ff_\tau(\cE)} \psi'(e)  & 
     ts.v_{wNew} & := ts'.v_{wNew} \\
     ts.v_\reg & :=  ts'.v_\reg & 
      ts.v_{read} & :=  ts'.v_{read} \\
      ts.v_C & := ts'.v_C
  \end{align*}




First, we show $\langle ts,M\rangle \in \sem{\cE}$ by checking the conditions in \cref{sec:meaning}. 
\begin{enumerate}
\item We define $\psi = \{e'\} \ndres \psi'$, where $\ndres$ denotes domain anti-restriction. Clearly $M$ remains consistent with $\cE$.
\item This is trivial. We have one additional open promise in $ts.prom$, but this is precisely the fulfilled write in $\cE'$.
\item 
\begin{itemize}
    \item 
    \begin{align*}
      ts.v_C & = ts'.v_C & & \text{by definition} \\
             & = \bigsqcup_{e \in \priorTest_\tau(\cE')} \psi'(e) & & \text{unfolding definition of $ts'.v_C$} \\
             & = \bigsqcup_{e \in \priorTest_\tau(\cE)} \psi'(e) & & \text{$\priorTest_\tau(\cE') = \priorTest_\tau(\cE)$}\\ 
             & = \bigsqcup_{e \in \priorTest_\tau(\cE)} \psi(e) & & \text{$\priorTest_\tau(\cE) \subseteq dom(\psi)$}    \end{align*}

    \item 
    \begin{align*}
        ts.coh(x) & = \bigsqcup_{e \in \Ff_\tau^x(\cE) \cup \priorBar_\tau(\cE)} \psi'(e) &&\text{definition} \\
        & = \bigsqcup_{e \in \Ff_\tau^x(\cE) \cup \priorBar_\tau(\cE)} \psi(e) && \text{$\Ff_\tau^x(\cE) \cup \priorBar_\tau(\cE) \subseteq dom(\psi)$}
    \end{align*}
    \item 
    \begin{align*}
        ts.v_{wOld} & = \bigsqcup_{e \in \Ff_\tau(\cE)} \psi'(e)  && \text{definition} \\
                    & = \bigsqcup_{e \in \Ff_\tau(\cE)} \psi(e)  && 
        \text{$\Ff_\tau(\cE) \subseteq dom(\psi)$}
    \end{align*}

    \item 
      \begin{small}
    \begin{align*}
         ts.v_{wNew} & = ts'.v_{wNew} && \text{by definition}\\
                     & = \bigsqcup_{e \in \mathit{\priorFnc}_\tau(\cE') \cap \big(\Ff_\tau(\cE') \cup \priorBar_\tau(\cE')\big)} \psi(e) && \text{unfolding definition}\\
                     & = \bigsqcup_{e \in \mathit{\priorFnc}_\tau(\cE) \cap \big(\Ff_\tau(\cE') \cup \priorBar_\tau(\cE')\big)} \psi(e) && \text{since $\mathit{\priorFnc}_\tau(\cE') = \mathit{\priorFnc}_\tau(\cE)$} \\
                    & = \bigsqcup_{e \in \mathit{\priorFnc}_\tau(\cE) \cap \big(\Ff_\tau(\cE) \cup \priorBar_\tau(\cE)\big)} \psi(e) && \text{logic}
    \end{align*}
  \end{small}

    \item \begin{align*}
         ts.v_{\reg} & = ts'.v_{\reg} && \text{by definition}\\
                     & = \bigsqcup_{e \in \mathit{\priorBar}_a(\cE')} \psi(e) && \text{unfolding definition}\\
                     & = \bigsqcup_{e \in \mathit{\priorBar}_a(\cE)} \psi(e) && \text{since $\mathit{\priorBar}_a(\cE') = \mathit{\priorBar}_a(\cE)$}
    \end{align*}    
    \item 
      \begin{small}
        \begin{align*}
           ts.v_{read} & = ts.v_{read} && \text{by definition}\\ 
                       & = \bigsqcup_{e \in
            \big(\mathit{\priorFnc}_\tau(\cE') \cap \Ff_\tid(\cE')\big) \cup \mathit{\priorBar}_\tau(\cE')} \psi(e) && \text{unfolding} \\
            & = \bigsqcup_{e \in
            \big(\mathit{\priorFnc}_\tau(\cE) \cap \Ff_\tid(\cE')\big) \cup \mathit{\priorBar}_\tau(\cE)} \psi(e)
            && \text{\begin{tabular}{@{}l}
            since $\mathit{\priorFnc}_\tau(\cE') = \mathit{\priorFnc}_\tau(\cE)$ \\ and $\mathit{\priorBar}_\tau(\cE') = \mathit{\priorBar}_\tau(\cE)$
            \end{tabular}}
            \\
            & = \bigsqcup_{e \in
            \big(\mathit{\priorFnc}_\tau(\cE) \cap \Ff_\tid(\cE)\big) \cup \mathit{\priorBar}_\tau(\cE)} \psi(e)
            && \text{logic}        
        \end{align*}
      \end{small}

\end{itemize}
\item Trivial since no register value is modified.
\end{enumerate}

Recall: 
\begin{align*}
    & \cE \oplus^a \ff{\tid}{x}{\val} \\
  = & (E \cup \{e \},  \mathord{\causal} \cup \{ (e',e) \mid e' \in last_{\Act^x \cup \{\fence_\tid,\test{\tid}{\cdot},\barrx{a}{\cdot}\}}(\cE), \lres, 
    \lfun[e \mapsto \ff{\tid}{x}{\val})] ) 
\end{align*}

Next, we have $\langle ts, M \rangle \trans {\ff{\tau}{x}{\val}} \langle ts', M \rangle$ as follows:
\begin{itemize}
    \item $ts'.prom = ts.prom \backslash \{t\}$ by definition
    \item $regs(\reg) = \val @v_\reg$ because $ts.v_\reg = ts'.v_\reg$ and $ts.regs(\reg) = ts'.regs(\reg)$,  
    \item $M(t) = \langle x := \val \rangle_\tid$ by definition
    \item $ts.v_{wNew} \sqcup ts.v_C \sqcup ts.coh(x)  \sqcup ts.v_\reg < ts'.coh(x)$. We prove maximality of $ts'.coh(x)$ for each component: 
        \begin{itemize}
            \item $ts.v_{wNew} < ts'.coh(x)$ 

            Recall that              
            $ts.v_{wNew} = \bigsqcup_{e \in \mathit{\priorFnc}_\tau(\cE) \cap (\Ff_\tau(\cE) \cup \priorBar_\tau(\cE))} \psi(e)  $. 
            By definition of $\cE \oplus^a \ff{\tid}{x}{\val}$, we have that the event corresponding to $\ff{\tid}{x}{n}$ is causally after each $e \in \mathit{\priorFnc}_\tau(\cE)$. Thus,  $ts.v_{wNew} < ts'.coh(x)$. 

\item $ts.v_C < ts'.coh(x)$. 

    Recall that $ts.v_C = \bigsqcup_{e \in \priorTest_\tau(\cE)} \psi(e)$. By definition of $\cE \oplus^a \ff{\tid}{x}{\val}$, we have that the event corresponding to $\ff{\tid}{x}{\val}$ is causally after each $e \in \mathit{\priorTest}_\tau(\cE)$. Thus $ts.v_C < ts'.coh(x)$.

  \item $ts.coh(x) < ts'.coh(x)$ because by definition of $\cE \oplus^a \ff{\tid}{x}{\val}$ we have a causal edge from each $e \in Act^x$ to the event corresponding to $\ff{\tid}{x}{\val}$. 
  
  \item $ts.v_a < ts'.coh(x)$ same 
  Because  
              \[
              \bigsqcup_{e \in \mathit{\priorBar}_r(\cE)} \psi(e)
                = 
              \bigsqcup_{e \in \mathit{\priorBar}_r(\cE')} \psi(e)
  \]
  Moreover $\ff{\tid}{x}{\val}$ goes at the end and $\mathit{\priorBar}_r(\cE) = \mathit{\priorBar}_r(\cE')$
        \end{itemize}

\end{itemize}  
\end{proof}

\begin{proposition}[Soundness of rule \textsc{PR-ReadEx}] \label{prop:readex}
  Let $\langle ts',M \rangle \in \sem{\addc^x_e(\cE \oplus \barrx{a}{x})  }$, where $e$ is the event in rule \textsc{PR-ReadEx}. Then there exists $\langle ts, M \rangle \in \sem{\cE  }$ such that 
  
  $\big\langle \langle \load{a}{x}, ts \rangle, M \big\rangle \trans {rd(\loc,\val)}_\tid \big \langle \langle \skipst, ts' \rangle, M \big \rangle$. 
\end{proposition}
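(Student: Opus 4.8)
The plan is to argue backwards, in the style of the preceding soundness proofs for \textsc{PR-Write} and \textsc{PR-WriteR}. Given a post-state $\langle ts', M\rangle \in \sem{\cE'}$, where $\cE' := \addc^x_e(\cE \oplus \barrx{a}{x})$, with witnessing mapping $\psi'$, I will construct a pre-state $\langle ts, M\rangle \in \sem{\cE}$ from which rule \textsc{Read} fires with exactly $ts'$ as its result. Write $e_a$ for the fresh bar event, set $t := \psi'(e)$ --- this is well defined because, by the side condition of \textsc{PR-ReadEx}, $e$ is already a read, fulfill or $\inikw$ event of $\cE$ --- and $\val := M(t).val$. The first observation is that $\cE$ and $\cE'$ carry the same read, fulfill, fence and test events (the only new one, $e_a$, is a bar event, and it has no outgoing flows), so $\Pr(\cE') = \Pr(\cE)$, $\Ff(\cE') = \Ff(\cE)$, $\priorFnc_\tau(\cE') = \priorFnc_\tau(\cE)$, $\priorTest_\tau(\cE') = \priorTest_\tau(\cE)$, and the $\causal^+$-predecessors of every old event are unchanged. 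In particular $\psi := \psi'$ already witnesses that $M$ is consistent with $\cE$: the flow relation of $\cE$ is contained in that of $\cE'$ with a pointwise smaller location restriction, so ``preserves flows'' and ``preserves memory constraints'' for $\cE$ are inherited from $\cE'$, and the remaining clauses of \textbf{(1)} are immediate.

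Next I would define $ts$ to agree with $ts'$ on $prom$, $v_C$, $v_{wNew}$, $v_{wOld}$, on $coh(x')$ for $x' \ne x$, and on $regs(b)$ for $b \ne a$, and to roll back the three components that \textsc{Read} writes, using $\psi$ and the $\cE$-versions of the view formulas of \cref{sec:meaning}: $ts.coh(x) := \bigsqcup_{d \in \Ff_\tau^x(\cE) \cup \priorBar_\tau^x(\cE)} \psi(d)$, $ts.v_{read} := \bigsqcup_{d \in (\priorFnc_\tau(\cE) \cap \Ff_\tau(\cE)) \cup \priorBar_\tau(\cE)} \psi(d)$, and $ts.regs(a) := \sem{a}_\cE \,@\, \bigsqcup_{d \in \priorBar_a(\cE)} \psi(d)$. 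Showing $\langle ts, M\rangle \in \sem{\cE}$ then reduces to conditions \textbf{(2)}--\textbf{(4)}: \textbf{(2)} is immediate since $\Ff_\tau(\cE) = \Ff_\tau(\cE')$ and $prom$ is unchanged; for \textbf{(3)} the rolled-back components hold by construction, and for the others one checks that the $\cE$- and $\cE'$-versions of the formulas coincide, using the identities above, that $e_a$ has location $x$ (so it leaves $coh(x')$ for $x'\ne x$ untouched), and that the extra predecessors $e_a$ contributes to $\priorBar_\tau$ are all $\causal^+$-below the sources of its incoming flows ($e$, the last fence, the last bars of $\tau$) and hence cannot raise the relevant suprema; \textbf{(4)} follows from the definition of $\sem{a}_\cE$ together with $e = last_{\Act^x}(\cE)$, $e \causal e_a$ and $e_a = last_{\barrx{a}{\cdot}}(\cE')$, which give $\sem{a}_{\cE'} = \val$, matching $ts'.regs(a)$.

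It then remains to verify the \textsc{Read} premises for $\langle ts, M\rangle \trans{rd(\loc,\val)}_\tid \langle ts', M\rangle$. We have $M(t).loc = x$ and $M(t).val = \val$ from ``preserves content'' and the choice of $\val$; put $v_{post} := ts.v_{read} \sqcup t$. The three update clauses, $regs(a) \mapsto \val@v_{post}$, $coh(x) \mapsto ts.coh(x) \sqcup v_{post}$ and $v_{read} \mapsto v_{post}$, follow by unfolding the $\cE'$-versions of the formulas for $ts'$: inserting $e_a$ as a bar for $a$ directly below $e$ enlarges $\priorBar_a$ (hence $\priorBar_\tau$ and $\priorBar_\tau^x$) by $e$ together with events that are already subsumed by $ts.v_{read}$ or are $\le t$, so the $ts'$-values equal the corresponding $ts$-values joined with $t$, i.e.\ $v_{post}$.

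I expect the hard part to be the last \textsc{Read} premise, the absence of intervening writes: $\forall t'.\ t < t' \le ts.v_{read} \sqcup ts.coh(x) \implies M(t').loc \ne x$, and its match with what $\addc^x_e$ records. By definition $\addc^x_e$ stamps the restriction $\{x\}$ onto exactly the flows $e \rdflow{L} d$ with $d \in (\priorFnc_\tau(\cE) \cap \Ff_\tau(\cE)) \cup \priorBar_\tau(\cE) \cup \Ff_\tau^x(\cE)$ --- precisely the set of events whose $\psi$-images generate $ts.v_{read} \sqcup ts.coh(x)$. So the argument is: take a maximiser $d^\ast$ of $\psi$ over that set; $t' > t = \psi(e)$ forces $d^\ast \ne e$ and $t' \le \psi(d^\ast)$; since $e = last_{\Act^x}(\cE)$ flow-chains all $x$-events, $d^\ast$ being on $x$ would give $\psi(d^\ast) \le t$, contradicting $t' > t$, so $d^\ast$ has location $\ne x$ --- whence $t' = \psi(d^\ast)$ settles the claim directly, while $t' < \psi(d^\ast)$ lets us invoke clause \textbf{(1.5)} ``preserves memory constraints'' on the $\{x\}$-restricted edge $e \rdflow{L'} d^\ast$ of $\cE'$ to conclude $M(t').loc \ne e.loc = x$. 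What needs care is (i) that such a flow $e \causal d^\ast$ genuinely exists in $\cE \oplus \barrx{a}{x}$ for $\addc^x_e$ to act on --- this rests on structural invariants of locally built event structures (all $x$-events flow-chained through $last_{\Act^x}$, and each load threading its read below the corresponding bar event), which I would isolate as a separate lemma and carry along the overall soundness induction --- and (ii) the purely mechanical but lengthy verification of the various $\bigsqcup$-identities between $\cE$ and $\cE'$ used above. Everything else is routine unfolding of the definitions of $\sem{\cdot}$ and the view formulas.
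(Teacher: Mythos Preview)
Your proposal is correct and follows essentially the same route as the paper: take $\psi := \psi'$, roll back exactly $v_{read}$, $coh(x)$ and $regs(a)$ to their $\cE$-computed values, and discharge the no-intervening-write premise of \textsc{Read} via the $\{x\}$-restricted edges that $\addc^x_e$ adds. If anything, your handling of that last premise is slightly more careful than the paper's --- you separate the endpoint case $t' = \psi(d^\ast)$ (where the strict-inequality memory-constraint clause does not apply and one argues directly that $d^\ast.loc \ne x$) and you explicitly flag the structural invariant that a flow $e \causal d^\ast$ must already be present for $\addc^x_e$ to act on it, both of which the paper's proof passes over silently.
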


\begin{proof}  
  For the proof, we elide the program statements. We write $\cE'$ for $\addc_e^x(\cE \oplus \barrx{a}{x})$. Let $\langle ts', M \rangle \in \sem{\cE'}$. 
  This means that there exists some mapping $\psi': \Ff(\cE') \cup \Pr(\cE') \cup \{e_{\inikw}\} \rightarrow dom(M)$ such that $\psi'$ can be used to show that $\langle ts',M\rangle$ satisfies $\cE'$, in particular by preservation of memory constraints for the location-restricted red flows for all timestamps $t'$, events $f$ such that $\psi'(e) < t' < \psi'(f)$ and $e \causal f$ and $f \in (\mathit{\priorFnc}_\tau(\cE') \cap \Ff_\tau(\cE')) \cup \mathit{\priorBar}_\tau(\cE') \cup \Ff_\tau^x(\cE')$ we have $M(t').loc \neq e.loc$. We now show that there exists some $\langle ts,M \rangle \in \sem{\cE}$ such that 
  \[ \langle ts, M \rangle \trans {rd(x,\val)}_{\tid} \langle ts', M \rangle \]
  Note that $M$ stays the same as we do not have any further promises. We define $ts$ to be the following (using $\psi:= \psi'$):  
  \begin{align*}
     ts.prom & :=  ts'.prom  & 
     ts.v_{wOld} & := ts'.v_{wOld} \\
     ts.v_{wNew} & := ts'.v_{wNew} & 
     ts.v_b & := ts'.v_b \qquad (b \neq a) \\ 
     ts.coh(x) & := \bigsqcup_{e \in \Ff_\tau^x(\cE) \cup \priorBar^x_\tau(\cE)} \psi(e) & 
     ts.v_a & :=  \bigcup_{e \in \priorBar_a(\cE)} \psi(e) \\
      ts.v_{read} & :=  \bigsqcup_{e \in
            (\mathit{\priorFnc}_\tau(\cE) \cap \Ff_\tid(\cE)) \cup \mathit{\priorBar}_\tau(\cE)} \psi(e) & 
            ts.v_C & := ts'.v_C 
  \end{align*} 
  Since $\cE'$ neither has additional fulfill nor read events nor fences nor tests and $M$ is the same and  $\priorFnc_\tau(\cE) = \priorFnc_\tau(\cE')$, we by definition get $\langle ts, M \rangle \in \sem{\cE}$. 
  Next, we need to determine the timestamp $t$ for reading, which in this case is $\psi(e)$. Since $\psi$ preserves contents and $\lfun(e) \in \{\prm{\tau}{x}{\val}, \ff{\tau}{x}{\val},\inikw\}$, we get $M(t).loc = x$ and $M(t).val = \val$ ($\val=0$ in case of reading initial values). Now consider some timestamp $t'$ such that $t < t' \leq ts.v_{read} \sqcup ts.coh(x)$. 
  Let $f$ be the event such that $\psi(f) = ts.v_{read} \sqcup ts.coh(x)$. This implies (definition of $ts.coh(x)$ and $ts_{v_{read}}$) that $f \in \Ff_\tau^x(\cE) \cup \priorBar^x_\tau(\cE) \cup (\mathit{\priorFnc}_\tau(\cE) \cap \Ff_\tid(\cE)) \cup \mathit{\priorBar}_\tau(\cE)$. Thus we also get $f \in (\mathit{\priorFnc}_\tau(\cE') \cap \Ff_\tau(\cE')) \cup \mathit{\priorBar}_\tau(\cE') \cup \Ff_\tau^x(\cE')$. By preservation of memory constraints we thus have $M(t').loc \neq e.loc$. 
  
  Next to the views. We need to show the correct afterstate for three views being changed by the read rule: $v_a, coh(x)$ and $v_{read}$. 
  \begin{itemize}
      \item $v_a$ is set to $ts.v_{read} \sqcup t$. We get 
         \begin{align*}
             ts.v_{read} \sqcup t &= \bigsqcup_{e \in (\priorFnc_\tau(\cE) \cap \Ff_\tau(\cE)) \cup \priorBar_\tau(\cE)} \psi(e) \sqcup t \\
             &= \bigsqcup_{e \in \priorBar_\tau(\cE')} \psi'(e) \\
             &= \bigsqcup_{e \in \priorBar_a(\cE')} \psi'(e) \\
             &= ts'.v_a
         \end{align*}
         as $\oplus$ in particular introduces a flow between the event read from (which is labelled with some action in $\Act^x$) and other bar events to the event $\barrx{a}{x}$. 
      \item $coh(x)$ is set to $coh(x) \sqcup v_{read} \sqcup t$. We get  
        \begin{small}
        \begin{align*}
            ts.coh(x) \sqcup ts.v_{read} \sqcup t &= \bigsqcup_{e \in \Ff_\tau^x(\cE) \cup \priorBar^x_\tau(\cE)} \psi(e) \sqcup \bigsqcup_{e \in
            (\mathit{\priorFnc}_\tau(\cE) \cap \Ff_\tid(\cE)) \cup \mathit{\priorBar}_\tau(\cE)} \psi(e) \sqcup t \\
            &= \bigsqcup_{e \in \Ff_\tau^x(\cE)} \psi(e) \sqcup \bigsqcup_{e \in \mathit{\priorBar}_\tau(\cE)} \psi(e) \sqcup t \\
            & = \bigsqcup_{e \in \Ff_\tau^x(\cE')} \psi(e) \sqcup \bigsqcup_{e \in \mathit{\priorBar}^x_\tau(\cE')} \psi(e) \\
            &= ts'.coh(x) 
        \end{align*}
      \end{small}

        This holds as the new bar event is flow after all other bars and all fences (by definition of $\oplus$). 
        \item $v_{read}$ is set to $v_{read} \sqcup t$. We get 
        \begin{align*}
             ts.v_{read} \sqcup t &= \bigsqcup_{e \in (\priorFnc_\tau(\cE) \cap \Ff_\tau(\cE)) \cup \priorBar_\tau(\cE)} \psi(e) \sqcup t \\
             &= \bigsqcup_{e \in (\priorFnc_\tau(\cE') \cap \Ff_\tau(\cE')) \cup \priorBar_\tau(\cE')} \psi(e) \\
             &= ts'.v_{read}
        \end{align*}
        \item Finally, we look at the value in register $a$ which is set to $\val$ (which is $M(t).val$). $\sem{a}_{\cE'} = \val$ as the reading event $e$ is  in $last_{\Act^x}(\cE')$ (as this is equal to $last_{\Act^x}(\cE)$ and reading from last events is required by the hypothesis of rule \textsc{PR-ReadEx}) and the event is in $\priorBar_a(\cE')$ (by definition of $\oplus$). 
  \end{itemize}
\end{proof}

 \begin{proposition}[Soundness of rule \textsc{PR-ReadNew}] \label{prop:readnew}
  Let $\langle ts',M \rangle \in \sem{(\cE \oplus \cE') \oplus \barrx{a}{x}  }$, $\cE'$ as of rule \textsc{PR-ReadNew}. Then there exists $\langle ts, M \rangle \in \sem{\cE  }$ such that  $\big\langle \langle \load{a}{x}, ts \rangle, M \big\rangle \trans {rd(x,\val)}_\tid \big \langle \langle \skipst, ts' \rangle, M \big \rangle$. 
 \end{proposition}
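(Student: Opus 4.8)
The plan is to follow the structure of the soundness proof for rule \textsc{PR-ReadEx} (\cref{prop:readex}), the extra work arising solely from the freshly introduced sequence $\cE'$ of read events. Write $\cE'' = (\cE \oplus \cE') \oplus \barrx{a}{x}$ and fix a witness $\psi''\colon \Ff(\cE'') \cup \Pr(\cE'') \cup \{e_\inikw\} \to dom(M)$ for $\langle ts', M \rangle \in \sem{\cE''}$. Let $e$ be the flow-maximal event of $\cE'$; by hypothesis $\cE' \in \Seq$ and $\lfun(e) = \prm{\tau'}{x}{\val}$. Set $t := \psi''(e)$; since $\psi''$ preserves content, $M(t).loc = x$ and $M(t).val = \val$, so $t$ is the timestamp that rule \textsc{Read} will read from. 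I then define the pre-state $ts$ exactly as in \cref{prop:readex}: $ts$ agrees with $ts'$ on the components that \textsc{Read} leaves untouched (namely $prom$, $\vwold$, $\vwnew$, $v_C$, and $v_b$ for every $b \neq a$), while $coh(x)$, $v_a$ and $\vread$ are recomputed from the view formulas of \cref{sec:meaning} via the restriction $\psi := E' \ndres \psi''$, whose domain is $\Ff(\cE) \cup \Pr(\cE) \cup \{e_\inikw\}$.

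First I would check $\langle ts, M \rangle \in \sem{\cE}$. Passing from $\cE$ to $\cE''$ only adds the read events $E'$ together with the single bar event $\barrx{a}{x}$; in particular no fulfill, fence or test of $\tid$ is added, the flow relation and the location restriction of $\cE$ embed into those of $\cE''$, $M$ is unchanged, and the promise set of every thread is unchanged --- the events of $E'$ are reads from promises of threads $\tau' \neq \tid$ that already lie in $M$ (by $\lfun(E') \subseteq \Act^{Rd} \setminus \Act_\tid$), and no new fulfill is introduced. Consequently the four conditions \textbf{(1)}--\textbf{(4)} of \cref{sec:meaning} for $\langle ts, M \rangle$ and $\cE$ follow from the corresponding conditions for $\langle ts', M \rangle$ and $\cE''$ together with the definition of $\psi$ and the view formulas used to build $ts$, exactly as in the analogous step of \cref{prop:readex}.

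Next I would establish the transition $\big\langle \langle \load{a}{x}, ts \rangle, M \big\rangle \trans {rd(x,\val)}_\tid \big\langle \langle \skipst, ts' \rangle, M \big\rangle$, with timestamp $t$. The step I expect to be the main obstacle is the side condition of rule \textsc{Read} forbidding intervening writes to $x$ strictly between $t$ and $ts.\vread \sqcup ts.coh(x)$: I claim it holds \emph{vacuously} because $ts.\vread \sqcup ts.coh(x) \le t$, and this is exactly the place where the freshness of $\cE'$ is used --- it is also the reason \textsc{PR-ReadNew} carries no $\addc$-operator, unlike \textsc{PR-ReadEx}. Every event contributing to $ts.\vread$ or $ts.coh(x)$ lies in $(\priorFnc_\tid(\cE) \cap \Ff_\tid(\cE)) \cup \priorBar_\tid(\cE) \cup \Ff_\tid^x(\cE) \cup \priorBar_\tid^x(\cE)$, and unwinding the definitions of $\oplus \cE'$ and $\oplus \barrx{a}{x}$ in \cref{fig:plus-operations} shows that each such event is $\causal^+$-below $e$ in $\cE''$: the operator $\oplus \cE'$ inserts flows out of the last fence, the last bars and the last $x$-fulfill of $\cE$ into $\cE'$, these ``last'' events $\causal^+$-dominate the respective $\priorFnc$-, $\priorBar$- and $\Ff^x$-sets of $\cE$, and $e$ flow-dominates all of $\cE'$ because $\cE'$ is sequential. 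Since $\psi''$ preserves flows, $\psi(f) = \psi''(f) < \psi''(e) = t$ for every such $f$, whence $ts.\vread \sqcup ts.coh(x) \le t$ and there is no $t'$ with $t < t' \le ts.\vread \sqcup ts.coh(x)$.

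Finally I would confirm that $ts'$ is indeed the \textsc{Read}-successor of $ts$. Rule \textsc{Read} sets $v_{post} = ts.\vread \sqcup t$, $regs(a) \mapsto \val @ v_{post}$, $coh(x) \mapsto ts.coh(x) \sqcup v_{post}$ and $\vread \mapsto v_{post}$, leaving all other components as in $ts$. Flow-maximality of $e$ in the sequential $\cE'$ and preservation of flows give $\bigsqcup_{f \in E'} \psi''(f) = \psi''(e) = t$; moreover all of $E'$ --- hence all $x$-reads among $E'$ --- is $\causal^+$-below $\barrx{a}{x}$ in $\cE''$, so $E' \subseteq \priorBar_a(\cE'')$ and the $x$-reads of $E'$ lie in $\priorBar_\tid^x(\cE'')$, while $\priorFnc_\tid(\cE'') = \priorFnc_\tid(\cE)$, $\Ff_\tid(\cE'') = \Ff_\tid(\cE)$, $\Ff_\tid^x(\cE'') = \Ff_\tid^x(\cE)$, $\priorBar_\tid(\cE'') = \priorBar_\tid(\cE) \cup E'$ and $\priorBar_b(\cE'') = \priorBar_b(\cE)$ for $b \neq a$ because no new fence, fulfill or test of $\tid$ is added. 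Substituting these identities into the view formulas of \cref{sec:meaning} for $\langle ts', M \rangle$ and comparing with the \textsc{Read}-updated components of $ts$ yields $ts'.v_a = v_{post}$, $ts'.coh(x) = ts.coh(x) \sqcup v_{post}$, $ts'.\vread = v_{post}$ and agreement of $ts'$ with $ts$ on all remaining components. For the recorded register value, $e = last_{\Act^x}(\cE'')$ is the read event and lies $\causal^+$-below $last_{\barrx{a}{\cdot}}(\cE'') = \barrx{a}{x}$, so $\sem{a}_{\cE''} = \val = M(t).val$, which matches the value written into $regs(a)$ by \textsc{Read}; this completes the argument.
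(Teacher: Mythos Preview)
Your proposal is correct and follows essentially the same route as the paper's proof: restrict the witness mapping to $\cE$, observe that the freshly introduced read event $e$ is flow-after every event determining $ts.v_{read}$ and $ts.coh(x)$ so that the side condition of \textsc{Read} is vacuous, and then verify that the post-state views collapse to $t = \psi''(e)$. Your write-up is in fact more explicit than the paper's (you spell out the component-by-component preservation and the set identities such as $\priorBar_b(\cE'') = \priorBar_b(\cE)$ for $b \neq a$); the only minor slack is that flow-preservation actually gives you the strict bound $ts.v_{read} \sqcup ts.coh(x) < t$ rather than $\leq t$, and the identity $\priorBar_\tid(\cE'') = \priorBar_\tid(\cE) \cup E'$ is an inclusion rather than an equality in general --- but neither affects your conclusions, since all you need is that $e$ realises the maximum in each of the relevant sets.
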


 \begin{proof} 
   For the proof, we elide the program statements. We write $\cE_{\mathit{after}}$ for $(\cE \oplus  \cE') \oplus \barrx{a}{x}$. Let $\langle ts', M \rangle \in \sem{\cE_{\mathit{after}}  }$. 
  This means that there exists some mapping $\psi': \Ff(\cE_{\mathit{after}}) \cup \Pr(\cE_{\mathit{after}}) \cup \{e_\inikw\} \rightarrow dom(M)$ such that $\psi'$ can be used to show that $\langle ts',M\rangle$ satisfies $\cE_{\mathit{after}}$. Let $e$ be the event in $\cE'$ which we read from and let $t=\psi'(e)$. We set $ts$ to be the thread state which we construct from $\cE$ and $M$ using $\psi$ for which we take $\psi'|_\cE$. By definition of $\cE_{\mathit{after}}$, $e$ is flow-after all fulfills of $x$, all fences plus reads in $\cE$ which are followed by a bar event (and $e$ not in $\cE$). As $\psi'$ is preserving flows, all events $f \in (\priorFnc_\tau(\cE) \cap \Ff_\tau(\cE)) \cup \priorBar(\cE)$ (which determine $ts.v_{read}$) have a timestamp less than $t$, hence $ts.v_{read} < t$. Similarly, all events $f \in \Ff_\tau^x(\cE) \cup \priorBar^x_\tau(\cE)$ (which determine $ts.coh(x)$) have a timestamp 
  less than $t$ and hence $ts.coh(x) < t$. Hence the condition of the read rule in the operational semantics about locations is trivially fulfilled. Moreover, $v_{post}$ is $t$. Thus, $v_a$ as well as $v_{read}$ and $coh(x)$ are set to $t$, and this then coincides with $ts'.v_a$,  (as $ts'.v_a = \bigsqcup _{f \in \priorBar_a(\cE_{after})} \psi'(f)$ and $e$ is the maximal event in this set), with $ts'.v_{read}$ (similarly, because $e$ is the maximal event in $(\priorFnc_\tau(\cE_{after}) \cap \Ff_\tau(\cE_{after})) \cup \priorBar(\cE_{after})$ and with $ts'.coh(x)$ 
  (similarly, because $e$ is the maximal event in $\Ff_\tau^x(\cE_{after}) \cup \priorBar_\tau^x(\cE_{after})$), respectively. Views $v_{wOld}$, $v_{wNew}$ and $v_C$ keep their values as of the operational semantics, and indeed calculating their views from $\cE$ and $\cE_{after}$ gives the same value as we neither introduce new test events (this would change $v_C$) nor new fulfills (would change $v_{wOld}$) nor fences (would change $v_{wNew}$). 
  
   Finally, we look at the value in register $a$ which is set to $\val$ (which is $M(t).val$). $\sem{a}_{\cE_{\mathit{after}}} = \val$ as the reading event $e$ with label $\prm{\tau'}{x}{\val}$ is  in $last_{\Act^x}(\cE_{\mathit{after}}) \cap \priorBar_a(\cE_{\mathit{after}})$. 
 \end{proof} 

\begin{proposition}[Soundness of rule \textsc{PR-Fence}]
  Let $\langle ts',M \rangle \in \sem{\cE \oplus \fence_\tau  }$. Then there exists $\langle ts, M \rangle \in \sem{\cE  }$ such that  $\big\langle \langle \kwc{dmb}, ts \rangle, M \big\rangle \trans {\fence}_{\tid} \big \langle \langle \skipst, ts' \rangle, M \big \rangle$. 
\end{proposition}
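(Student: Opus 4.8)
The argument follows the template of the preceding soundness propositions. Fix a witness $\psi'$ for $\langle ts',M\rangle \in \sem{\cE'}$, where $\cE' = \cE \oplus \fence_\tau$. Since the only event that $\oplus\fence_\tau$ adds carries label $\fence_\tau$, which lies outside $\Ff(\cE')\cup\Pr(\cE')\cup\{e_\inikw\}$, we have $\Ff(\cE')\cup\Pr(\cE')\cup\{e_\inikw\} = \Ff(\cE)\cup\Pr(\cE)\cup\{e_\inikw\}$, so we may take $\psi := \psi'$. We then define $ts$ from $(\cE,M,\psi)$ via the view equations of condition \textbf{(3)} in \cref{sec:meaning}: inherit $prom$, $coh$, $v_C$, $v_{wOld}$ and every register value from $ts'$ (these are untouched by rule \textsc{Fence}), and recompute $ts.v_{read} := \bigsqcup_{e\in(\priorFnc_\tau(\cE)\cap\Ff_\tid(\cE))\cup\priorBar_\tau(\cE)}\psi(e)$ and $ts.v_{wNew} := \bigsqcup_{e\in\priorFnc_\tau(\cE)\cap(\Ff_\tau(\cE)\cup\priorBar_\tau(\cE))}\psi(e)$ against $\cE$.

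First I would check $\langle ts,M\rangle \in \sem{\cE}$. Conditions \textbf{(1)}, \textbf{(2)} and \textbf{(4)} are immediate, since $M$ and $\psi$ are unchanged, $\cE$ has no read or fulfill events beyond those of $\cE'$, and the promise set and register function are inherited verbatim. For condition \textbf{(3)} the inherited views are correct once one observes that $\oplus\fence_\tau$ adds no new fulfill, bar or test event, so $\Ff_\tau$, $\Ff_\tau^x$, $\priorBar_\tau$, $\priorBar_\tau^x$ and $\priorTest_\tau$ coincide for $\cE$ and $\cE'$; hence the values of $coh(x)$, $v_C$ and $v_{wOld}$ that $ts'$ has against $\cE'$ are exactly the values required against $\cE$, while the values of $v_{read}$ and $v_{wNew}$ hold by construction.

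Next I would establish the transition $\langle\langle\kwc{dmb},ts\rangle,M\rangle \trans{\fence}_\tid \langle\langle\skipst,ts'\rangle,M\rangle$. All components other than $v_{read}$ and $v_{wNew}$ already agree, so it remains to show $ts'.v_{read} = ts'.v_{wNew} = ts.v_{read}\sqcup ts.v_{wOld}$. Unfolding the definitions and using $\priorFnc_\tau(\cE)\cap\Ff_\tid(\cE) \subseteq \Ff_\tau(\cE)$ gives $ts.v_{read}\sqcup ts.v_{wOld} = \bigsqcup_{e\in\Ff_\tau(\cE)\cup\priorBar_\tau(\cE)}\psi(e)$. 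For the right-hand sides, computed against $\cE'$, the key claim is
\[
  \Ff_\tau(\cE)\cup\priorBar_\tau(\cE) \ \subseteq\ \priorFnc_\tau(\cE') .
\]
By definition of $\cE\oplus\fence_\tau$ the fresh fence event $e$ is flow-after $last_\alpha(\cE)$ for every action $\alpha \in \Act\setminus\{\test{\tid}{\cdot}\}$; in particular $e = last_{\fence_\tau}(\cE')$, every fulfill $f$ of $\tau$ labelled $\ff{\tau}{x}{\val}$ satisfies $f \causal^* last_{\ff{\tau}{x}{\val}}(\cE)\causal e$, and every $f\in\priorBar_\tau(\cE)$ satisfies $f\causal^+ last_{\barrx{a}{x}}(\cE)\causal e$ for the corresponding register and location, so $f\causal^+ e$ in both cases, proving the claim. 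Consequently $(\priorFnc_\tau(\cE')\cap\Ff_\tid(\cE'))\cup\priorBar_\tau(\cE') = \Ff_\tau(\cE)\cup\priorBar_\tau(\cE) = \priorFnc_\tau(\cE')\cap(\Ff_\tau(\cE')\cup\priorBar_\tau(\cE'))$ (using $\Ff_\tau(\cE')=\Ff_\tau(\cE)$ and $\priorBar_\tau(\cE')=\priorBar_\tau(\cE)$), so both $ts'.v_{read}$ and $ts'.v_{wNew}$ equal $\bigsqcup_{e\in\Ff_\tau(\cE)\cup\priorBar_\tau(\cE)}\psi(e) = ts.v_{read}\sqcup ts.v_{wOld}$, which is exactly the update performed by rule \textsc{Fence}.

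\textbf{Main obstacle.} The one point requiring care is the displayed claim: that the newly inserted fence event $\causal^+$-dominates all fulfills of $\tau$ and all bar-prior events. This is precisely where the design of the $\oplus\fence_\tau$ operator --- drawing a flow edge from every $last_\alpha(\cE)$ with $\alpha\neq\test{\tid}{\cdot}$ --- is exploited, together with well-definedness of $last_\alpha$ and transitivity of $\causal^+$; a minor subcase arises when $\cE$ contains no fence event (so $\priorFnc_\tau(\cE)=\emptyset$) or no fulfills, which is handled directly since then the corresponding suprema are $0$.
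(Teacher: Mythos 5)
Your proposal is correct and follows essentially the same route as the paper's proof: take $\psi:=\psi'$, inherit all views except $v_{read}$ and $v_{wNew}$ (which are recomputed against $\cE$), and reduce the transition check to the identity $ts'.v_{read}=ts'.v_{wNew}=\bigsqcup_{e\in\Ff_\tau(\cE)\cup\priorBar_\tau(\cE)}\psi(e)$, justified by the fact that the fresh fence event is flow-after all last non-test events, so $\Ff_\tau(\cE)\cup\priorBar_\tau(\cE)\subseteq\priorFnc_\tau(\cE')$ while $\Ff_\tau$, $\priorBar_\tau$ (and the other prior sets) are unchanged. Your write-up is in fact slightly more explicit than the paper's on the key containment and on the degenerate cases, but the argument is the same.
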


\begin{proof} 
  For the proof, we elide the program statements. We write $\cE'$ for $\cE \oplus \fence_\tau$. Let $\langle ts', M \rangle \in \sem{\cE'   }$. 
  This means that there exists some mapping $\psi': \Ff(\cE') \cup \Pr(\cE') \cup \{e_\inikw\} \rightarrow dom(M)$ such that $\psi'$ can be used to show that $\langle ts,M\rangle$ satisfies $\cE'$. 
  We now show that there exists some $\langle ts,M \rangle \in \sem{\cE}$ such that 
  \[ \langle ts, M \rangle \trans {\fence}_{\tid} \langle ts', M \rangle \]
  Note that $M$ stays the same as we do not have any further promises. We define $ts$ to be the following by using $\psi:= \psi'$: 
  \begin{align*}
     ts.prom & :=  ts'.prom  & 
     ts.coh(x) & := ts'.coh(x) \\
     ts.v_{wOld} & := ts'.v_{wOld} & ts.v_r & :=  ts'.v_r \\
     ts.v_{wNew} & := \bigsqcup_{e \in
            \mathit{\priorFnc}_\tau(\cE) \cap (\Ff_\tid(\cE) \cup \mathit{\priorBar}_\tau(\cE))} \psi(e)  && \\
      ts.v_{read} & :=  \bigsqcup_{e \in
            (\mathit{\priorFnc}_\tau(\cE) \cap \Ff_\tid(\cE)) \cup \mathit{\priorBar}_\tau(\cE)} \psi(e) &&
  \end{align*} 
  Since $\cE'$ neither has additional fulfill, read nor bar events, $M$ is the same and  $\priorBar_\tau(\cE) = \priorBar_\tau(\cE')$ is the same, we by definition get $\langle ts, M \rangle \in \sem{\cE}$. Next to the views. We need to show the correct afterstate for two views being changed by the fence rule: $v_{wNew}$ and $v_{read}$. Both are set to $ts.v_{read}\sqcup ts.v_{wOld}$.
  \begin{itemize}
      \item For $v_{read}$ we get 
         \begin{align*}
             ts.v_{read}\sqcup ts.v_{wOld} 
             &= \bigsqcup_{e \in (\priorFnc_\tau(\cE) \cap \Ff_\tau(\cE)) \cup \priorBar_\tau(\cE)} \psi(e) \sqcup \bigsqcup_{e \in \Ff_\tau(\cE)} \psi(e) \\
             &= \bigsqcup_{e \in (\priorFnc_\tau(\cE) \cap \Ff_\tau(\cE)) \cup \priorBar_\tau(\cE) \cup \Ff_\tau(\cE)} \psi(e) \\
             &= \bigsqcup_{e \in (\priorFnc_\tau(\cE) \cap \Ff_\tau(\cE')) \cup \priorBar_\tau(\cE') \cup \Ff_\tau(\cE)} \psi(e) \\
             &= \bigsqcup_{e \in (\priorFnc_\tau(\cE) \cap \Ff_\tau(\cE')) \cup \priorBar_\tau(\cE') \cup (\priorFnc_\tau(\cE') \cap \Ff_\tau(\cE'))} \psi(e) \\
             &= \bigsqcup_{e \in (\priorFnc_\tau(\cE') \cap \Ff_\tau(\cE')) \cup \priorBar_\tau(\cE')} \psi'(e) \\
             &= ts'.v_{read}
         \end{align*}
         as $\oplus \fence_\tau$ introduces a flow from all last events to the new fence and so $\Ff_\tau(\cE)=\priorFnc_\tau(\cE') \cap \Ff_\tau(\cE')$. 
        \item  For $v_{wNew}$ we get 
        \begin{align*}
             ts.v_{read}\sqcup ts.v_{wOld} 
             &= \bigsqcup_{e \in (\priorFnc_\tau(\cE) \cap \Ff_\tau(\cE)) \cup \priorBar_\tau(\cE) \cup \Ff_\tau(\cE)} \psi(e) \\
             &= \bigsqcup_{e \in \priorBar_\tau(\cE) \cup \Ff_\tau(\cE)} \psi(e) \\
             &= \bigsqcup_{e \in \priorBar_\tau(\cE') \cup \Ff_\tau(\cE')} \psi'(e) \\
             &= \bigsqcup_{e \in \priorFnc_\tau(\cE') \cap (\priorBar_\tau(\cE') \cup \Ff_\tau(\cE'))} \psi'(e) \\
             &= ts'.v_{wNew}
        \end{align*}
        as $\priorBar_\tau(\cE')$ and $\Ff_\tau(\cE')$ are subsets of $\priorFnc_\tau(\cE')$.
  \end{itemize}
\end{proof}

\begin{proposition}[Soundness of rule \textsc{PR-Register}] \label{prop:register}
  Suppose that $\langle ts',M \rangle \in \sem{\cE \oplus \barrx{a}{exp}  }$. Then there exists $\langle ts, M \rangle \in \sem{\cE  }$ such that

  $\big \langle \langle a := exp, ts \rangle, M \big \rangle \trans {\lst{a}{exp} }_\tau   \big \langle  \langle \skipst,ts'\rangle, M \big \rangle$. 
\end{proposition}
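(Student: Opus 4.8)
The plan is to follow the recipe of the preceding propositions: writing $\cE'$ for $\cE \oplus \barrexp{a}{\expr}$ and letting $\psi'$ be a mapping witnessing $\langle ts',M\rangle\in\sem{\cE'}$, I reconstruct a thread state $ts$ that coincides with $ts'$ on every component except register $a$, verify $\langle ts,M\rangle\in\sem{\cE}$, and then check that the \textsc{Register} rule fires from $\langle ts,M\rangle$ to $\langle ts',M\rangle$. Since $\oplus\,\barrexp{a}{\expr}$ adds exactly one new event $e$ (a bar event for $a$) and no fulfill, read, fence or test event, we have $\Ff(\cE')=\Ff(\cE)$ and $\Pr(\cE')=\Pr(\cE)$, so $\psi'$ is already a total map $\Ff(\cE)\cup\Pr(\cE)\cup\{e_\inikw\}\to dom(M)$, and I take $\psi:=\psi'$.

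The only non-routine step is understanding how $e$ changes the $\priorBar$ sets. By construction the only flows incident to $e$ are the edges $(last_{\barrx{b}{\cdot}}(\cE),e)$ with $b\in R(\expr)\cup\{a\}$, so $e=last_{\barrx{a}{\cdot}}(\cE')$ while $last_{\barrx{b}{\cdot}}$ is unchanged for $b\neq a$. Tracing $\causal^+$-paths into $e$ in $\cE'$ — every such path ends with one of the new edges and otherwise lives entirely in $\cE$ — gives $\priorBar_b(\cE')=\priorBar_b(\cE)$ for $b\neq a$ and $\priorBar_a(\cE')=\bigcup_{b\in R(\expr)\cup\{a\}}\priorBar_b(\cE)$. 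As $R(\expr)\subseteq R(\tid)$ and $a\in R(\tid)$, the right-hand side is contained in $\priorBar_\tid(\cE)$, hence $\priorBar_\tid(\cE')=\priorBar_\tid(\cE)$ and likewise $\priorBar_\tid^x(\cE')=\priorBar_\tid^x(\cE)$; moreover $\priorFnc_\tid(\cE')=\priorFnc_\tid(\cE)$ and $\priorTest_\tid(\cE')=\priorTest_\tid(\cE)$ since no fulfill, fence or test events were added. Therefore every view equation of condition {\bf (3)} computes the same value from $\cE$ and from $\cE'$, and I may let $ts$ agree with $ts'$ on $prom$, $coh$, $v_{read}$, $v_{wNew}$, $v_{wOld}$, $v_C$ and on $regs(b)$ for $b\neq a$, with $ts.regs(a)$ fixed to the value $\sem{a}_\cE\,@\,v_a$, $v_a=\bigsqcup_{d\in\priorBar_a(\cE)}\psi(d)$, forced by conditions {\bf (3)} and {\bf (4)}. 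Checking $\langle ts,M\rangle\in\sem{\cE}$ is then immediate: $M$ is unchanged and $\psi$ still witnesses {\bf (1)}, {\bf (2)} is untouched, {\bf (3)} holds by the equalities just established, and {\bf (4)} holds for $b\neq a$ because neither the value nor the view changed (the last bar event for $b$ is the same in $\cE$ and $\cE'$) and for $a$ by construction.

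It remains to fire the \textsc{Register} rule. Put $\val_a @ u := ts.regs(a)$, so $u=v_a=\bigsqcup_{d\in\priorBar_a(\cE)}\psi(d)$, and let $\val @ v := \expeval{\expr}{ts.regs}$. Unfolding the definition of expression evaluation, $v=\bigsqcup_{b\in R(\expr)}ts.v_b=\bigsqcup_{b\in R(\expr)}\bigsqcup_{d\in\priorBar_b(\cE)}\psi(d)$, so $u\sqcup v=\bigsqcup_{d\in\bigcup_{b\in R(\expr)\cup\{a\}}\priorBar_b(\cE)}\psi(d)=\bigsqcup_{d\in\priorBar_a(\cE')}\psi'(d)=ts'.v_a$. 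The value $\val$ is, by definition of expression evaluation, exactly the result of evaluating $\expr$ under the register valuation of $\cE$, which is case (3) of the definition of $\sem{a}_{\cE'}$, so $\val=\sem{a}_{\cE'}$ and hence $\val @ (u\sqcup v)=\sem{a}_{\cE'} @ ts'.v_a=ts'.regs(a)$. Since $ts$ and $ts'$ agree on every other component, $ts'=ts[regs(a)\mapsto \val @ (u\sqcup v)]$, so the \textsc{Register} rule yields $\big\langle\langle a:=\expr,ts\rangle,M\big\rangle\trans{\lst{a}{\expr}}_\tid\big\langle\langle\skipst,ts'\rangle,M\big\rangle$ as required. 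The only real work is the $\priorBar$ bookkeeping of the second paragraph together with the identity $u\sqcup v = ts'.v_a$; everything else is routine unfolding of the definitions.
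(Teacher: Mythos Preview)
Your proof is correct and follows essentially the same approach as the paper's: define $\cE'=\cE\oplus\barrexp{a}{\expr}$, take $\psi:=\psi'$, let $ts$ agree with $ts'$ except on $regs(a)$, and verify the \textsc{Register} transition via the key identity $\priorBar_a(\cE')=\bigcup_{b\in R(\expr)\cup\{a\}}\priorBar_b(\cE)$. You are in fact more careful than the paper in explicitly checking that $\priorBar_\tid(\cE')=\priorBar_\tid(\cE)$ (and similarly for $\priorFnc_\tid$, $\priorTest_\tid$), which is needed to justify that the unchanged view components of $ts$ really do satisfy condition {\bf (3)} for $\cE$.
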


\begin{proof} 
  For the proof, we elide the program statements. We write $\cE'$ for $\cE \oplus \barrx{a}{exp}$. Let $\langle ts', M \rangle \in \sem{\cE'   }$. 
  This means that there exists some mapping $\psi': \Ff(\cE') \cup \Pr(\cE') \cup \mathsf{Ini} \rightarrow dom(M)$ such that $\psi'$ can be used to show that $\langle ts,M\rangle$ satisfies $\cE'$. 
  We now show that there exists some $\langle ts,M \rangle \in \sem{\cE}$ such that 
  \[ \langle ts, M \rangle \trans {\lst{a}{exp}}_{\tid} \langle ts', M \rangle \]
  Note that $M$ stays the same as we do not have any further promises. We define $ts$ to be the following by using $\psi:= \psi'$: 
  \begin{align*}
     ts.prom & :=  ts'.prom \\
     ts.coh(x) & := ts'.coh(x) \\
     ts.v_{wOld} & := ts'.v_{wOld} \\
     ts.v_{wNew} & := ts'.v_{wNew} \\
     ts.v_a & := \bigsqcup_{e \in \priorBar_a(\cE)} \psi(e) \\
      ts.v_{read} & := ts'.v_{read} 
  \end{align*} 
We need to show the correct afterstate for $v_a$ and $reg(a)$ being changed by the register rule. 
  \begin{itemize}
      \item For $v_{a}$ we get 
         \begin{align*}
             ts.v_a \sqcup \bigsqcup_{a'\in R(exp)}ts.v_{a'} 
             &= \bigsqcup_{e\in \priorBar_a(\cE)} \psi(e) \sqcup \bigsqcup_{a'\in R(exp)}\bigsqcup_{e\in \priorBar_{a'}(\cE)} \psi(e) \\
             &= \bigsqcup_{e\in \priorBar_a(\cE) \cup \bigcup_{a' \in R(exp)}\priorBar_{a'}(\cE) } \psi(e) \\
             &= \bigsqcup_{e\in \priorBar_a(\cE')} \psi'(e)
         \end{align*}
         as $\oplus \barrx{a}{exp}$ introduces for every $b\in R(exp)\cup \{a\}$ a flow from the last $\barrx{b}{\cdot}$ to $\barrx{a}{exp}$ and so $\priorBar_a(\cE')=\priorBar_a(\cE) \cup \bigcup_{a' \in R(exp)}\priorBar_{a'}(\cE)$.
	  \item For $reg(a)$ we need to show $\sem{a}_{\cE'}=\val$ with $\sem{a}_{reg}=\val@v$. Let $exp=e_1 op_1 \ldots op_{k-1} e_k$ with $\sem{e_i}_{reg}=\val_i@v_i$. We get 
	  	\begin{align*}
	  		\sem{a}_{reg} 
	  		&= \sem{exp}_{reg} \\
	  		&= \sem{e_1 op_1 \ldots op_{k-1} e_k}_{reg} \\
	  		&= (\val_1 \sem{op_1} \ldots \sem{op_{k-1}} \val_k)@(v_1 \sqcup \ldots \sqcup v_k)
		\end{align*}
		and
		\begin{align*}
			\sem{a}_{\cE'} 
	  		&= \sem{exp}_{\cE'} \\
	  		&= \sem{e_1 op_1 \ldots op_{k-1} e_k}_{\cE'} \\
	  		&= \sem{e_1}_{\cE'} \sem{op_1} \ldots \sem{op_{k-1}} \sem{e_1}_{\cE'} \\
	  		&= \val_1 \sem{op_1} \ldots \sem{op_{k-1}} \val_k
		\end{align*}
		as $\sem{e_i}_{\cE'}=\sem{e_i}_{\cE}$ and $\sem{e_i}_{\cE}=\sem{e_i}_{reg}$ for all $1\leq i\leq k$.		  	   
  \end{itemize}
\end{proof}

\begin{proposition}[Soundness of rule \textsc{PR-Assume}]\label{prop:assume}
  Suppose that $\langle ts',M \rangle \in \sem{\cE   \oplus \test{\tau}{bexp}  }$ such that $\sem{bexp}_\cE = \true$. Then there exists $\langle ts, M \rangle \in \sem{\cE  }$ such that  $\big\langle \langle \kwc{assume}\ bexp , ts \rangle, M \big\rangle \trans {asm(bexp))}_\tid \big \langle \langle \skipst, ts' \rangle, M \big \rangle$. 
\end{proposition}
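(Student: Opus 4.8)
The argument follows the same template as the preceding soundness propositions for the atomic rules. Write $\cE'$ for $\cE \oplus \test{\tau}{\bexp}$, fix $\langle ts', M\rangle \in \sem{\cE'}$ and let $\psi'$ be the mapping witnessing this. Executing the assume statement leaves the memory untouched and adds neither a fulfill nor a read event, so $\Ff(\cE') = \Ff(\cE)$, $\Pr(\cE') = \Pr(\cE)$ and one may take $\psi := \psi'$. The plan is then to define $ts$ to agree with $ts'$ on every component except the control view, setting $ts.v_C := \bigsqcup_{e \in \priorTest_\tau(\cE)} \psi(e)$, which is exactly what condition {\bf (3)} of \cref{sec:meaning} forces.

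First I would verify $\langle ts, M\rangle \in \sem{\cE}$. Condition {\bf (1)} transfers verbatim: $\cE$ and $\cE'$ have the same fulfill and read events, the new flows created by $\oplus\,\test{\tau}{\bexp}$ all point to the fresh test event which lies outside $dom(\psi)$, and $\oplus\,\test{\tau}{\bexp}$ does not alter $\lres$, so both ``preserves flows'' and ``preserves memory constraints'' for $\cE$ follow from those for $\cE'$. Condition {\bf (2)} is immediate, since the open-promise set is unchanged, and condition {\bf (4)} holds because no bar event is added, so register values and the views $v_a$ are unchanged. For the remaining views of condition {\bf (3)} note that $\priorFnc_\tau$, $\priorBar_\tau$, $\priorBar^x_\tau$, $\Ff_\tau$ and $\Ff^x_\tau$ never involve test events and no fulfill, fence or bar event has been added, so all of these sets coincide for $\cE$ and $\cE'$; hence the $ts$-values of $coh$, $\vwold$, $\vwnew$ and $v_{read}$ equal their $ts'$-values, while $ts.v_C$ matches its defining formula by construction.

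Next I would establish the operational step $\langle ts, M\rangle \trans{asm(\bexp)}_\tau \langle ts', M\rangle$. Since $\sem{\bexp}_\cE = \true$ and, by condition {\bf (4)}, $ts.regs(a) = \sem{a}_\cE\,@\,v_a$ for each $a \in R(\bexp)$, the compositional evaluation of $\bexp$ gives $\expeval{\bexp}{ts.regs} = \true\,@\,v$ with $v = \bigsqcup_{a \in R(\bexp)} v_a$, which is the premise of rule \textsc{Assume}. It then remains to check that the rule's update $ts' = ts[v_C \mapsto ts.v_C \sqcup v]$ is matched, i.e.\ $ts'.v_C = ts.v_C \sqcup v$. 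Unfolding, $ts'.v_C = \bigsqcup_{e \in \priorTest_\tau(\cE')} \psi'(e)$; the definition of $\oplus\,\test{\tau}{\bexp}$ places the new test event immediately above $last_{\barrx{a}{\cdot}}(\cE)$ for every $a \in R(\bexp)$, from which one reads off $\priorTest_\tau(\cE') = \priorTest_\tau(\cE) \cup \bigcup_{a \in R(\bexp)} \priorBar_a(\cE)$. Taking suprema and using $ts.v_a = \bigsqcup_{e \in \priorBar_a(\cE)} \psi(e)$ gives $ts'.v_C = ts.v_C \sqcup \bigsqcup_{a \in R(\bexp)} ts.v_a = ts.v_C \sqcup v$, which closes the step.

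The main obstacle is this last bookkeeping step: one must argue that the freshly added test event really is $last_{\test{\tau}{\cdot}}(\cE')$ and that its $\causal^+$-ancestors among the read/fulfill/$\inikw$ events are precisely $\priorTest_\tau(\cE)$ together with $\bigcup_{a \in R(\bexp)} \priorBar_a(\cE)$ --- a fact that rests on the precise shape of the $\oplus\,\test{\tau}{\bexp}$ operation (and on test events accumulating monotonically along a thread's single execution). Everything else is a routine transfer of the matching conditions, exactly as in the soundness proofs of \textsc{PR-Fence} and \textsc{PR-Register}.
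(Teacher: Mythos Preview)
Your proposal is correct and follows essentially the same approach as the paper's own proof: take $\psi := \psi'$, let $ts$ agree with $ts'$ on all components except $v_C$, verify $\langle ts,M\rangle \in \sem{\cE}$ by noting that no read/fulfill/fence/bar events are added, and close with the computation $ts'.v_C = ts.v_C \sqcup v$ via $\priorTest_\tau(\cE') = \priorTest_\tau(\cE) \cup \bigcup_{a\in R(\bexp)} \priorBar_a(\cE)$. Your explicit identification of the ``main obstacle'' (that the fresh test event must be $last_{\test{\tau}{\cdot}}(\cE')$ and that its read/fulfill ancestors decompose as claimed) is in fact a point the paper's proof leaves implicit in its final equality step.
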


\begin{proof} 
For the proof, we elide the program statements. We write $\cE'$ for $\cE \oplus \test{\tau}{bexp}$. Let $\langle ts', M \rangle \in \sem{\cE'}$. 
  This means that there exists some mapping $\psi': \Ff(\cE') \cup \Pr(\cE') \cup \{e_\inikw\} \rightarrow dom(M)$ such that $\psi'$ can be used to show that $\langle ts',M\rangle$ satisfies $\cE'$. 
  We now show that there exists some $\langle ts,M \rangle \in \sem{\cE}$ such that 
  \[ \langle ts, M \rangle \trans {\test{\tid}{bexp}}_\tid \langle ts', M \rangle \]
  Note that $M$ stays the same as we do not have any further promises. We define $ts$ to be the following by using $\psi:= \psi'$: We take $ts.v := ts'.v$ for all views except for $v_C$. For the control dependence view $v_C$ we take $ts.v_C := \bigsqcup_{e \in \priorTest_\tau(\cE)} \psi(e)$. Moreover, we let $ts.regs(a) = \val_a@ts.v_a$ where $\val_a = \sem{a}_\cE$. 
  Note that $ts \in \sem{\cE}$. 
  Furthermore, since $\sem{bexp}_\cE = \true$ (precondition of proof rule for assume), we get $\sem{bexp}_{ts.regs} = \true@v$ for $v=v_{a_1} \sqcup \ldots \sqcup v_{a_m}$ such that $R(bexp) = \{a_1, \ldots, a_m\}$. 
  Hence the transition $asm(bexp)$ is enabled in $ts$. It remains to be shown that we reach $ts'$ by taking this transition. The interesting case is $v_C$: 
  \begin{eqnarray*}
      ts.v_C \sqcup v & = & \bigsqcup_{e \in \priorTest_\tau(\cE)} \psi(e) \sqcup v \\
                      & = & \bigsqcup_{e \in \priorTest_\tau(\cE)} \psi(e) \sqcup v_{a_1} \sqcup \ldots v_{a_n} \\
                      & = & \bigsqcup_{e \in \priorTest_\tau(\cE)} \psi(e) \sqcup \bigsqcup_{e \in \priorBar_{a_1}(\cE)} \psi(e) \sqcup \ldots \sqcup  \bigsqcup_{e \in \priorBar_{a_m}(\cE)} \psi(e) \\
                      & = & \bigsqcup_{e \in \priorTest_\tau(\cE')} \psi(e)
  \end{eqnarray*} 
  \end{proof}

\smallskip
\noindent 
In addition, we need to (define and) show the soundness of the proof rule \textsc{Parallel}. This is a bit more complex. 
We start with two propositions on the existence of memories and on thread states belonging to combined and individual event structures. 

\begin{proposition} \label{prop:existence} 
   Let $\cE_1,\ldots,\cE_n$ be local assertions of threads $\tid_1,\ldots,\tid_n$, respectively. If $\cE_1,\ldots,\cE_n$ are synchronizable and $C$ is an interference free configuration of $\conf(\cE_1 ||\ldots|| \cE_n)$, then there exists some memory $M$ and thread states $ts_1,\ldots,ts_n$ (of threads $\tid_1,\ldots,\tid_n$, respectively) such that $\langle ts_i, M \rangle \in \sem{\cE_C}$, $i=1,\ldots,n$.  
\end{proposition}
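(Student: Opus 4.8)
The plan is to extract a memory $M$ together with a timestamp mapping $\psi$ from the linearization $\prec$ that interference freedom supplies, and then to \emph{define} the thread states $ts_1,\dots,ts_n$ directly by the equations of conditions \textbf{(2)}--\textbf{(4)} in the definition of ``matches'': once $M$ and $\psi$ are fixed, these three conditions pin the promise set, all views, and the register values of each thread uniquely. So the only real work is to produce $M$ and a $\psi$ meeting condition \textbf{(1)}, and to observe that the suprema occurring in \textbf{(3)}--\textbf{(4)} range over subsets of $dom(\psi)$, so that those definitions make sense.

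First I would record two structural facts about $\cE_C$. Since $C$ has no unsynchronised reads, no event of $C$ has the form $(*,\dots,*,e_i,*,\dots,*)$ with $e_i\in\Pr(\cE_i)$; and by the labelling rule of $\cE_1\|\dots\|\cE_n$ every synchronised event is either $e_\inikw$ or carries a fulfill label. Hence $\Pr(\cE_C)=\emptyset$, so the only events of $\cE_C$ labelled over some $\Act^x$ are $e_\inikw$ and the events of $\Ff(\cE_C)$, and $dom(\psi)=\Ff(\cE_C)\cup\{e_\inikw\}$. Second, $e_\inikw$ is $\causal^+_{\cE_C}$-below every such event: each local assertion is built from $\Ini$ by the $\oplus$-operations of \cref{fig:plus-operations}, each of which attaches a new fulfill (or read) event causally above $e_\inikw$ (or above an event inductively already above it), and parallel composition only adds order. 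Taking the total order $\prec$ from memory-consistent linearizability of $C$, which extends $\causal^+_{\cE_C}$ on these events, $e_\inikw$ is therefore $\prec$-minimal. Enumerate the events of $\Ff(\cE_C)\cup\{e_\inikw\}$ as $e_\inikw=g_0\prec g_1\prec\dots\prec g_m$, set $\psi(g_k)=k$, and let $M$ be the memory of length $m+1$ with $M(0)=\inikw$ and $M(k)=\langle x:=\val\rangle_\tau$ whenever $\lfun_C(g_k)=\ff{\tau}{x}{\val}$.

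Then I would check condition \textbf{(1)} clause by clause. The mapping $\psi$ initializes $e_\inikw$ at zero by construction; it is consecutive for every thread because $M$ is built \emph{only} from $\Ff(\cE_C)\cup\{e_\inikw\}$, so any memory entry of thread $\tau$ below $\psi(e)$ is already $\psi(d)$ for some $d\in\Ff_\tau(\cE_C)$; it preserves content by the definition of $M(k)$ (and there are no read labels to consider); it preserves flows because $\causal^+_{\cE_C}$ restricted to $dom(\psi)$ is contained in $\prec$ and $\psi$ is the order isomorphism from $(dom(\psi),\prec)$ to $(\{0,\dots,m\},<)$; and it preserves memory constraints because if $d\rdflow{L}e$ in $\cE_C$ and $\psi(d)<t<\psi(e)$ then $t=k$ for some $k$, $g_k$ lies $\prec$-strictly between $d$ and $e$, and the second clause of memory-consistent linearizability gives $g_k.loc\notin L$, i.e.\ $M(t).loc\notin L$. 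Finally I would \emph{define} $ts_i.prom$, the views of $ts_i$ and the registers $R(\tau_i)$ by the equations of \textbf{(2)}--\textbf{(4)} with $\cE:=\cE_C$, $\tau:=\tau_i$ and this $\psi$; all index sets occurring there are subsets of $dom(\psi)$, so the suprema are defined, and $M_{\tau_i}=\psi(\Ff_{\tau_i}(\cE_C))$ forces $ts_i.prom=\emptyset$. Setting $\vec{ts}(\tau_i):=ts_i$ then yields $\langle\vec{ts},M\rangle\in\sem{\cE_C}$, which is the claim.

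The hard part is really the first of the two structural facts: seeing that an interference-free configuration contains no bare read events — they have all been merged into fulfill-labelled synchronised events — which is exactly what lets $\psi$ be nothing more than ``position in the linearized write order''. After that, the delicate point is matching the ``preserves memory constraints'' clause of \textbf{(1)} against the second clause of memory-consistent linearizability, for which one must unwind how location restrictions $\lres$ accumulate along a flow of $\cE_1\|\dots\|\cE_n$; the $\prec$-minimality of $e_\inikw$ and the observation that conditions \textbf{(2)}--\textbf{(4)} are definitions rather than obligations are then routine bookkeeping.
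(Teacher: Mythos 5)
Your proposal is correct and follows essentially the same route as the paper's proof: take the memory-consistent linearization $\prec$ of the interference-free configuration, build $M$ by placing the (fulfill-labelled) events of $\cE_C$ at their $\prec$-positions, let $\psi$ be the position map, and then read off the thread states from conditions \textbf{(2)}--\textbf{(4)}. The paper merely asserts that $\psi$ satisfies the five clauses of condition \textbf{(1)} ``by construction and interference freedom,'' so your explicit checks (including the observation that no unsynchronised reads plus the labelling of synchronised events give $\Pr(\cE_C)=\emptyset$) are a faithful, slightly more detailed elaboration of the same argument rather than a different one.
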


\begin{proof}
 Let $C$ be an interference free configuration of $\cE_1 ||\ldots|| \cE_n$. Let $\prec$ be the total memory consistent linearizable order. We assume it to be $e_\inikw \prec e_1 \prec \ldots \prec e_m$ for $E_C = \{ e_\inikw, e_1, \ldots, e_m\}$. We now construct the following memory $M$:
 \begin{align*}
     M(0) & = \inikw \\
     M(i) &= \langle x,\val\rangle_\tau \quad \text{ if } \lfun(e_i) \in \{\ff{\tau}{x}{\val},\prm{\tau}{x}{\val}\}
 \end{align*}
 With this, we let $\psi$ map $e_\inikw$ to 0 and $e_i$ to $i$. By construction and $C$ being interference free, $\psi$ is (i) initializing at zero, (ii) consecutive, (iii) content preserving, (iv) flow preserving, and (v) memory-constraints preserving. The thread states $ts_1,\ldots,ts_n$ can then simply be calculated from $M$, $\psi$ and $C$. 
\end{proof}

The second proposition is a sort of compositionality result of parallel composition with respect to the global state (views and memory) of a program and its projection onto the threads and their individual proof outlines.  

\begin{proposition} \label{prop:localize} 
  If $\cE_1$ to $\cE_n$ are synchronizable and $C$ is an interference free configuration of $\conf(\cE_1 || \ldots || \cE_n)$, then $\langle ts_i, M \rangle \in \sem{\cE_C}$ implies $\langle ts_i, M \rangle \in \sem{\cE_i}$, $i\in \{1, \ldots, n\}$. 
\end{proposition}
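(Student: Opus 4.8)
The plan is to turn the witness $\psi$ for $\langle ts_i, M\rangle \in \sem{\cE_C}$ into a witness $\psi_i$ for $\langle ts_i, M\rangle \in \sem{\cE_i}$ by \emph{projecting along the $i$-th component}. First I would pin down the projection. Since $C$ is thread-covering, every $e\in E_i$ is the $i$-th component of at least one event of $E_C$; since $C$ is conflict-free and, by ``no unsynchronised reads'' together with synchronisability of $\cE_1,\dots,\cE_n$, every read of $\cE_i$ occurs in $C$ only inside a synchronised event, this event is in fact \emph{unique} — two distinct events of $C$ sharing a non-$*$ $i$-th component conflict by the ``conflicts on differently paired events'' clause. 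Write $\gamma_i(e)$ for it. A short case distinction on the shape of events of a parallel composition gives the properties I will use: $\gamma_i(e_\inikw)=e_\inikw$; $\gamma_i$ preserves the location of an event (with the usual convention for $e_\inikw$), maps a fulfill label $\ff{\tid_i}{x}{\val}$ of $\cE_i$ to the same label, and maps a read label $\prm{\tid'}{x}{\val}$ of $\cE_i$ to the complementary fulfill label $\ff{\tid'}{x}{\val}$; hence $\gamma_i$ restricts to a bijection $\Ff_{\tid_i}(\cE_i)\to\Ff_{\tid_i}(\cE_C)$ and sends $\Ff(\cE_i)\cup\Pr(\cE_i)\cup\{e_\inikw\}$ into $\Ff(\cE_C)\cup\{e_\inikw\}\subseteq dom(\psi)$; $e\causal^{+}_{\cE_i}e'$ implies $\gamma_i(e)\causal^{+}_{\cE_C}\gamma_i(e')$, obtained by lifting a $\cE_i$-flow-chain step by step (every event of $\cE_i$, including bar/fence/test events, has a unique lift to $C$, and the lifts of two $\cE_i$-flow-adjacent events are $\cE_C$-flow-adjacent through component $i$); and for every bar/fence/test action $\alpha$ of $\tid_i$, $last_\alpha(\cE_C)$ is the (unsynchronised) lift of $last_\alpha(\cE_i)$.

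With $\gamma_i$ in hand I set $\psi_i := \psi\circ\gamma_i$ on $\Ff(\cE_i)\cup\Pr(\cE_i)\cup\{e_\inikw\}$ and verify the clauses of ``$\langle ts_i, M\rangle$ matches $\cE_i$'' at thread $\tid_i$. The five sub-conditions of clause~(1) transfer directly from the ones for $\psi$ via the properties above: flow preservation of $\psi_i$ is immediate, content- and location-preservation use that $\gamma_i$ keeps the location and turns $\prm{\tid'}{x}{\val}$ into $\ff{\tid'}{x}{\val}$ (same $x$, same $\val$), and consecutiveness uses bijectivity on $\Ff_{\tid_i}$. Clause~(2) follows from that same bijectivity, since $\psi_i(\Ff_{\tid_i}(\cE_i))=\psi(\Ff_{\tid_i}(\cE_C))$, so $ts_i.prom = M_{\tid_i}\setminus\psi_i(\Ff_{\tid_i}(\cE_i))$. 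Clause~(4) follows from $\sem{a}_{\cE_i}=\sem{a}_{\cE_C}$, again by label/location preservation and the fact that $last_{\barrx{a}{\cdot}}$ corresponds under $\gamma_i$.

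The crux is clause~(3), consistency of the six views. Each of $v_C$, $\vwold$, $\vwnew$, $\vread$, $v_a$ and $coh(x)$ is a supremum $\bigsqcup_{e\in P}\psi(e)$ over a set built from ``prior'' sets of the form $\{e\in dom(\psi)\mid e\causal^{+}(\text{some last-}\alpha\text{ event of }\tid_i)\}$, sometimes further intersected with $\Act^x$ or with $\Ff_{\tid_i}$, or combined by $\cap$/$\cup$. Writing $P_i$, $P_C$ for the corresponding sets of $\cE_i$ and $\cE_C$, I must show $\bigsqcup_{e\in P_i}\psi_i(e)=\bigsqcup_{c\in P_C}\psi(c)$. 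The inequality ``$\le$'' is immediate: $\gamma_i(P_i)\subseteq P_C$ with matching $\psi$-values. The inequality ``$\ge$'' is the real obstacle, because $P_C$ may contain events that are \emph{not} of the form $\gamma_i(e)$ with $e\in P_i$ — e.g.\ a fulfill of another thread $\tid_k$ whose only appearance in $C$ is unsynchronised, yet which happens to be flow-below the last-$\alpha$ event of $\tid_i$.

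I would settle ``$\ge$'' by a cofinality argument. Given $c\in P_C$ with a flow-chain $c=c^0\causal_{\cE_C}c^1\causal_{\cE_C}\cdots\causal_{\cE_C}c^m=\gamma_i(\ell)$, where $\ell=last_\alpha(\cE_i)$, set $D=\{d\in E_C\mid \pi_i(d)\neq *\ \text{and}\ \pi_i(d)\causal^{+}_{\cE_i}\ell\}$ and let $j^{*}$ be the least index with $c^{j^{*}}\in D$; this exists because the last step of the chain enters $\gamma_i(\ell)$ through component $i$, forcing $\pi_i(c^{m-1})\causal_i\ell$, so $c^{m-1}\in D$. If $j^{*}=0$ then $c=\gamma_i(\pi_i(c))$ with $\pi_i(c)\in dom(\psi_i)$ and $\pi_i(c)\causal^{+}_{\cE_i}\ell$, hence $\pi_i(c)\in P_i$ — here, for the $coh(x)$ and $\priorBar$ variants, one uses that every read event of $\cE_i$ is immediately flow-followed by a bar event on its own register and so always lies in the corresponding location-indexed prior-bar set. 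If $j^{*}>0$, inspect the step $c^{j^{*}-1}\causal_{\cE_C}c^{j^{*}}$: it cannot go through component $i$ (else $c^{j^{*}-1}\in D$, contradicting minimality), so it goes through some $k\neq i$; then $c^{j^{*}}$ has non-$*$ entries at both positions $i$ and $k$, i.e.\ $c^{j^{*}}$ is synchronised, hence $c^{j^{*}}\in\Ff(\cE_C)\subseteq dom(\psi)$ and $c^{j^{*}}=\gamma_i(\pi_i(c^{j^{*}}))$ with $\pi_i(c^{j^{*}})\in P_i$; since $c\causal^{+}_{\cE_C}c^{j^{*}}$ and $\psi$ is monotone along $\causal^{+}_{\cE_C}$ (this is exactly the flow-preservation sub-condition of clause~(1) for $\psi$), we conclude $\psi(c)<\psi(c^{j^{*}})\le\bigsqcup_{e\in P_i}\psi_i(e)$. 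No auxiliary lemma beyond flow-monotonicity of $\psi$ is needed, so the argument is self-contained modulo the structural facts about $\gamma_i$ established in the first paragraph.
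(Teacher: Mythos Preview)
Your proposal is correct and follows essentially the same approach as the paper: define $\psi_i$ by projecting $\psi$ along the $i$-th component (your $\gamma_i$ is exactly the inverse of the paper's ``for $e_i$ pick the unique tuple in $C$ with $i$-th coordinate $e_i$''), check the five sub-conditions of clause~(1) and clause~(2) by transport along that projection, and handle the view clauses via a cofinality argument that locates, on any flow-chain from $c\in P_C$ to the last-$\alpha$ event of $\tid_i$, a first event whose $i$-th coordinate is non-$*$ --- which must then be synchronised and hence lies in $dom(\psi_i)$. The paper's proof sketches the same chain argument for $v_C$ and waves at the remaining views with ``similar reasoning''; your write-up with the set $D$ and the index $j^{*}$ makes the same idea explicit and is, if anything, more careful than the paper's version.
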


\begin{proof} 
   Let $\langle ts_i, M \rangle \in \sem{\cE_C}$. 
   Then there exists a mapping $\psi : \Pr(\cE_C) \cup \Ff(\cE_C) \cup \{ e_\inikw \} \rightarrow dom(M)$. 
   Out of $\psi$, we construct $\psi_i : \Pr(\cE_i) \cup \Ff(\cE_i) \cup \{ e_\inikw \} \rightarrow dom(M)$ by letting 
   \begin{align*}
       \psi_i:  \quad & e_\inikw   \mapsto 0   \\
                & e_i   \mapsto t &\quad \text{ if } \exists (e_1,\ldots, e_i, \ldots, e_n) \in E_C: \psi(e_1,\ldots, e_i, \ldots, e_n) = t 
   \end{align*}
   We have to show that this $\psi_i$ is well defined and total. 
   \begin{itemize}
       \item it is total, because $C$ is thread-covering,
       \item well defined:\\
         assume exists another, different event
         $(e_1',\ldots, e_i, \ldots, e_n') \in E_C$. Since
         $e_i \neq *$, we have (by the definition of parallel
         composition) that
         $$(e_1,\ldots, e_i, \ldots, e_n) \# (e_1', \ldots, e_i,
         \ldots,e_n')$$ and then by definition of configuration we
         cannot have both events in $C$.
   \end{itemize}
   We further need to show that $\psi_i$ satisfies the conditions (i) to (v)  of $\sem{\cE_i}$. 
   \begin{enumerate}
       \item it initializes at zero: by construction, 
       \item consecutive for $\tid_i$: follows because $\psi$ is consecutive, 
       \item content preserving: by construction,
       \item flow order preserving:
          let $e_i \causal^+_i d_i$. Because $C$ is thread-covering, there exist events $(e_1,\ldots,e_i, \ldots,e_n),$ $(d_1,\ldots, d_i, \ldots, d_n) \in E_C$ such that 
          \[(e_1,\ldots,e_i, \ldots,e_n) \causal^+ (d_1,\ldots, d_i, \ldots, d_n).\] 
          Hence $\psi(e_1,\ldots,e_i, \ldots,e_n) < \psi(d_1,\ldots, d_i, \ldots, d_n)$ which implies $\psi_i(e_i)< \psi_i(d_i)$ by construction. 
        \item memory constraints preserving: let $d_i \rdflow{L}_i e_i$, then there exists
          $$(d_1,\ldots, d_i, \ldots, d_n) \rdflow{L'} (e_1,\ldots,e_i, \ldots,e_n)$$ for some $L' \supseteq L$. Hence for all $t$ such that $\psi(d_1,\ldots, d_i, \ldots, d_n) < t < \psi(e_1,\ldots,e_i, \ldots,e_n)$ we get $M(t).loc \neq (d_1,\ldots, d_i, \ldots, d_n).loc$. Hence we also have $\forall t: \psi_i(d_i) < t < \psi_i(e_i)$: $M(t).loc \neq d_i.loc$.
   \end{enumerate}
   We further need to show that the promises are ok: \\
   We have $ts_i.prom = M_{\tid_i} \setminus \psi(\Ff_{\tid_i}(\cE_C))$. By the definition of labelling in parallel composition and the fact that all $\cE_j$, $j \neq i$, contain no events labelled $\ff{\tid_i}{\cdot}{\cdot}$\footnote{This holds because the proof rules only introduce fulfill events of a thread $\tid$ in the proof outline of $\tid$.}, we get that for all $(e_1,\ldots,e_i, \ldots,e_n) \in E_C$ s.t.~$\lfun(e_1,\ldots,e_i, \ldots,e_n) = \ff{\tid_1}{\cdot}{\cdot}$ we have $\lfun_i(e_i) = \ff{\tid_i}{\cdot}{\cdot}$. Furthermore $\psi(e_1,\ldots,e_i, \ldots,e_n) = \psi_i(e_i)$. Hence $M_{\tid_i}\setminus \psi(\Ff_{\tid_i}(\cE_C)) = M_{\tid_i} \setminus \psi_i(\Ff_{\tid_i}(\cE_i))$. \\
   Next to the views: \\
   First $v_C$. We have
   $$
   ts_i.v_C = \bigsqcup_{(e_1,\ldots,e_n) \in \priorTest_{\tid_i}(\cE_C)} \psi(e_1,\ldots,e_n).$$ 
   Recall that  $\priorTest_{\tid_i}(\cE_C) = \{ (e_1,\ldots,e_n) \in \Pr(\cE_C) \cup \Ff(\cE_C) \cup \{e_\inikw\} \mid \exists (e_1',\ldots,e_n') \in last_{\test{\tid_i}{\cdot}}(\cE_C): (e_1,\ldots,e_n) \causal^+ (e_1',\ldots,e_n')\}$. 
   First, note that $e_j' = *$, $j \neq i$ (since this is a test-labelled event which is not synchronized). 
   Second, note that we do not necessarily have $e_i \causal_i^+ e_i'$. However, there exists a sequence  $(e_1^1,\ldots,e_n^1), \ldots, (e_1^m,\ldots,e_n^m)$ such that 
   \[ (e_1,\ldots,e_n) = (e_1^1,\ldots,e_n^1) \causal \ldots \causal (e_1^n,\ldots,e_n^m) = (e_1',\ldots,e_n') \] 
   with some $j, 1 \leq j \leq m-1$ s.t.~$(e_1^j,\ldots, e_n^j) \in sync(\cE_1,\ldots,\cE_n)$ and 
   \[ e_i^j  \causal_1 \ldots \causal_1 e_i^n \ . \]
   We furthermore have $\psi(e_1^j,\ldots,e_n^j) = \psi_i(e_i^j)$ and $\psi(e_1,\ldots,e_n) < \psi(e_1^j,\ldots,e_n^j)$ (by $\psi$ being flow preserving). 
   Thus $\bigsqcup_{e \in \priorTest_{\tid_i}(\cE_C)} \psi(e) = \bigsqcup_{e \in \priorTest_{\tid_i}(\cE_i)} \psi_i(e)$. \\
   The other views have a similar reasoning, because $\priorBar$ and $\priorFnc$ are also sets of events prior to an event of the form $(*, \ldots, *,\cdot,*, \ldots, *)$. Moreover, if $(e_1,\ldots,e_n) \in \Ff(\cE_C)$, then $e_i \in \Ff(\cE_i)$. \\
   Finally, the register values. We need to show that $\sem{a}_{\cE_C} = \sem{a}_{\cE_i}$ for all registers $a \in R(\tid_i)$. Recall that  $\sem{a}_{\cE_C}$ is calculated via $\priorBar_a$. As $a$ is local to $\tid_i$, an ordering with respect to a $\barrx{a}{\cdot}$-labelled event has to come from the event structure $\cE_i$. Hence, the same value is calculated in $\cE_C$ and $\cE_i$. 
\end{proof}

\noindent These two propositions next help us to establish our first main result which is the soundness of the rule of parallel composition. 

\soundness*
\begin{proof}  
Note first that by Proposition~\ref{prop:existence} such states $\langle ts_i, M\rangle$ always exist for interference free configurations. 

For the proof, we basically show the existence of an execution of this form: 
\[ \langle \vec{T}_0, M_0 \rangle \underbrace{\longrightarrow^*}_{\text{all promises}} \langle \vec{T}_0^p, M \rangle \underbrace{\longrightarrow^*}_{\text{the statements}} \langle \vec{T}, M \rangle \] 
First note, that \cref{lemma:promises-first} tells us that such a trace which first executes all promises and then the remaining operations exists. We first show  $\langle \vec{T}_0,M_0 \rangle \trans {}^* \langle \vec{T}_0^p,M \rangle$. 
For this, we let $\vec{T}_0^p$ be the thread pool coinciding with $\vec{T}_0$ except for the promise sets which are $ts_i.prom = M_{\tid_i}$, $i \in \{1, \ldots, n\}$. Assume $\#M = m$, $M(j) = \langle x:=\val\rangle_\tid$ and let $op_j=prm_\tid(x,\val)$, $1 \leq j \leq m$.  Then 
\[ \langle \vec{T}_0,M_0 \rangle \trans{op_1 \ldots op_m} \langle \vec{T}_0^p,M \rangle \] 
(promises are made in the order as given by $M$) and all these steps are certified as the promises will finally be empty. 

Next, we consider the steps of statements. For this, first let $ts_i^p$ be $\vec{T}_0^p(\tid_i).tstate$ be the state of thread $\tid_i$ after having made the promises. Let $op_1^i \ldots op_{k_i}^i$, $i\in \{1, \ldots,n\}$, and $op_j^i \neq \prm{\tid_i}{\cdot}{\cdot}$, be the sequence of non-promise operations such that $S_i \trans {op_1^i \ldots op_{k_i}^i}_{\tid_i} \skipst$, $i\in \{1,\ldots, n\}$.  
From  $\langle ts_i,M \rangle \in \sem{\cE_i}$ and Prop.~\ref{prop:localize}, we know that $\langle ts_i,M \rangle \in \sem{\cE_i}$. 
By soundness of local proof rules, we get the existence of thread states $ts_i^0, \ldots, ts_i^{k_i}$ for all threads such that 
\[ \langle ts_i^p, M \rangle = \langle ts_i^0,M \rangle \trans {op_1^i}_{\tid_i} \ldots \trans {op_{k_i}^i}_{\tid_i} \langle ts_i^{k_i}, M \rangle = \langle ts_i,M \rangle \] 
These steps now have to be lifted to program steps of the parallel program.  We construct a sequence of thread pools $\vec{T}_1^0, \ldots, \vec{T}_1^{k_1}$, $\ldots$, $\vec{T}_n^1, \ldots, \vec{T}_n^{k_n}$ sequentially executing all thread programs by letting (1) $\vec{T}_\ell^j(\tid_i)=(S_i,ts_i^p)$ for $\ell < i$, (2) $\vec{T}_\ell^j(\tid_i)=(\skipst,ts_i)$ for $\ell > i$, and (3) $\vec{T}_\ell^j(\tid_i)=(S_i',ts_i^j)$ for $\ell = i$ for some $S_i'$ such that $S_i \trans {op_i^1 \ldots op_j^i}_{\tid_i} S_i'$. 
Note that $\vec{T}_0^p = \vec{T}_1^0$ and $\vec{T} = \vec{T_n}^{k_n}$. As $ts_i.prom = \emptyset$, all program steps are certified and we get 
\[ \vec{T}_1^0 \trans {op_1^1 \ldots op_{k_1}^1}_{\tid_1} \vec{T}_1^{k_1} \trans {op_1^2 \ldots op_{k_2}^2}_{\tid_2} \ldots  \trans {op_1^n \ldots op_{k_n}^n}_{\tid_n} \vec{T}_n^{k_n} \]
Hence also $\langle \vec{T}_0^p,M \rangle \trans {}^* \langle \vec{T},M \rangle$. 
\end{proof}

\subsection{Completeness}

Finally, we prove completeness of our proof calculus. We must prove that for every trace of a program we can construct local event structures, whereby the parallel composition of the local event structures contains an interference-free configuration. We construct these local event structures by induction following the execution trace. 

\completeness*

For the proof, we need a number of well-formedness conditions on the event structures our proof rules construct. First of all, we slightly extend the action labels of events: instead of $\prm{\tau}{x}{\val}$ we use $\prmts{\tau}{t}{x}{\val}$ to denote the timestamp $t$ of the read value in memory,  and we similarly use $\ffts{\tau}{t}{x}{\val}$. 
In a trace $tr=\langle \vec{T}_0, M_0 \rangle \trans {op_1} \ldots \trans {op_2} \langle \vec{T}, M \rangle$ we define $\mathbb{T}_{rd}(tr,\tau)$ to be the set of timestamps that thread $\tau$ {\em reads} from,  
similarly  $\mathbb{T}_{\mathit{ff}}(tr,\tau)$ for the timestamps of the fulfills. 
With these extensions at hand, we say that an event structure $\cE =(E,\causal,\#,\lres,\lfun)$ of thread $\tau$ is {\em timestamp closed} if the following holds:   $\forall t,t' \in \mathbb{T}_{rd}(tr,\tau), e \in E$, if   $\lfun(e) = \prmts{\cdot}{t}{\cdot}{\cdot}$ and $t' < t$, $t' \neq 0$, then there exists an $e' \in E$ with $\lfun(e') \in \{\prmts{\cdot}{t'}{\cdot}{\cdot}, \ffts{\tau}{t'}{\cdot}{\cdot}\}$;  
$\cE$ is {\em timestamp ordered} if $\forall t,t' \in \mathbb{T}_{rd}(tr,\tau) \cup \mathbb{T}_{\mathit{ff}}(tr,\tau)$ we have: if $t'<t$, then $t'$-labelled events cannot be flow-after $t$ labelled events;
$\cE$ contains a {\em modification order} for all locations $x$ if for all $x \in \Loc$, $\causal^+|_{\Act^x \times \Act^x}$ is a total order.  
Finally, we need some conditions on the read events occuring in the event structures. Recall that the rule \textsc{PR-ReadNew} may introduce a number of read events (ordered in a chain) of which only the last in the chain is used for reading at that time. This reading is marked by attaching a bar event flow after the read. We define a read event $e$ in $\cE$ to be {\em unbarred}, $ubr_\cE(e)$, if $\lfun(e) = \prm{\cdot}{x}{\cdot} \wedge \neg \exists e'$ s.t.~$\lfun(e') = \barrx{\cdot}{x} \wedge e \causal_\cE e'$. We say that all {\em read chains end in a bar} in $\cE$ if the following holds: 
$\forall e \in E: ubr_\cE(e) \imp \exists m, \exists e_1, \ldots, e_m \in \Pr(\cE), e_1 = e, e_i \causal_\cE e_{i+1} \wedge \neg ubr(e_m)$. 

\begin{proof}
Assume $tr: \langle \vec{T_0}, M_0 \rangle \trans {op_1} \langle \vec{T_1}, M_1 \rangle \ldots \trans {op_m}  \langle \vec{T_m}, M_m \rangle = \langle \vec{T},M \rangle$ is the trace such that all promises are done at the beginning, and let $op_j$ be the last such promise operation leading to $\langle \vec{T}_j, M_j \rangle$. Note first that $M_j = M_{j+1} = \ldots = M_m = M$ (only promises change the memory). We now inductively construct the event structures $\cE_k^i$ of every thread $k$, $1 \leq k \leq n$ (and by this the proof outlines) for all indizes $i=j$ to $m$. 
In every step, the event structures $\cE_k^i$ are (i) timestamp closed, (ii) timestamp ordered,   (iii) contain a modification order,  (iv) all read chains end in a bar and moreover (v) $\langle \vec{T}_i(\tau_k), M \rangle \in \sem{\cE_k^i}$ using the mapping $\psi: \Ff(\cE_k^i) \cup \Pr(\cE_k^i) \cup \{e_\inikw\} \rightarrow dom(M)$ to be 
\begin{align*}
   e_\inikw & \mapsto 0 \\
   \prmts{\cdot}{t}{\cdot}{\cdot} & \mapsto t \\
   \ffts{\cdot}{t}{\cdot}{\cdot} & \mapsto t 
\end{align*}

\begin{description}
  \item[Induction Start] $i=j$: $\cE_k^i = \Ini$. As the event structure $\Ini$ contains just a single event, it is timestamp closed, timestamp ordered and contains a modification order. Moreover, $\langle \vec{T}_j(\tau_1), M \rangle \in \sem{\Ini}$ because the event $e_\inikw$ gets mapped to 0 by $\psi$, and hence all views are 0, and the promise set contains all entries in $M$ of $\tau_1$. 
  \item[Induction Step] Assume $\cE_k^i$ with properties (i) to (iv) has been constructed. We now have different cases depending on the next operation $op_{i+1}$. Let $\vec{T}_i(\tau_k) = (S^i,ts^i)$. 
  \begin{enumerate}
      \item $op_{i+1}$ is not an operation of $\tau_k$:  \\
          Then $\cE_k^{i+1} = \cE_k^i$. As the event structure is not changing and the state of thread $\tau_k$ is not changing, all properties are trivially preserved. 
      \item $op_{i+1} = \kwc{dmb}$ is an operation of thread $\tau_k$:  \\
         Then $\cE_k^{i+1} = \cE_k^i \oplus \fence_{\tau_k}$.  
         As the new event is neither a read nor a fulfill (which would have timestamps), event structure $\cE_k^{i+1}$ is still timestamp closed, timestamp ordered, contains a modification order and all read chains end in bar. As for the views: there are two views affected by the fence operation, $v_{read}$ and $v_{wNew}$. We only consider these here. First,  $v_{read}$. By the operational semantics, we know that $ts^{i+1}.v_{read} = ts^i.v_{read} \sqcup ts^i.v_{wOld}$. We need to show that this fits to the semantics of $\cE_k^{i+1}$ given the semantics of $\cE_k^i$ and operation $\oplus$.
         \begin{small}
         \begin{eqnarray*} 
            ts^i.v_{read} \sqcup ts^i.v_{wOld} & = & \bigsqcup_{e \in
            \big(\priorFnc_{\tau_k}(\cE_k^i) \cap \Ff_\tid(\cE_k^i)\big) \cup \priorBar_{\tau_k}(\cE_k^i)} \psi(e) \sqcup \bigsqcup_{e \in \Ff_\tau(\cE_k^i)} \psi(e)  \\
                   & = & \bigsqcup_{e \in \Act_{\tau_k}(\cE_k^i)} \psi(e) \\
                   & = & \bigsqcup_{e \in \Act_{\tau_k}(\cE_k^i) \cup \priorBar_{\tau_k}(\cE_k^i)} \psi(e) \\
                   & = & \bigsqcup_{e \in
            \big(\priorFnc_{\tau_k}(\cE_k^{i+1}) \cap \Ff_\tid(\cE_k^{i+1})\big) \cup \priorBar_{\tau_k}(\cE_k^{i+1})} \psi(e)
         \end{eqnarray*}
       \end{small}

         The last term corresponds to $ts^{i+1}.v_{read}$ when calculated from $\cE_1^{i+1}$. For $v_{wNew}$ we get almost the same reasoning as the new fence event is placed flow after all but test events.  
      \item $op_{i+1} = \ff{\tau_k}{x}{\val}$ is an operation of thread $\tau_k$:    \\
        By the operational semantics, this transition results from a statement $\store{x}{a}$ for some register $a$ such that $ts^i.regs(r) = \val@v_a$. Let $t$ be the timestamp of the promise in $M$ being fulfilled, i.e. $t \in ts^i.prom$.  
        Since $\langle \vec{T}_i(\tau_k), M_i \rangle \in \sem{\cE_k^i}$ (by induction hypothesis), we get $\sem{a}_{\cE_k^i} = \val$. Hence proof rule \textsc{Write} is applicable and we let $\cE_k^{i+1} = \cE_k^i \oplus^a \ffts{\tau_k}{t}{x}{\val}$. We show properties (i) to (v). By induction hypothesis, $\cE_k^{i+1}$ is timestamp closed as the additional event is not a read event. It is timestamp ordered as the new event is added flow after all events $e'$ with $\lfun(e') \in \Act^x$, and we know by the operational semantics that $ts^i.coh(x) < t$, and by $\sem{\cE_k^i}$ that $ts^i.coh(x)$ is the maximum of all fulfills and reads to $x$. It moreover maintains the existence of a modification order for $x$ due to the same reasons. All read chains end in bar because no new reads are added. \\
        Next to the state after the fulfill: the operational semantics changes the view $coh(x)$ and $v_{wOld}$ as well as $prom$ and keeps the rest. We need to show that this after state is in $\sem{\cE_k^{i+1}}$. First, $coh(x) = t$ (by operational semantics) and this would be derived from $\cE_k^{i+1}$ as the new event $\ffts{\tau_k}{t}{x}{\val}$ is maximal on all $x$-labelled events. 
        Also, since this event is now in the event structure, $t$ is not in $prom$ anymore. Finally, $v_{wOld}$ is derived from the event structure as the maximum of all timestamps of fulfills. This includes $t$ and hence $v_{wOld} := v_{wOld} \sqcup t$ holds. For the remaining views: $v_C$ is unchanged because no new test events are added; $v_{wNew}$ is unchanged because no fence is added; all $v_a$'s are kept as no bar events are added and $v_{read}$ is unchanged as neither fences nor bars are added.
    \item $op_{i+1}=lst(a,exp)$ is an operation of thread $\tau_k$:  \\ 
        Then $\cE_k^{i+1} = \cE_k^{i} \oplus \barrx{a}{exp}$. As the new event is neither a read nor a fulfill, $\cE_k^{i+1}$ is still timestamp closed, timestamp ordered, contains a modification order and all read chains end in bar. Further $\sem{exp}_\cE=\sem{exp}_{ts^{i+1}.regs} @ v$ for some view $v$ and $ts^{i+1}.v_a = \bigsqcup_{e\in \priorBar_a(\cE_k^{i+1})} \psi(e)$. Both equations follow directly from the construction in the proof of Proposition~\ref{prop:register}.  
    \item $op_{i+1}=asm(\bexp)$ is an operation of thread $\tau_k$: \\  
        Since $\langle \vec{T}_i(\tau_k), M_i \rangle \in \sem{\cE_k^i}$ and $\sem{\bexp}_{ts^i.regs} = \true @ v$, for some view $v$, we get $\sem{\bexp}_{\cE_k^i} = \true$. Hence proof rule \textsc{Assume} is applicable and we let $\cE_k^{i+1} = \cE_k^i \oplus \test{\tau_k}{\bexp}$. As the new event is neither a read nor a fulfill, $\cE_k^{i+1}$ is still timestamp closed, timestamp ordered, contains a modification order and all read chains end in bar. Now we only need to show that $ts^{i+1}.v_C=\bigsqcup_{e\in \priorTest_{\tau_k}(\cE_k^{i+1})} \psi(e)$ which follows directly from the construction in the proof of Proposition~\ref{prop:assume}. 

    \item $op_{i+1} = rd(x,\val)$ is a transition of thread $\tau_k$ executing instruction $\load{a}{x}$ thereby reading from timestamp $t$: \\ 
    Let $T=\{ t_1, \ldots, t_q \}$ be the set of timestamps that thread $\tau_k$ will read in $tr$ {\em after and including} $op_{i+1}$
    for which there is furthermore no timestamp in $\mathbb{T}_{\mathit{ff}}(tr,\tau_k)$.
    Let $T_{\leq t} = \{ t' \in T \mid t' \leq t \}$ and let $t^1 < t^2 < \ldots < t^p$ be an ordering of that set. Note that $t^p = t$. 
      By timestamp closedness, there is some index $\ell$, $0 \leq \ell \leq p$, such that all timestamps $t' \in \mathbb{T}_{rd}(tr,\tau_k)$ with $t' \leq t^\ell$ have read or fulfill events in $\cE_k^i$. 
       Two cases need to be considered: \\
      If $\ell=p$, i.e.~read or fulfill events for all $T_{\leq t}$ are in $\cE_k^i$, we apply rule \textsc{PR-ReadEx}.  
       We let $e' = last_{\Act^x}(\cE_k^i)$. By existence of a modification order, $e'$ is uniquely defined. We need to show that the timestamp of $e'$ is $t$. Assume the contrary, i.e.~$e'$ has a timestamp $t' \neq t$. Let $e$ be the event in $\cE_k^i$ with timestamp $t$. $e$ is also an event on location $x$, i.e.~by modification order and $e' \in last_{\Act^x}(\cE_k^i)$, we have to have $e \causal^+ e'$. By timestamp ordering, we thus have $t < t'$. By the operational semantics, we get $t' > ts^i.v_{read} \sqcup ts^i.coh(x)$. As $\langle ts^i,M \rangle \in \sem{\cE_k^i}$, $e'$ has to be an unbarred read (because in case of a fulfill, $ts^i.coh(x) \geq t'$; in case of a barred read $ts^i.v_{read} \geq t'$). 
       However, in $\cE_k^i$ all read chains end in a bar, hence $ts^i.v_{read}$ is greater or equal the timestamp of the last event in the chain, and hence greater or equal $t'$. Contradiction. Hence the timestamp of $last_{\Act^x}(\cE_k^i)$ is $t$.

       We now let $e = last_{\Act^x}(\cE_k^i)$ to match the notation of the rule. 
      As $\langle ts^i,M \rangle \in \sem{\cE_k^i}$ using the specific $\psi$ and $M(t)$ being $\langle x:=\val \rangle _\tau$, $e$ is labelled 
       $\prmts{\tau}{t}{x}{\val}$, $\ffts{\tau}{t}{x}{\val}$ or $\inikw$ (with $\val=0$). We then let $\cE_k^{i+1} = rstr_e^x(\cE_k^i \oplus \barrx{a}{x})$.  
       We need to show this new event structure to satisfy all properties. First, as we neither add new read nor new fulfill events, $\cE_k^{i+1}$ is still timestamp closed, timestamp ordered and contains a modification order. As we only add additional bar events, all read chains continue to end in a bar. 
       We need to show $\langle ts^{i+1}, M \rangle \in \sem{\cE_k^{i+1}}$. The reasoning for promises and views follows that used in the proof of Proposition~\ref{prop:readex}. Finally, we need to take a look at the new location restriction in $\cE_k^{i+1}$. By the operational semantics, we get 
       \[ \forall t', t < t' \leq ts^i:v_{read} \sqcup ts^i.coh(x) \imp M(t') \neq x \]
       By $\langle ts^i, M \rangle \in \sem{\cE_k^i}$, we thus get 
       \[ \forall t', t < t' \leq \bigsqcup_{f \in (\priorFnc_\tau(\cE^i_k) \cap \Ff_\tau(\cE^i_k)) \cup \priorBar_\tau(\cE^i_k) \cup \Ff_\tau^x(\cE_K^i)} \psi(f) \quad \imp M(t') \neq x\] 
       Hence $M$ satisfies the additional memory constraint placed by $restr_e^x(\cE_k^i \oplus \barrx{a}{x})$. 
       
       Second case. If $\ell < p$, we apply rule \textsc{PR-ReadNew} and let $\cE'$ be the event structure with $E' = \{e^{\ell+1}, \ldots, e^p\}$, $ e^{\ell + 1} \causal' \ldots \causal' e^p$, $\lfun'(e^m) = \prmts{\tau}{t^m}{x}{u}$ such that $M(t^m) = \langle x:=u\rangle_\tau$. By construction, $\cE'$ is sequential, $\lfun'(E') \subseteq \Act^{Rd}$, $last_{\Act^x}(\cE') = e$ is an event labelled $\prmts{\tau}{t}{x}{\val}$ and hence rule \textsc{PR-ReadNew} is applicable. We now let $\cE_k^{i+1} = (\cE_k^i \oplus \cE') \oplus \barrx{a}{x}$. 
       Next to the properties. In $\cE_k^{i+1}$ all read chains end in a bar (as the additional read chain ends in $e$ followed by $\barrx{a}{x}$; $\cE_k^{i+1}$ contains a modification order (as $\cE'$ is sequential and hence contains a modification order, $\cE_k^i$ contains a modification order and $\cE_k^i \oplus \cE'$ orders all events on same locations. $\cE_k^{i+1}$ is timestamp ordered: $\cE'$ is timestamp ordered by construction, $\cE_k^i$ is timestamp ordered by induction, now assume that here is some event $f \in \cE_k^i$ and $g \in \cE'$ with timestamps $t^f$ and $t^g$ such that $f \causal g$ but $t^g > t^f$. By $\oplus $ on event structures, there is some location $y$ such that $f$ is a fulfill on $y$ and $g$ a read of $y$. As $\langle ts^i,M \rangle \in \sem{\cE_k^i}$, $ts^i.coh(y) \geq t^f$. However, then the trace $tr$ cannot later have a read of $y$ from a timestamp earlier than $t^f$ (by view monotonicity and operational semantics). Hence $\cE_k^{i+1}$ is timestamp ordered. Timestamp closedness follows by construction of $\cE'$ which assumes there are already events in $\cE_k^i$ for all timestamps $t' \leq t^\ell$.  
       
       It remains to be shown that $\langle ts^{i+1}, M \rangle \in \sem{\cE_k^{i+1}}$. 
      Note first that $t$ is larger than all timestamps of all events in $\cE_k^i$. As $\langle ts^i, M \rangle \in \sem{\cE_k^i}$, we thus have 
      \[ ts^i.v_{read} < t \wedge ts^i.v_a < t \wedge \forall y \in \Loc: ts^i.coh(y) < t \]
      Hence $ts^{i+1}.v_{read} = ts^{i+1}.v_a = ts^{i+1}.coh(x) = t$. This is consistent with $\cE_k^{i+1}$ because $\cE_k^{i+1}$ has a bar event directly flow after the maximal read event $e^p$ in $\cE'$ which has timestamp $t$. Views $v_C, v_{wOld}$, $v_{wNew}$, $coh(y)$ for $y \neq x$ do not change as neither new test nor fulfill nor fence nor $\barrx{\cdot}{y}$ events are added. Finally, $ts^{i+1}.prom = ts^i.prom$ and this still matches $\cE_k^{i+1}$ as no fulfill events are added.  
  \end{enumerate}

  We next let $\cE_k=\cE_k^m$, $1 \leq k \leq n$. We need to show that $\cE_1$ to $\cE_n$ are synchronisable and that there exists some interference free configuration $C \in (\cE_1 || \ldots || \cE_n)$. \\
  $\cE_1$ to $\cE_n$ are synchronisable because the read events in the event structure correspond to the read operations occurring in the execution, they read from memory and every entry in memory has a fulfill operation in the execution (as all promise sets of threads are empty at the end) and hence a fulfill event in the event structures. \\
  We construct the configuration $C$, i.e.~a set of events. Note that we here mix events and their labellings. 
  \begin{enumerate}
      \item $(e_\inikw, \ldots, e_\inikw) \in E_C$, 
      \item Let $t \in dom(M) \setminus \{0\}$. 
      For every $t$ we construct a single event $e_t = (e_1,\ldots,e_n)$ as follows: \\
      (i) $e_k = *$ if $t \notin \mathbb{T}_{rd}(\tau_k) \cup \mathbb{T}_{\mathit{ff}}(\tau_k)$. \\
      (ii) $e_k = \ffts{\tau_k}{t}{x}{\val}$ if $t \in \mathbb{T}_{\mathit{ff}}(\tau_k)$. \\
      (iii) $e_k = \prmts{\tau_i}{t}{x}{\val}$ if $t \in  \mathbb{T}_{\mathit{rd}}(\tau_k)$ and $t \in \mathbb{T}_{\mathit{ff}}(\tau_i)$ else. 
      \item for every $e_k \in \cE_k$ with $\lfun_k(e_k) \in \{\fence_{\tau_k}, \test{\tau_k}{\cdot}, \barrx{a}{\cdot} \mid a \in R(\tau_k)\}$, we add one event $(*, \ldots, e_k, \ldots, *)$ to $C$.  
  \end{enumerate}
  We first show that $C \in \conf(\cE_1 || \ldots || \cE_k)$, i.e.~it is a configuration. 
  \begin{itemize}
      \item $C$ is cycle-free: \\
         By construction, all $\cE_k$ are cycle-free. So, cycles may only arise because of synchronization. Assume there are timestamps $t_1, t_2$ and events $(d_1,\ldots, d_n)$ (labelled with $t_1$) and $(e_1,\ldots, e_n)$ (labelled with $t_2$) such that $(d_1,\ldots, d_n) \causal^+ (e_1,\ldots, e_n)$ and $(e_1,\ldots, e_n) \causal^+ (d_1,\ldots, d_n)$. This ordering has to come from some $\cE_k$, hence contradicts timestamp orderedness of all $\cE_k$. 
      \item $C$ is conflict free: \\
        This holds because all $\cE_k$ are conflict-free and no event gets paired with two or more different events. 
      \item $C$ is left-closed up to conflicts. Let $(d_1,\ldots,d_n), (e_1,\ldots,e_n) \in E_{\cE_1 || \ldots || \cE_n}$ such that we have $(d_1,\ldots, d_n) \causal (e_1,\ldots,e_n)$ and $(e_1,\ldots, e_n) \in C$, however, $(d_1,\ldots, d_n) \notin C$. Assume w.l.o.g.~that the flow is coming from $d_k \causal_k e_k$. By construction of $C$, the event $(d_1,\ldots, d_n)$ has to have either (1) $\lfun_k(d_k) = \prmts{\tau_i}{t}{\cdot}{\cdot}$ and there exists some $j$ such that $d_j = *$ but $t \in \mathbb{T}_{rd}(\tau_j)$ or $d_i = *$ or (2) $\lfun_k(d_k) = \ffts{\tau_k}{t}{\cdot}{\cdot}$ and exists some $j$ such that $d_j = *$ but $t \in \mathbb{T}_{rd}(\tau_j)$ (otherwise the construction of $C$ would have included the event). In both cases $e_t \# (d_1, \ldots,d_n)$ and $e_t \causal (e_1, \ldots, e_n)$ guaranteeing left-closedness up to conflicts.

  \end{itemize}
 Next, we look at interference freedom. Note that by construction $C$ is thread-covering. It contains no unsynchronised reads as the read events in the trace need to read from a promise in memory, hence there has to exist a fulfill. 
 For the ordering $\prec$ we now take the ordering as induced by the timestamps on events. 
 By timestamp ordering we then get that $\causal_C^* \cap (\Act^x(E_C) \times \Act^x(E_C)) \subseteq \prec$. Note furthermore that by construction $\langle \vec{T}(\tau_k), M \rangle \in \sem{\cE_k}$, $k\in \{1,\ldots,n\}$, using the mapping $\psi$ as detailed above. Hence, the location restrictions are also met. 
\end{description}

\end{proof}


 \end{document}